\newcommand\smallO{
  \mathchoice
    {{\scriptstyle\mathcal{O}}}
    {{\scriptstyle\mathcal{O}}}
    {{\scriptscriptstyle\mathcal{O}}}
    {\scalebox{.7}{$\scriptscriptstyle\mathcal{O}$}}
  }
\def\d{\mathrm{d}}
\def\laweq{\buildrel \d \over =}
\def\lawto{\buildrel \d \over \longrightarrow}
\newcommand{\var}{\mathrm{Var}}
\newcommand{\cov}{\mathrm{Cov}}
\renewcommand{\O}{\mathcal{O}}
\newcommand{\VaR}{\mathrm{VaR}}
\newcommand{\ex}{\mathrm{ex}}
\newcommand{\ES}{\mathrm{ES}}
\newcommand{\DQVaR}{\mathrm{DQ}^{\VaR}_\alpha}
\newcommand{\DQrho}{\mathrm{DQ}^{\rho}_\alpha}
\newcommand{\DQES}{\mathrm{DQ}^{\ES}_\alpha}
\newcommand{\DQex}{\mathrm{DQ}^{\ex}_\alpha}
\newcommand{\wDQVaR}{\widehat{\mathrm{DQ}^{\VaR}_\alpha}}
\newcommand{\wDQrho}{\widehat{\mathrm{DQ}^{\rho}_\alpha}}
\newcommand{\wDQES}{\widehat{\mathrm{DQ}^{\ES}_\alpha}}
\newcommand{\wDQex}{\widehat{\mathrm{DQ}^{\ex}_\alpha}}
\newcommand{\wF}{\widehat{F}}
\newcommand{\wG}{\widehat{G}}
\newcommand{\E}{\mathbb{E}}
\newcommand{\R}{\mathbb{R}}
\newcommand{\bX}{\mathbf{X}}
\renewcommand{\k}{{(k)}}
\newcommand{\N}{\mathbb{N}}
\newcommand{\p}{\mathbb{P}}
\newcommand{\id}{\mathds{1}}
\newcommand{\X}{\mathcal X}
\newcommand{\esssup}{\mathrm{ess\mbox{-}sup}}
\renewcommand{\ge}{\geqslant}
\renewcommand{\le}{\leqslant}
\renewcommand{\geq}{\geqslant}
\renewcommand{\leq}{\leqslant}
\renewcommand{\epsilon}{\varepsilon}
\theoremstyle{plain}
\newtheorem{theorem}{Theorem}
\newtheorem{lemma}{Lemma}
\newtheorem{proposition}{Proposition}
\theoremstyle{definition}
\newtheorem{definition}{Definition}
\newtheorem{assumption}{Assumption}
\theoremstyle{remark}
\newtheorem{remark}{Remark}
\theoremstyle{definition}
\renewcommand{\cite}{\citet}
\renewcommand{\cdots}{\dots}
\DeclareMathOperator*{\argmin}{arg\,min}
\begin{document}
\title{Empirical estimator of  diversification quotient}

\author{
Xia Han\thanks{School of Mathematical Sciences,  LPMC and AAIS,  Nankai University,  China.
			\texttt{xiahan@nankai.edu.cn}}
\and Liyuan Lin\thanks{Department of Econometrics and Business Statistics, Monash University,  Australia. \texttt{liyuan.lin@monash.edu}}
\and 	Mengshi Zhao\thanks{(Corresponding author) School of Mathematical Sciences, Nankai University, China. \texttt{2120230079@mail.nankai.edu.cn}}
}
\date{\today}

\maketitle

\begin{abstract}

The diversification quotient (DQ), introduced by \cite{HLW25}, is a new measure of portfolio diversification that quantifies the reduction in a portfolio’s security level attributable to diversification. Built on a rigorous axiomatic framework, DQ effectively captures heavy tails and common shocks while improving efficiency in portfolio optimization. This paper investigates the consistency and asymptotic normality of empirical DQ estimators based on Value at Risk (VaR) and Expected Shortfall (ES), and provides explicit formulas for the asymptotic variance under both independent and dependent data. Compared with the classical diversification ratio of \cite{T07}, which has diverging asymptotic variance due to its lack of location invariance, the DQ estimators exhibit greater robustness across diverse distributional settings. We further verify their asymptotic properties under elliptical distributions through simulation, and construct confidence intervals for DQ estimates using AR-GARCH models with a residual-based bootstrap on real financial data. These results establish a solid statistical foundation for applying DQ in financial risk management and decision-making.
\medskip
\noindent

\textbf{Keywords:} Diversification quotient;  Consistency;  Asymptotic normality;  Value at Risk; Expected Shortfall
\end{abstract}

\section{Introduction}

  Diversification is a fundamental principle in portfolio management, designed to reduce risk by combining assets with different risk profiles. Several indices have been proposed to quantify diversification such as the Diversification Ratio (DR, e.g., \cite{T07}, \cite{BDI08}, \cite{EWW15}) and Diversification Benefit (DB, e.g., \cite{EFK09}, \cite{MFE15}), focusing on evaluate the change in capital requirements that arises from asset pooling. However, these measures face notable limitations: they ignore the effect of heavy tails and common shocks, can be easily manipulated through shifts in constants or scaling, and present difficulties when applied to credit risk estimation.
To address these limitations, the diversification quotient (DQ), recently introduced by \cite{HLW25}, provides a promising alternative. Defined through a parametric class of risk measures, DQ quantifies the reduction in a portfolio’s security level attributable to diversification. DQ is the first diversification index built on a rigorous axiomatization framework, characterized by six axioms: non-negativity, location invariance, scale invariance, rationality, normalization, and continuity. These axioms distinguish DQ from existing diversification measures, ensuring both robustness and reliability across a wide range of financial applications.
 
Value at Risk (VaR) and Expected Shortfall (ES) families are widely used in financial risk management, particularly within regulatory frameworks such as Basel III/IV (\cite{BASEL19}) and Solvency II (\cite{E11}). The DQ measures based on VaR and ES have shown strong advantages in capturing heavy tails and common shocks, as well as in improving portfolio optimization efficiency (\cite{HLW23, HLW25}). However, the lack of well-established statistical properties for DQ estimators has limited their empirical application in finance and economics. This paper seeks to overcome this barrier by discussing how to estimate the value of DQ.

Supported by the fact that DQ based on law-invariant risk measures inherits the property of law-invariance (Proposition \ref{prop:1}), we naturally study the empirical estimator of DQ, defined as the value of DQ computed from the empirical distribution of portfolio losses. Compared with parametric estimators, the empirical approach is model-free and thus less vulnerable to distributional misspecification due to the lack of extreme event data.
Therefore, the nonparametric method naturally captures fat tails and asymmetries present in financial returns.
 This feature is particularly important in risk management, as underestimating tail risks can lead to significant financial losses and suboptimal portfolio allocation. 
 These advantages have attracted growing attention in estimating VaR and ES. The earliest work on empirical VaR can be traced back to \cite{B66}, which established consistency and asymptotic normality of the empirical VaR estimator under i.i.d.~data. For Expected Shortfall (ES), \cite{S04} introduced a nonparametric kernel estimator and applied it to sensitivity analysis in portfolio allocation. Extensions to dependent data include \cite{CT05}, who explored nonparametric inference for VaR, and \cite{C08}, who proposed two nonparametric ES estimators for dependent losses—one based on the sample mean of excess losses beyond VaR, and another as its kernel-smoothed version. More recently, nonparametric estimation techniques have been extended to broader classes of risk measures induced by VaR and ES (\cite{APWY19}; \cite{BFWW22}; \cite{LW23}).
 Together, these developments provide a natural statistical foundation for constructing empirical DQ estimators based on VaR and ES, enabling robust and distribution-free assessment of portfolio diversification.


We first establish the consistency of the DQ estimator for VaR and for classes of convex risk measures (including ES) under mild continuity conditions, ensuring that the estimator converges to the true value as the empirical distribution converges (Theorem \ref{thm:1}). Building on this result, we derive the asymptotic normality of the empirical DQ estimators based on VaR and ES and explicitly compute their asymptotic variances in the i.i.d.~setting (Theorems \ref{thm:2} and \ref{thm:3}). These results provide a rigorous statistical foundation for evaluating the variability and reliability of DQ estimates, which is critical for portfolio managers when assessing diversification benefits and exposure to tail risk.


Since financial returns typically exhibit temporal dependence, we extend the asymptotic normality results to $\alpha$-mixing sequences. In this more general dependent setting, the estimators remain both consistent and asymptotically normal (Theorems \ref{thm:4} and \ref{thm:5}), ensuring robustness in the presence of market shocks, volatility clustering, and serial correlation. Consequently, DQ-based assessments of risk reduction and capital allocation remain reliable even in complex, dependent market environments. We further analyze the sensitivity of the asymptotic variance of DQ estimators to changes in confidence level, correlation structure, portfolio dimension, and tail heaviness (degrees of freedom) using elliptically distributed portfolios, including multivariate normal and t-distributions. Elliptical distributions provide a fundamental framework in quantitative risk management (\cite{MFE15}).

We also establish the asymptotic normality and derive the asymptotic variance of the DR based on VaR and ES (Proposition \ref{prop:5}). To the best of our knowledge, these theoretical results for DR have not been previously studied. It is important to note that DR lacks location invariance, which may lead to potential instability in practice. Specifically, under certain shifts in the location of the portfolio returns, the asymptotic variance of DR can diverge, making its estimates highly sensitive to changes in the underlying data. This sensitivity limits the reliability of DR in risk assessment and portfolio evaluation, particularly when extreme events or heavy-tailed distributions are present. In contrast, DQ is explicitly designed to be location-invariant, ensuring that its risk-adjusted diversification measure remains stable even when the portfolio experiences location shifts.  

The remainder of the paper is organized as follows. Section \ref{sec:2} reviews the definition of the DQ and introduces its empirical estimator. Section \ref{sec:3} establishes the consistency of the empirical DQ estimators for a general class of risk measures. In Section \ref{sec:4}, we investigate the asymptotic normality for the empirical estimators of DQ based on VaR and ES classes. Section \ref{sec:5} extends these results to settings with dependent data, while Section \ref{sec:6} develops the corresponding asymptotic theory for the DR. Section \ref{sec:7} provides numerical illustrations that demonstrate the behavior of DQ under elliptical distributional assumptions. In Section \ref{sec:real_data}, we apply our methodology to real financial data. Finally, Section \ref{sec:8} concludes the paper. Given the growing prominence of expectiles as a class of risk measures, Appendix \ref{App:A} is devoted to analyzing the asymptotic behavior of DQ based on expectiles. All technical proofs are collected in Appendix \ref{App:B}.


\section{Definitions and preliminary conclusions}
\label{sec:2}
 Throughout this paper, we work on an atomless probability space $(\Omega,\;\mathcal{F},\;\mathbb{P})$. The assumption for atomless space is widely used in statistics and risk management; for further details, see \cite{D02} and Section A.3 of \cite{FS16}. 
 Let $\X$ be a convex cone of random variables on $(\Omega,\;\mathcal{F},\;\mathbb{P})$ where each $X \in \X$ represents the loss of a financial product. Let $L^1$ be  the set of all integrable variables on $(\Omega,\;\mathcal{F},\;\mathbb{P})$. Write $X\sim F$  if the random variable $X$ has the distribution function $F$  under $\mathbb{P}$, and  $X \laweq  Y$ if two random variables $X$ and $Y$ have the same distribution. 
	We always write $\mathbf X=(X_1,\dots,X_n)$ as a random vector and $S=\sum_{i=1}^n X_i$ as the sum of $\mathbf X$.  
	Further, denote by $[n]=\{1,\dots,n\}$.  Terms such as increasing or decreasing functions are in the non-strict sense. 
 A risk measure $\phi$ is a mapping from $\X$ to $\R$.  We will assume law-invariance for risk measures throughout the work.
 \begin{enumerate}
     \item[{[LI]}] Law-invariance: $\rho(X)=\rho(Y)$ for any $X, Y \in \X$ such that $X\laweq Y$. 
 \end{enumerate}
Law-invariance ensures that the value of a risk measure depends solely on the distribution of random variables, allowing us to apply empirical estimators for evaluation.
We collect some other properties of risk measures below.
\begin{enumerate}
\item[{[M]}] Monotonicity:  $\phi(X)\leq\phi(Y)$ for all $X, Y \in \X$ with  $X\leq Y$.
\item[{[TI]}] Translation invariance: $\phi(X+m)=\phi(X)+m$ for all $X \in \X$ and   $m\in\mathbb{R}$. 
\item[{[CV]}] Convexity: 
$\phi\left(\gamma X+(1-\gamma)Y\right)\leq\gamma \phi(X)+(1-\gamma )\phi(Y)$ for all  $X, Y \in \X$ and  $\gamma\in [0,1]$.
\end{enumerate} 
A   risk measure satisfies [M], [TI] and [CV] is said to be \emph{convex} (see \cite{FS16}). In risk management, many commonly used risk measures belong to a parametric family, such as  Value at Risk (VaR) and Expected Shortfall (ES),  two of the most popular classes of risk measures in banking and insurance. The definitions of VaR and ES are provided below.
\begin{enumerate}[(i)]
\item The Value at Risk (VaR) at level $\alpha \in [0,1)$ is defined by
$$
\VaR_\alpha(X)=\inf\{x\in \R: \p(X\le x) \ge 1-\alpha\}~~~X\in L^0.
$$
\item The Expected Shortfall (ES, also called CVaR, TVaR or AVaR) at level $\alpha \in (0,1)$ is defined by
$$
	\ES_{\alpha}(X) = \frac 1 \alpha \int_{0}^\alpha \VaR_\beta(X) \d \beta,~~~X\in L^1,
	$$
    and $\ES_0(X)=\esssup(X):=\VaR_0(X)$.
\end{enumerate}
We use the ``small-$\alpha$" definition here following the convention in \cite{HLW25}.

\cite{HLW25} defined the diversification quotient  (DQ)  based on a class of parameteric risk measures as following. 


\begin{definition}[Diversification Quotient]\label{def:DQ}
Let $  \rho =(\rho_{\alpha})_{\alpha \in  I}$ be a class of risk measures indexed by $\alpha\in I=(0,\overline \alpha) $ with $\overline \alpha\in(0,\infty]$ such that $\rho_\alpha$ is decreasing in $\alpha$. 
  For $\mathbf X \in \X^n$, the \emph{diversification quotient}  based on the class $ \rho$  at level $\alpha\in I$ is defined  by
 \begin{equation}\label{eq:DQ}
 {\rm DQ}^{\rho}_\alpha(\mathbf X)=\frac{\alpha^*}{\alpha}, \mbox{~~
where }
 \alpha^*= \inf\left\{\beta \in I :  \rho_{\beta} \left(\sum_{i=1}^n X_i\right) \le \sum_{i=1}^n \rho_{\alpha}(X_i) \right\}, 
 \end{equation} with the convention $\inf(\varnothing)=\overline \alpha$.
\end{definition}
It is clear that $(\VaR_\alpha)_{\alpha \in (0,1)}$ and $(\ES_\alpha)_{\alpha \in (0,1)}$  are decreasing classes of risk measures. When employing these risk measures, we typically prefer a level $\alpha$ close to zero, such as 0.01 or 0.025 in \cite{BASEL19}. By taking $\rho$ as VaR or ES class,  we have the two diversification indices $\DQVaR$ and $\DQES$ both of which have nice properties in terms of risk management; see \cite{HLW23, HLW25}  for more discussion.  

We first show that DQ based on a class of law-invariant risk measures is also law-invariant.
\begin{proposition}\label{prop:1}
Let $\rho$ be a class of law-invariant risk measures. Then, we have $\DQrho(\mathbf X)=\DQrho(\mathbf Y)$ for any $\mathbf X, \mathbf Y \in \X^n$ such that $\mathbf X \laweq\mathbf Y$.
\end{proposition}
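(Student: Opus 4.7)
The plan is to unwrap the definition of $\DQrho$ and show that every ingredient in \eqref{eq:DQ} depends on $\mathbf X$ only through its joint distribution, so it is preserved under $\laweq$.

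First, I would note that $\mathbf X \laweq \mathbf Y$ means equality of joint distributions on $\R^n$, so for any Borel measurable function $g\colon \R^n \to \R$ we have $g(\mathbf X) \laweq g(\mathbf Y)$. Applied to the sum map, this gives $\sum_{i=1}^n X_i \laweq \sum_{i=1}^n Y_i$, and applied to the coordinate projections it gives $X_i \laweq Y_i$ for each $i \in [n]$. Then, by property [LI] of each member of the class $\rho = (\rho_\beta)_{\beta \in I}$, I obtain
\begin{equation*}
\rho_\beta\!\left(\sum_{i=1}^n X_i\right) = \rho_\beta\!\left(\sum_{i=1}^n Y_i\right) \quad \text{for all } \beta \in I, \qquad \rho_\alpha(X_i) = \rho_\alpha(Y_i) \quad \text{for all } i \in [n].
\end{equation*}

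Next, substituting these equalities into the set appearing in the infimum in \eqref{eq:DQ} shows that
\begin{equation*}
\left\{\beta \in I : \rho_\beta\!\left(\sum_{i=1}^n X_i\right) \le \sum_{i=1}^n \rho_\alpha(X_i)\right\} = \left\{\beta \in I : \rho_\beta\!\left(\sum_{i=1}^n Y_i\right) \le \sum_{i=1}^n \rho_\alpha(Y_i)\right\}.
\end{equation*}
Taking the infimum (with the convention $\inf \varnothing = \overline{\alpha}$) yields the same $\alpha^*$ for both vectors, whence dividing by $\alpha$ gives $\DQrho(\mathbf X) = \DQrho(\mathbf Y)$.

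There is no real obstacle here; the statement is essentially a bookkeeping exercise. The only minor point worth flagging is that one must be careful to invoke law-invariance not only on the univariate marginals (to handle the right-hand side of the inequality inside the infimum) but also on the aggregated loss $\sum_i X_i$ (to handle the left-hand side), and that both follow from the single assumption $\mathbf X \laweq \mathbf Y$ via the functional-image argument above.
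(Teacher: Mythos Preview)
Your proof is correct and follows essentially the same approach as the paper: both rely on the observation that $\mathbf X \laweq \mathbf Y$ implies $X_i \laweq Y_i$ for each $i$ and $\sum_i X_i \laweq \sum_i Y_i$, whence law-invariance of each $\rho_\beta$ forces the defining sets in \eqref{eq:DQ} to coincide. The paper's proof is simply a one-line statement of this fact, whereas you have spelled out the functional-image argument and the set equality explicitly.
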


For a class of law-invariant risk measures $\rho$, the value of DQ is determined by the joint distribution of the portfolio risk $\bX$. In this way, we can estimate the value of such DQ by the empirical distribution of the portfolio loss $\bX=(X_1, \dots, X_n)$.
Below, we simply write $\mathrm{DQ}_\alpha^\rho(F)=\mathrm{DQ}^\rho_\alpha(\bX)$ for $\bX \sim F$.

%

Let  $F$ be the joint distribution function of $\mathbf{X}$, $F_i$ be the  distribution of $X_i$ for $i\in [n]$, and $G$ be the distribution of $S=\sum_{i=1}^n X_i$. Suppose $\bX^{(1)}, \bX^{(2)}, \dots$ are i.i.d.~samples of $\bX$.  
For $N$ given data $\bX^{(1)}, \dots, \bX^{(N)}$ where $\bX^{(k)}=(X_1^{(k)}, \dots, X_n^{(k)})$ for $k \in [N]$, the empirical joint distribution is given by
 $$\widehat F^{N}(x_1, \ldots, x_n)=\frac{1}{N} \sum_{k=1}^N \id_{\{X^{(k)}_{1} \le x_1, \dots, X^{(k)}_{n} \leq x_n\}}, ~~~ (x_1,\dots, x_n) \in \mathbb{R}^n,$$
the empirical marginal distribution of $X_i$ is given by  $$\widehat F^{N}_i(x)=\frac{1}{N} \sum_{k=1}^N \id_{\{X^{(k)}_{i} \le x\}}, ~~~ x \in \mathbb{R},$$
and  the empirical distribution for $S$ is given by
$$\widehat G^{N}(x)=\frac{1}{N} \sum_{k=1}^N \id_{\{\sum_{i=1}^n X^{(k)}_{i} \le x\}}, ~~~ x \in \mathbb{R}.$$
By \citet[Theorem 1]{SW09},  if $\bX^{(1)}, \bX^{(2)}, \dots$ are i.i.d.~samples, the empirical joint distribution uniformly converges to the real joint distribution almost surely as $N \to \infty$; that is 
$$\sup_{\mathbf x \in \R^n}\vert\widehat F^{N}(\mathbf x)-F(\mathbf x)\vert \to 0 ~~~\p\mbox{-a.s.}$$
Similarly, the empirical  distributions $\widehat F_i^N$ for the marginal elements $X_i$, $i\in [n]$, and  $\widehat G^N$ for the sum $S$ are uniformly convergent to the real distributions $F_i$, $i\in [n]$,  and $G$.

For a law-invariant $\DQrho$  and a portfolio loss $\bX\in \X^n$, define the empirical estimator of $\DQrho(\bX)$ based on sample set $\{\bX^{(1)}, \dots, \bX^{(N)}\}$  by $$\wDQrho(N)(\omega)=\DQrho(\widehat F (\omega)), ~~~ \omega \in \Omega,$$ where $\widehat F (\omega)$ is the realization of $\wF$ on $\omega$.
 Note that $\widehat\DQrho(N)$ is a real-value random variable determined by sample set  $\{\bX^{(1)}, \dots, \bX^{(N)}\}$.
We will mainly focus on the statistic property of $\DQVaR$ and $\DQES$. 

\section{Consistency property of DQ estimator}\label{sec:3}
 
 To prepare the discussion of  consistency for DQ's empirical estimators, we first present some convergence properties for DQ with respect to distribution.
Let $\{\bX^{(k)}\}_{k \in \N}$ be a sequence of random vectors such that $\bX^{(k)} \lawto \bX$ as $k \to  \infty$. 
A classic result for law-invariant convex risk measures on $L^1$ space, including ES, is the continuity with respect to the $L^1$-norm (See \citet[Remark 1.1]{LCLW20}).  Together with the uniform integrability in the following assumption, we can obtain the continuity of law-invariant convex risk measures with respect to distribution; see \cite{HW24} for more discussion on uniform integrability and risk measures. 
\begin{assumption}\label{ass:UI}
All $\{X^{(k)}_i\}_{k \in \N}$, $i\in [n]$, and $\{\sum_{i=1}^n X^{(k)}_i\}_{k \in \N}$ are uniformly integrable.
\end{assumption}
Note that  VaR is not a convex risk measure. Weak convergence of VaR requires the continuity of the quantile function for the limit distribution. We make the following assumption to discuss the continuity for $\DQVaR$.
\begin{assumption}\label{ass:quantile}
The quantile function of  $X_i$ is continuous at $1-\alpha$ for each $i\in [n]$,  and $\sum_{i=1}^n X_i$ have continuous quantile function.
\end{assumption}

 Let $\widetilde \alpha=\sup\{\beta\in I: \rho_\beta(\sum_{i=1}^n X_i)\ge \sum_{i=1}^n \rho_\alpha(X_i)\}$ with the convention $\sup(\varnothing)=0$.
With the above assumptions, we can obtain the convergence properties of DQ based on a class of law-invariant convex risk measures or the VaR class.
\begin{theorem}\label{thm:1}
Let $\{\bX^{(k)}\}_{k \in \N}\subset \X^n$, $\bX \in \X^n$ such that $\bX^{(k)} \lawto \bX$ as $k \to \infty$. If either
\begin{enumerate}[(a)]
\item $\rho=(\rho_{\alpha})_{\alpha\in I}$ is a class of law-invariant convex risk measures and Assumption \ref{ass:UI} holds, or
\item $\rho=(\VaR_\alpha)_{\alpha \in I}$ with $I =(0,1)$ and Assumption \ref{ass:quantile} holds,
\end{enumerate}
then, for all $\alpha \in I$,
$$
\liminf_{k\to \infty}\DQrho(\mathbf X^{(k)})\ge \DQrho(\mathbf X),~~~\mbox{and }~~~\limsup_{k\to\infty}\DQrho(\bX^{(k)})\le \frac{\widetilde{\alpha}
}{\alpha}.$$
If  further
$\beta \mapsto \rho_\beta(\sum_{i=1}^n X_i)$ is a strictly decreasing function, then $\lim_{k \to \infty }\DQrho(\bX^{(k)}) = \DQrho(\bX)$ for all $\alpha \in I$.  
\end{theorem}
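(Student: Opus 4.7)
The plan is to reduce the convergence of $\DQrho(\bX^{(k)})$ to pointwise convergence, at selected levels $\beta$, of the auxiliary map $g_k(\beta):=\rho_\beta(\sum_{i=1}^n X_i^{(k)})$ and of the scalar $c_k:=\sum_{i=1}^n\rho_\alpha(X_i^{(k)})$, and then to close the argument by a squeeze on either side of $\alpha^*$ and $\widetilde\alpha$ via strict inequalities. Setting $g(\beta)=\rho_\beta(\sum_{i=1}^n X_i)$ and $c=\sum_{i=1}^n\rho_\alpha(X_i)$, the definitions read $\alpha^*=\inf\{\beta\in I: g(\beta)\le c\}$ and $\widetilde\alpha=\sup\{\beta\in I: g(\beta)\ge c\}$. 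Because $g$ is decreasing, I would first observe that any $\beta<\alpha^*$ satisfies $g(\beta)>c$, so $\beta\le\widetilde\alpha$, giving $\alpha^*\le\widetilde\alpha$; equality holds whenever $g$ is strictly decreasing, since then the set $\{g>c\}$ and $\{g<c\}$ meet at a single separating point. So the final statement will follow as soon as the liminf/limsup bounds are proved.

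The pointwise convergence $g_k(\beta)\to g(\beta)$ and $c_k\to c$ I would handle case by case. In case (a), the continuous mapping theorem applied to $\bX^{(k)}\lawto\bX$ yields $X_i^{(k)}\lawto X_i$ and $\sum_i X_i^{(k)}\lawto \sum_i X_i$; Assumption \ref{ass:UI} then upgrades these to convergence in Wasserstein-$1$ (equivalently, weak plus $L^1$-norm convergence), and the $L^1$-continuity of law-invariant convex risk measures (Remark 1.1 in \cite{LCLW20}) delivers $g_k(\beta)\to g(\beta)$ for every $\beta\in I$ and $c_k\to c$. In case (b), weak convergence of one-dimensional distributions implies pointwise convergence of the corresponding quantile functions at every continuity point of the limiting quantile: Assumption \ref{ass:quantile} provides this at $1-\alpha$ for each $X_i$ and at every level for $\sum_i X_i$, so again $g_k(\beta)\to g(\beta)$ on all of $I$ and $c_k\to c$.

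The squeeze then runs as follows. For the liminf, assume $\alpha^*>0$ (otherwise the bound is trivial since $\DQrho\ge 0$) and pick $\beta^-\in I$ with $\beta^-<\alpha^*$; by the definition of $\alpha^*$, one has the strict inequality $g(\beta^-)>c$, hence the pointwise convergence gives $g_k(\beta^-)>c_k$ for all large $k$, forcing $\alpha^*_k\ge\beta^-$. Letting $\beta^-\uparrow\alpha^*$ yields $\liminf_k\alpha^*_k\ge\alpha^*$. For the limsup, if $\widetilde\alpha<\overline\alpha$ I pick $\beta^+\in I$ with $\beta^+>\widetilde\alpha$; then $g(\beta^+)<c$ strictly, so $g_k(\beta^+)<c_k$ eventually and $\alpha^*_k\le\beta^+$, giving $\limsup_k\alpha^*_k\le\widetilde\alpha$ after $\beta^+\downarrow\widetilde\alpha$ (the case $\widetilde\alpha=\overline\alpha$ being automatic). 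Dividing by $\alpha$ produces the two stated inequalities, and when $g$ is strictly decreasing the preliminary observation gives $\alpha^*=\widetilde\alpha$, hence $\lim_k\DQrho(\bX^{(k)})=\DQrho(\bX)$.

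I expect the main obstacle to lie in the continuity step in case (a): weak convergence alone is inadequate for convex risk measures such as $\ES_\beta$, which can blow up in the limit, so Assumption \ref{ass:UI} must be leveraged carefully to promote weak convergence to Wasserstein-$1$ convergence of the marginals and of the sum before invoking $L^1$-continuity. The VaR case is milder since it rests only on the classical convergence-of-quantile-functions-at-continuity-points result, but it still requires Assumption \ref{ass:quantile} to hold at every $\beta$ one wishes to test, which is why the quantile function of $\sum_i X_i$ is assumed continuous on all of $I$ rather than at a single point.
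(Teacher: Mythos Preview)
Your proposal is correct and follows essentially the same approach as the paper: the paper sets $h^k(\beta)=\rho_\beta(\sum_i X_i^{(k)})-\sum_i\rho_\alpha(X_i^{(k)})$ and $h(\beta)=\rho_\beta(\sum_i X_i)-\sum_i\rho_\alpha(X_i)$, establishes the pointwise convergence $h^k(\beta)\to h(\beta)$ under (a) or (b), and then runs exactly your squeeze on $\beta^-<\alpha^*$ and $\beta^+>\widetilde\alpha$. Your treatment of the limsup bound is in fact marginally more direct, since you argue $\alpha^*_k\le\beta^+$ immediately from $g_k(\beta^+)<c_k$, whereas the paper routes through $\widetilde\alpha^k\le\beta^+$ and then uses $\alpha^k\le\widetilde\alpha^k$.
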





A direct proposition of Theorem \ref{thm:1} is the consistency of the empirical estimator for DQ, which we state below.

 \begin{proposition}
     \label{prop:consistency}
       \begin{enumerate}[(a)]
\item $\rho=(\rho_{\alpha})_{\alpha\in I}$ is a class of law-invariant convex risk measures  and Assumption \ref{ass:UI} holds, or
\item $\rho=(\VaR_\alpha)_{\alpha \in I}$ with $I=(0,1)$ and  Assumption \ref{ass:quantile} holds,
\end{enumerate}
       then
$$
\liminf_{N\to \infty}\wDQrho(N)\ge \DQrho(\mathbf X),~~~\mbox{and }~~~\limsup_{N\to\infty}\wDQrho(N)\le \frac{\widetilde{\alpha}
}{\alpha} ~~\p\mbox{-a.s.}$$
 for any $\alpha \in I$.  If further $\beta \mapsto \rho_\beta(S)$ is a strictly decreasing function, then  $\DQrho(N)$ is a strongly consistent empirical estimator for $\DQrho(\bX)$; that is, $\lim_{N \to \infty} \wDQrho(N) =\DQrho(\mathbf X)$ $\p$-a.s.
 \end{proposition}
It  the total loss $S$ is not degenerate, then the function   $\beta \mapsto \ES_\beta(S)$  is  strictly decreasing. 
By Proposition \ref{prop:consistency},  under  Assumption  \ref{ass:UI},  the empirical estimators $ \widehat{\DQES}(N) $  is strongly consistent. If $S$ is  constant, then $\wDQVaR(N)=0$ a.s. Hence,  if the portfolio $\bX$ satisfies Assumption \ref{ass:quantile} and the total loss $S$ has continuous and strictly decreasing quantile function, $\wDQVaR(N)$ is also a strongly consistent estimator.

  In the next section, we further explore the asymptotic normality of the empirical estimators for  DQs based on VaR and ES.

\section{Asymptotic normality on i.i.d.~data}\label{sec:4}

We first collect the alternative formulas  for $\DQVaR$ and  $\DQES$  from Theorem 4 of \cite{HLW25}.

\begin{proposition}\label{prop:3}
 For a given $\alpha\in(0,1)$, $\mathrm{DQ}^{\mathrm{VaR}}_{\alpha}$ have the alternative formula
\begin{equation}\label{eq:DQ_VaR}\mathrm{DQ}^{\mathrm{VaR}}_{\alpha}(\textbf{X})=\frac{1}{\alpha}\mathbb{P}\left(S>\sum_{i=1}^{n}\rm{VaR}_{\alpha}(X_i)\right).\end{equation}
For a given $\alpha \in (0,1)$, if $\p(\sum_{i=1}^n  X_i>\sum_{i=1}^n \ES_\alpha(X_i) )>0$, then 
$${\rm DQ}^{\ES}_\alpha(\mathbf X)= \frac{1}{\alpha}\min_{r\in (0,\infty)} \E\left[\left(r \sum_{i=1}^n (X_i-\ES_\alpha(X_i))+1\right)_+\right],$$
and otherwise ${\rm DQ}^{\ES}_\alpha(\mathbf X)=0.$
\end{proposition}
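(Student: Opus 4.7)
The plan is to work directly from Definition~\ref{def:DQ} and compute $\alpha^*$ in closed form for each of the two risk measure classes. Although the proposition is quoted from Theorem~4 of \cite{HLW25}, the two derivations are short and indicate how the general inf-formulation of DQ collapses to tractable expressions.

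For the VaR identity, I would invoke the standard equivalence
\[
\VaR_\beta(Y) \le t \iff \p(Y > t) \le \beta,
\]
which follows at once from the definition $\VaR_\beta(Y) = \inf\{x : \p(Y \le x) \ge 1-\beta\}$ together with right-continuity of the CDF. Applying it with $Y = S$ and $t = \sum_{i=1}^n \VaR_\alpha(X_i)$ identifies the feasible set in the definition of $\alpha^*$ as $\{\beta \in (0,1) : \p(S > t) \le \beta\}$, whose infimum equals $\p(S > t)$. Dividing by $\alpha$ yields \eqref{eq:DQ_VaR}.

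For the ES identity, the key tool is the Rockafellar--Uryasev representation
\[
\ES_\beta(S) \;=\; \min_{x \in \R}\left\{x + \frac{1}{\beta}\,\E[(S-x)_+]\right\}, \qquad \beta \in (0,1).
\]
With $t = \sum_{i=1}^n \ES_\alpha(X_i)$, the condition $\ES_\beta(S) \le t$ is equivalent to the existence of some $x \in \R$ with $x + \beta^{-1}\E[(S-x)_+] \le t$. When $\p(S > t) > 0$ this forces $x < t$, so setting $u = t - x > 0$ and rearranging gives $\beta \ge \E[(S-t+u)_+]/u$, whence
\[
\alpha^* \;=\; \inf_{u > 0} \frac{\E[(S-t+u)_+]}{u}.
\]
I would then substitute $r = 1/u \in (0,\infty)$ and use positive homogeneity of $(\cdot)_+$ to recognize the ratio as $\E[(r(S-t)+1)_+]$. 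The integrand is convex in $r$, tends to $1$ as $r \downarrow 0$, and diverges to $+\infty$ as $r \to \infty$ precisely because $\p(S > t) > 0$ forces $\E[(S-t)_+] > 0$; hence the infimum is attained at some $r^* \in (0,\infty)$ and becomes a minimum. Substituting $S - t = \sum_{i=1}^n(X_i - \ES_\alpha(X_i))$ and dividing by $\alpha$ delivers the claimed formula. The degenerate case $\p(S > t) = 0$ is handled directly: choosing $x = t$ in the Rockafellar--Uryasev representation gives $\ES_\beta(S) \le t$ for every $\beta \in (0,1)$, so $\alpha^* = 0$ and $\DQES(\bX) = 0$.

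The main delicacy is keeping careful track of strict versus weak inequalities in the two infima so that each matches the convention $\inf\varnothing = \overline\alpha$ of Definition~\ref{def:DQ}, and confirming that the ES infimum is genuinely attained on $(0,\infty)$ in the non-degenerate case. The latter is the only step that is not a direct algebraic manipulation; it is also where the hypothesis $\p(S > t) > 0$ is genuinely used.
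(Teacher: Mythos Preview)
Your argument is correct and self-contained, whereas the paper does not prove this proposition at all: it simply quotes the two formulas from Theorem~4 of \cite{HLW25}. Your route via the $\VaR$--probability equivalence and the Rockafellar--Uryasev representation is the standard way such identities are derived, so there is nothing in the paper to compare against.

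The one point you correctly flag as delicate---attainment of the ES infimum on $(0,\infty)$---does need a little more than the boundary behavior you list. A convex function on $(0,\infty)$ with limit $1$ at $0^+$ and limit $+\infty$ at $\infty$ could still be monotone increasing, with infimum $1$ approached but not attained. What closes the gap is the observation that $t=\sum_i\ES_\alpha(X_i)\ge\sum_i\E[X_i]=\E[S]$, with strict inequality because equality would force each $X_i$ to be a constant and hence $\p(S>t)=0$, contradicting the hypothesis. Then for $r>0$ small enough that $r(S-t)+1>0$ almost surely (or by dominated convergence otherwise), the right derivative at $0$ of $r\mapsto\E[(r(S-t)+1)_+]$ is $\E[S]-t<0$, so the function dips strictly below $1$ for small $r$; combined with convexity and divergence at $\infty$, this guarantees the infimum is attained in the interior.
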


Proposition \ref{prop:3} provides convenient formulas for computing the empirical estimators of $\DQVaR$ and $\DQES$.
Let $\bX^{(1)}, \dots, \bX^{(N)}$ be i.i.d.~samples of $\bX=(X_1, \dots, X_n)$ where $\bX^{(k)}=(X_1^{(k)}, \dots, X_n^{(k)})$ for $k \in [N]$. For $i\in [n]$, let $\widehat x^{\VaR_\alpha}_i$ and $\widehat x^{\ES_\alpha}_i$  be the empirical estimators for $\VaR_\alpha(X_i)$ and $\ES_\alpha(X_i)$.
For a given $\alpha\in(0,1)$, the empirical estimator for $\mathrm{DQ}^{\mathrm{VaR}}_{\alpha}$ is given by 
\begin{equation}
\widehat{\DQVaR}(N)=\frac{1}{N\alpha}{\sum_{k=1}^N \id_{\{\sum_{i=1}^n X^{(k)}_i> \sum_{i=1}^n\widehat x^{\VaR_\alpha}_i\}}}.
\end{equation}
For a given $\alpha \in (0,1)$, if there exists $k \in [N]$ such that $\sum_{i=1}^n  X^{(k)}_i>\sum_{i=1}^n \widehat x^{\ES_\alpha}_i$, then  the empirical estimator for $\DQES$ is given by 
\begin{equation}
\widehat{\DQES}(N)= \frac{1}{N\alpha}\min_{r\in (0,\infty)} \sum_{k=1}^N\left[\left(r \sum_{i=1}^n (X^{(k)}_i-\widehat x^{\ES_\alpha}_i)+1\right)_+\right],\end{equation}
and otherwise ${\rm DQ}^{\ES}_\alpha(\mathbf X)=0.$

%
%
%
%
%

We start with the asymptotic normality of $\wDQVaR(N)$. 
For technical reasons, it is necessary to impose certain properties on the distribution function of $\mathbf{X}$.
\begin{assumption}\label{ass:DQVaR}
     For each $i \in[n]$,  the distribution function $F_i$ has a  positive  density  $f_i$  at  $F_i^{-1}(1-\alpha)$. The distribution function $G$  has a  density   $g$ that  is bounded  and continuous at $\sum_{i=1}^{n}F_i^{-1}(1-\alpha)$. 
\end{assumption}

Such regularity assumptions are standard in the literature on quantile inference and risk measures; see, e.g. \cite{B66} and \cite{K67}. Note that Assumption \ref{ass:DQVaR} is stronger than Assumption \ref{ass:quantile},  which  allows us to apply Bahadur representation (see Lemma \ref{lem:Bahadur} in Appendix \ref{app:2}) on the marginal empirical quantile functions $(\wF^N)^{-1}_i(\alpha)$, $i\in [n]$,  and the mean-value theorem on $G$.

\begin{theorem}[Asymptotic normality of $\mathrm{DQ}_\alpha^{\VaR}$]\label{thm:2}
 Suppose that $\bX^{(1)}, \bX^{(2)} \dots \in \X^n$  are  i.i.d.~samples of $\bX=(X_1, \dots, X_n)$, $\alpha \in (0,1)$ and  Assumption \ref{ass:DQVaR} holds.  Let $t_i=F_i^{-1}(1-\alpha)$ for $i\in [n]$ and $ t_{n+1}=\sum_{i=1}^{n}t_i$. 
 As $N \rightarrow \infty$, we have
$$
\sqrt{N}\left(\wDQVaR(N)-\mathrm{DQ}^{\VaR}_\alpha(\mathbf X)\right)\lawto \rm{N}(0,\sigma_{\VaR}^2), 
$$
with $\sigma_{\VaR}^2=\mathbf A_\VaR^\top\Sigma_\VaR\mathbf A_\VaR$ where
	$$
 	\mathbf A_\VaR=\left(\dfrac{g(t_{n+1})}{\alpha f_1(t_1)},\dots ,\dfrac{g(t_{n+1})}{\alpha f_n(t_n)},-\frac{1}{\alpha}\right),
 	$$
and $\Sigma_\VaR$ is the covariance matrix of random vector $(\id_{\{X_1 \le t_1\}}, \dots, \id_{\{X_n \le t_n\}}, \id_{\{S\le t_{n+1}\}})$.
\end{theorem}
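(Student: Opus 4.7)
The plan is to linearise $\wDQVaR(N) - \DQVaR(\bX)$ into a sum of i.i.d.\ centred random variables and then invoke the multivariate CLT. Using the representation \eqref{eq:DQ_VaR} in Proposition \ref{prop:3},
\begin{equation*}
\wDQVaR(N) - \DQVaR(\bX) \;=\; -\frac{1}{\alpha}\bigl[\wG^{N}(\widehat T_N) - G(t_{n+1})\bigr], \qquad \widehat T_N := \sum_{i=1}^n (\wF^{N})^{-1}_i(1-\alpha).
\end{equation*}
Inserting $G(\widehat T_N)$ decomposes the right-hand side into a deterministic piece $G(\widehat T_N) - G(t_{n+1})$ and a stochastic-equicontinuity piece $\wG^{N}(\widehat T_N) - G(\widehat T_N)$, which I would treat separately.

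For the deterministic piece, Assumption \ref{ass:DQVaR} legitimises the Bahadur representation (Lemma \ref{lem:Bahadur}) of each marginal empirical quantile, giving
\begin{equation*}
\sqrt{N}\,(\widehat T_N - t_{n+1}) \;=\; \sum_{i=1}^n \frac{\sqrt{N}\bigl((1-\alpha) - \wF^{N}_i(t_i)\bigr)}{f_i(t_i)} + o_p(1),
\end{equation*}
so in particular $\widehat T_N - t_{n+1} = O_p(N^{-1/2})$. Since $g$ is continuous at $t_{n+1}$, a mean-value argument then yields $G(\widehat T_N) - G(t_{n+1}) = g(t_{n+1})(\widehat T_N - t_{n+1}) + o_p(N^{-1/2})$. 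The main technical obstacle is the stochastic-equicontinuity estimate
\begin{equation*}
\sqrt{N}\,\bigl[(\wG^{N}(\widehat T_N) - G(\widehat T_N)) - (\wG^{N}(t_{n+1}) - G(t_{n+1}))\bigr] = o_p(1),
\end{equation*}
which I would obtain by restricting to the high-probability event $\{|\widehat T_N - t_{n+1}| \le C N^{-1/2}\}$ and controlling the oscillation of the empirical process $\sqrt{N}(\wG^N - G)$ on this shrinking interval; boundedness of $g$ near $t_{n+1}$ bounds the variance of the increments $\wG^N(s) - \wG^N(t_{n+1})$, and a standard maximal inequality (or, equivalently, the Donsker property of the class of left half-line indicators) closes the step.

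Assembling the three ingredients and regrouping,
\begin{equation*}
\sqrt{N}\bigl(\wDQVaR(N) - \DQVaR(\bX)\bigr) \;=\; \frac{1}{\sqrt N}\sum_{k=1}^N \mathbf A_{\VaR}^{\top}\bigl(\mathbf Z^{(k)} - \E[\mathbf Z^{(k)}]\bigr) + o_p(1),
\end{equation*}
where $\mathbf Z^{(k)} = (\id_{\{X_1^{(k)} \le t_1\}},\dots,\id_{\{X_n^{(k)} \le t_n\}}, \id_{\{S^{(k)} \le t_{n+1}\}})$: the coefficient $g(t_{n+1})/(\alpha f_i(t_i))$ on the $i$-th indicator comes from combining the Bahadur step with the prefactor $g(t_{n+1})/\alpha$, and the trailing $-1/\alpha$ from the direct $\wG^{N}(t_{n+1}) - G(t_{n+1})$ contribution. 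Applying the classical multivariate CLT to the i.i.d.\ centred vectors $\mathbf Z^{(k)} - \E[\mathbf Z^{(k)}]$, whose covariance matrix is $\Sigma_{\VaR}$, then yields the claimed limit with asymptotic variance $\mathbf A_{\VaR}^\top \Sigma_{\VaR} \mathbf A_{\VaR}$.
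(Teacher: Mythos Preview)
Your proposal is correct and follows the same overall architecture as the paper: the same representation via Proposition~\ref{prop:3}, the same insertion of $G(\widehat T_N)$ to split into $I_1=\wG^N(\widehat T_N)-G(\widehat T_N)$ and $I_2=G(\widehat T_N)-G(t_{n+1})$, Bahadur plus the mean-value theorem for $I_2$, and the multivariate CLT on the indicator vector to finish.

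The one genuine difference is how you handle the stochastic-equicontinuity step for $I_1$. The paper invokes the KMT strong approximation (Lemma~\ref{lem:1}) to couple $\sqrt N(\wG^N-G)$ with a sequence of Brownian bridges $B^N$, and then uses the Lévy modulus of continuity of $B^N$ together with $\widehat T_N\to t_{n+1}$ to conclude $\sqrt N I_1=\sqrt N(\wG^N(t_{n+1})-G(t_{n+1}))+\smallO_\p(1)$. You instead argue the same conclusion directly from the oscillation of the empirical process on an $O_p(N^{-1/2})$ window, i.e.\ from the Donsker property of half-line indicators. Your route is lighter in that it avoids the strong-approximation machinery; the paper's route makes the limiting Gaussian object explicit and reuses the same Brownian-bridge technology later for the $\alpha$-mixing extension (Lemma~\ref{lem:emp-alpha}). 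Either argument is sufficient here.
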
	
\begin{remark}
Let $X_{n+1}=S$. A simple calculation  yields that $\Sigma_\VaR=(\sigma^2_{ij})_{(n+1)\times (n+1)}$ where 
\begin{align*} 
\sigma^2_{ij}&=\cov\left(\id_{\{X_i\le t_i\}},\id_{\{X_j \le t_j\}}\right)=\p(X_i\leq t_i,X_j\leq t_j)-\p(X_i\leq t_i)\p(X_j\leq t_j)~~ \mbox{for}  ~~i,j \in [n+1].
\end{align*}
 The covariance matrix is uniquely determined by the joint distribution of $\bX$.
\end{remark}

The following two assumptions are required for the asymptotic normality of $\wDQES$.
 \begin{assumption}\label{ass:unique}
$\sum_{i=1}^{n}\ES_{\alpha}(X_i)< \esssup (S)$. 
\end{assumption}
Assumption \ref{ass:unique} implies that $\DQES(\bX)>0$ by \citet[Theorem 3]{HLW23}. Together with the fact that $\beta \mapsto \ES_\beta(S)$ is a continuous function, $\alpha^*$ satisfies $\ES_{\alpha^*}(S)=\sum_{i=1}^n \ES_\alpha(X_i)$.


\begin{assumption}\label{ass:DQES}
For each $i\in [n]$, the distribution function $F_i$  has a positive density  $f_i$ on $[F_i^{-1}(1-\alpha), \infty)$ and  $\mathbb{E}[|X_i|^{2+\epsilon_i}]<\infty$ for some $\epsilon_i>0$. The distribution fucntion $G$ has a density  $g$ which is positive on $[G^{-1}(1-\alpha^*),\infty)$ and  continuous at $G^{-1}(1-\alpha^*)$, and   $\mathbb{E}[|S|^{2+\epsilon}]<\infty$ for some $\epsilon>0$.  
\end{assumption}
We need slightly more than the second moment for each component $X_i$ and the sum $S$, which is a standard regularity condition required for the convergences of the weighted sum of empirical quantile process; see \cite{SY96} for more discussion about the regularity condition. 


The following theorem characterizes the asymptotic normality of $\DQES$. 
\begin{theorem}[Asymptotic normality of $\mathrm{DQ}^{\mathrm{ES}}_\alpha$]\label{thm:3} 
 Suppose that  $\bX^{(1)}, \bX^{(2)} \dots \in \X^n$ are i.i.d.~samples of $\bX=(X_1, \dots, X_n)$, $\alpha \in (0,1)$ and  Assumptions \ref{ass:unique} and \ref{ass:DQES} hold. 
  Let $\alpha^*=\alpha \DQES(\mathbf X)$, $t_i=F_i^{-1}(1-\alpha)$ for $i\in [n]$ and $s=G^{-1}(1-\alpha^*)$.  As $N \to \infty$, we have 
\begin{equation*}\sqrt{N}\left(\wDQES(N)-\DQES(\bX)\right)\lawto \mathrm{N}(0, \sigma_{\ES}^2),\end{equation*}
with $\sigma_\ES^2=\mathbf A_\ES^\top \Sigma_\ES \mathbf A_\ES/c^2$ where $\mathbf A_\ES \in \R^{n+1}$ such that
$\mathbf A_\ES=( 1/\alpha, \dots,1/\alpha, -1/\alpha^*)$, $\Sigma_\ES$ is the covariance matrix of random vector $( (X_1-t_1)_+, \dots, (X_n-t_n)_+, (S-s)_+)$ and $c=\left( \VaR_{\alpha^*}(S)-\ES_{\alpha^*}(S)\right)/\DQES(\bX)$.
\end{theorem}

\begin{remark} 
Note that the proof for the asymptotic distribution of $\mathrm{DQ}^{\ES}_\alpha$  is similar to the asymptotic normality for the empirical estimator of the probability equivalence level of VaR and ES (PELVE) proved in \cite{LW23}. For $\epsilon \in (0,1)$,  PELVE is defined as 
$$
\Pi_{\varepsilon}(X) = \inf \left\{ c \in [1, 1 / \varepsilon] : \operatorname{ES}_{1 - c \varepsilon}(X) \leq \operatorname{VaR}_{1 - \varepsilon}(X) \right\},~~ X \in L^1,
$$ 
which has similar definition as DQ in Definition \ref{def:DQ}.  It is worth noting that PELVE is a one-dimensional quantity, while DQ is defined on a  vector, which makes the analysis more involved. The same technology fails for the empirical estimator of $\DQVaR$ since the empirical quantile function is not differentiable. 
\end{remark}

\section{DQ based on dependent data}\label{sec:5}
In this section, we further extend the results in Sections \ref{sec:3} and \ref{sec:4} to dependent data where the sequence $\{X^{(k)}\}_{k \ge 1}$ is $\alpha$-mixing. Below, we present  the definition of $\alpha$-mixing.

\begin{definition}[$\alpha$-mixing]
A sequence $\{\bX^{(k)}=(X_1^{(k)}, \dots, X_n^{(k)})\}_{k \ge 1}$ is $\alpha$-mixing if its mixing coefficient 
$$\alpha( m):=\sup\{\vert \p(A \cap B) -\p(A)\p(B)\vert: A \in \mathcal{F}_{1}^k, B \in \mathcal{F}_{k+m}^\infty, 1\le k<\infty \}\to 0$$
as $m \to \infty$, where $\mathcal{F}_a^b$ denotes the $\sigma$-field generated by $\{\bX^{(k)}\}_{a\le k\le b}$.
\end{definition}
We slightly abuse the notation $\alpha$ here. The mixing coefficient $\alpha(\cdot)$ is a function of integer number $m$ instead of probability level $\alpha$ for $\mathrm{DQ}_{\alpha}^\rho$. The i.i.d.~sample sequence is a special case of $\alpha$-mixing sequence with $\alpha(m)=0$ for all $m \in \mathbb N$. 

For a function  $f: \R^n \to \R$, it is clear that the mixing coefficient $\alpha_{\{f(\bX^{(k)})\}}(m)\le \alpha_{\{\bX^{(k)}\}}( m)$. Hence, if  $\{\bX^{(k)}\}_{k \ge 1}$ is $\alpha$-mixing, $\{f(\bX^{(k)})\}_{k \ge 1}$ is $\alpha$-mixing.  By taking $f$ as sum or $i$-th element of vector,  we have $\{\sum_{i=1}^n X_i^{(k)}\}_{k \ge 1}$ and $\{X_i^{(k)}\}_{k \ge 1}$ are $\alpha$-mixing for all $i \in [n]$.

To the best of our knowledge, the uniformly convergence of empirical joint distributions based on $\alpha$-mixing data is not available in the existing literature. Nevertheless, the empirical distribution for a real value random variable is uniformly convergent with  $\alpha$-mixing samples under some certain condition as shown in \citet[Corollary 2.1]{CR92}. 
Hence, we need an additional assumption on the mixing coefficient for the consistency of $\wDQrho(N)$ under $\alpha$-mixing data.

\begin{assumption}\label{ass:alpha-emp}
The mixing coefficient $\alpha(m)$ for $\{\bX^{(k)}\}_{k \ge 1}$ satisfies
$$\sum_{m=1}^ \infty m^{-1} (\log m) (\log \log m)^{1+\delta} \alpha(m) < \infty~~~\mbox{for some}~~~\delta>0.$$
\end{assumption}

Under Assumption \ref{ass:alpha-emp}, we can extend Theorem  \ref{thm:1} to $\alpha$-mixing data.
 \begin{proposition}\label{prop:alpha-con}
       Suppose $\{ \bX^{(k)}\}_{k \ge 1}$ is a stationary  $\alpha$-mixing  sequence with Assumption \ref{ass:alpha-emp}. If either
       \begin{enumerate}[(a)]
\item $\rho=(\rho_{\alpha})_{\alpha\in I}$ is a class of law-invariant convex risk measures, or
\item $\rho=(\VaR_\alpha)_{\alpha \in I}$ with $I=(0,1)$ and  Assumption \ref{ass:quantile} holds,
\end{enumerate}
       then
$$
\liminf_{N\to \infty}\wDQrho(N)\ge \DQrho(\mathbf X),~~~\mbox{and }~~~\limsup_{N\to\infty}\wDQrho(N)\le \frac{\widetilde{\alpha}
}{\alpha} ~~\p\mbox{-a.s.}$$
 for any $\alpha \in I$.  If further $\beta \mapsto \rho_\beta(S)$ is a strictly decreasing function, then  $\DQrho(N)$ is a strongly consistent empirical estimator for $\DQrho(\bX)$; that is, $\lim_{N \to \infty} \wDQrho(N) =\DQrho(\mathbf X)$ $\p$-a.s.
 \end{proposition}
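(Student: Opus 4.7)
The strategy is to reduce Proposition \ref{prop:alpha-con} to Theorem \ref{thm:1} by establishing a Glivenko--Cantelli-type result for the one-dimensional empirical distributions under $\alpha$-mixing. The key observation is that, by law-invariance of $\rho$, the quantity $\DQrho(\bX)$ depends on $\bX$ only through the marginal distributions $F_1,\dots,F_n$ of its components and the distribution $G$ of the sum $S$. Consequently, we never need joint uniform convergence of $\wF^N$; uniform convergence of $\wF_i^N$ for each $i\in[n]$ and of $\wG^N$ suffices.

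The first step is to note, as already pointed out in the text preceding the proposition, that the derived sequences $\{X_i^{(k)}\}_{k\ge 1}$ (for each $i$) and $\{S^{(k)}\}_{k\ge 1}$ inherit $\alpha$-mixing from $\{\bX^{(k)}\}_{k\ge 1}$, with mixing coefficients dominated by $\alpha(m)$. Since Assumption \ref{ass:alpha-emp} gives $\sum_m m^{-1}(\log m)(\log\log m)^{1+\delta}\alpha(m)<\infty$, \citet[Corollary 2.1]{CR92} yields
$$\sup_{x\in\R}|\wF_i^N(x)-F_i(x)|\to 0 \quad \text{and} \quad \sup_{x\in\R}|\wG^N(x)-G(x)|\to 0,\qquad \p\text{-a.s.}$$

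The second step is the pathwise reduction. Fix $\omega$ in the full-probability event where all of the above uniform convergences hold. On an auxiliary probability space realise a random vector $\widetilde\bX^N(\omega)$ whose marginals equal $\wF_i^N(\omega)$ and whose sum has distribution $\wG^N(\omega)$; by law-invariance, $\wDQrho(N)(\omega)=\DQrho(\widetilde\bX^N(\omega))$. The uniform convergences translate into $\widetilde X_i^N(\omega)\lawto X_i$ and $\widetilde S^N(\omega)\lawto S$, which is precisely the input that Theorem \ref{thm:1} needs (its argument only uses convergence of each marginal and of the sum, because these are the only ingredients entering $\rho_\alpha(X_i)$ and $\rho_\beta(S)$). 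In case (b) the continuity requirement at the relevant quantiles in Assumption \ref{ass:quantile} is a property of the limiting $F$ and $G$, so it transfers without effort. In case (a) the hypothesis of Theorem \ref{thm:1} also asks for the uniform integrability condition of Assumption \ref{ass:UI}; this comes for free from stationarity and the ergodic/strong-law behaviour of $\alpha$-mixing sequences, since for each $K$,
$$\E\bigl[|\widetilde X_i^N(\omega)|\id_{\{|\widetilde X_i^N(\omega)|>K\}}\bigr]=\frac{1}{N}\sum_{k=1}^N |X_i^{(k)}(\omega)|\id_{\{|X_i^{(k)}(\omega)|>K\}}\longrightarrow \E\bigl[|X_i|\id_{\{|X_i|>K\}}\bigr]$$
almost surely (and similarly for $S$), which tends to $0$ as $K\to\infty$. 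Invoking Theorem \ref{thm:1} pathwise then delivers the two inequalities $\liminf_N\wDQrho(N)\ge\DQrho(\bX)$ and $\limsup_N\wDQrho(N)\le\widetilde\alpha/\alpha$ almost surely, and under strict monotonicity of $\beta\mapsto\rho_\beta(S)$ one has $\widetilde\alpha=\alpha^*$, yielding strong consistency.

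The main obstacle is a bookkeeping one: ensuring that Theorem \ref{thm:1} really can be applied to the pathwise ``random vector'' $\widetilde\bX^N(\omega)$ when we only have convergence of its marginals and its sum, not genuine joint weak convergence. This is mild because $\DQrho$ is a functional of these marginals alone, but it is the step that must be articulated carefully; the $\alpha$-mixing uniform convergence of \cite{CR92} and the stationarity-based verification of Assumption \ref{ass:UI} are the two non-routine inputs that make the pathwise invocation of Theorem \ref{thm:1} legitimate.
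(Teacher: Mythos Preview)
Your proposal is correct and follows essentially the same route as the paper: invoke \citet[Corollary 2.1]{CR92} to obtain uniform convergence of $\wF_i^N$ and $\wG^N$ under Assumption \ref{ass:alpha-emp}, then re-run the proof of Theorem \ref{thm:1} pathwise using only marginal-and-sum convergence. Your explicit verification of uniform integrability via the ergodic theorem is a useful elaboration of the paper's one-line remark that ``stationary sequence satisfies Assumption \ref{ass:UI}.''
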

 Assumption \ref{ass:alpha-emp} provides uniformly convergence of the empirical distributions for marginal $X_i$, $i\in [n]$, and sum $S$ with stationary $\alpha$-mixing data. Since the convergence of $\DQrho$ only relies on the weak convergence of $X_i$, $i\in [n]$ and $S$, we can simply extend the proof of Theorem  \ref{thm:1} to show Proposition \ref{prop:alpha-con}.
 Note that Assumption \ref{ass:UI} is not required in Proposition \ref{prop:alpha-con} (a) since stationary sequence satisfies Assumption \ref{ass:UI}. 

To show the asymptotic normality of $\wDQVaR$ and $\wDQES$, we need stronger  assumptions  to extend the strong convergence of empirical process $E^N$ defined in Lemma \ref{lem:1} (Lemma \ref{lem:emp-alpha} in Appendix \ref{app:depend}), Bahadur's representation (Lemma \ref{lem:quan-alpha} in Appendix \ref{app:depend}) and central limit theorem to $\alpha$-mixing sequences (Lemma \ref{lem:CLT-alpha} in Appendix \ref{app:depend}).

\begin{assumption}\label{ass:alpha}
$\{\bX^{(k)}\}_{k \ge 1}$  is an $\alpha$-mixing  stationary sequence with $\alpha(m)=\O(m^{-5-\epsilon})$ for some $0<\epsilon \le 1/4$. Moreover, $\p(S^{(k_1)}=S^{(k_2)})=0$ and   $\p(X^{(k_1)}_i=X^{(k_2)}_i)=0$ for any $k_1\neq k_2$ and $i\in [n]$. 
\end{assumption}

\begin{assumption}\label{ass:DQVaR-alpha}
     For each $i \in[n]$,   the distribution function $F_i$ has a positive density  $f_i$ where $f_i(x)$ and $f_i'(x)$ are bounded in some neighborhood of $F_i^{-1}(p)$. The distribution function  $G$ has a  density   $g$ which is bounded  and continuous at $\sum_{i=1}^{n}F_i^{-1}(1-p)$. 
\end{assumption}

The requirement for mixing coffecifient $\alpha(m)$ in Assumption \ref{ass:alpha} is stronger than Assumption \ref{ass:alpha-emp} and requirement for distributions $F_i$, $i \in [n]$, and $G$ in Assumption \ref{ass:DQVaR-alpha} is stronger than Assumption \ref{ass:quantile}. Hence, with Assumptions \ref{ass:alpha} and \ref{ass:DQVaR-alpha}, $\wDQVaR$ is a strongly consistent estimator by Proposition \ref{prop:alpha-con}.

For two random vectors $\mathbf X=(X_1, \dots, X_n)$ and $\mathbf Y=(Y_1, \dots, Y_n)$, their covariance matrix is defined as 
\begin{align*}
\cov(\mathbf X, \mathbf Y)=\left(\cov(X_i, Y_j)\right)_{i,j \in [n]}.
\end{align*}
We present the asymptotic normality of $\mathrm{DQ}_\alpha^{\VaR}$ for $\alpha$-mixing sequences with Assumptions  \ref{ass:alpha} and \ref{ass:DQVaR-alpha}.
\begin{theorem}[Asymptotic normality of $\mathrm{DQ}_\alpha^{\VaR}$ for $\alpha$-mixing sequences]\label{thm:alphaDQVaR}
 Suppose that $\{\bX^{(k)}\}_{k \ge 1}$ satisfies Assumptions \ref{ass:alpha} and \ref{ass:DQVaR-alpha}, and  $p \in (0,1)$.  Let $t_i=F_i^{-1}(1-p)$ for $i\in [n]$ and $ t_{n+1}=\sum_{i=1}^{n}t_i$. 
 As $N \rightarrow \infty$, we have
$$
\sqrt{N}\left(\wDQVaR(N)-\mathrm{DQ}^{\VaR}_\alpha(\mathbf X)\right)\lawto \rm{N}(0,\sigma_{\VaR}^2), 
$$
with $\sigma_{\VaR}^2=\mathbf A_\VaR^\top\Sigma^\alpha_\VaR\mathbf A_\VaR$ where
	$$
 	\mathbf A_\VaR=\left(\dfrac{g(t_{n+1})}{\alpha f_1(t_1)},\dots ,\dfrac{g(t_{n+1})}{\alpha f_n(t_n)},-\frac{1}{\alpha}\right),
 	$$
and $\Sigma^\alpha_{\VaR}=\var(\mathbf I^{(1)})+2\sum_{k=2}^\infty\cov(\mathbf I^{(1)},\mathbf I^\k)$
and  $\mathbf I^{(k)}=(\id_{\{X_1^{(k)}\le t_1\}}, \dots, \id_{\{X_n^{(k)}\le t_n\}})$.\label{thm:4}
\end{theorem}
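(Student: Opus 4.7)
The plan is to linearise $\wDQVaR(N) - \DQVaR(\mathbf X)$ into an asymptotically linear sum of a stationary $\alpha$-mixing sequence and then apply the multivariate CLT for $\alpha$-mixing processes (Lemma \ref{lem:CLT-alpha}). The starting point is the alternative formula from Proposition \ref{prop:3}, which gives the clean identity
\[
\alpha\bigl(\wDQVaR(N) - \DQVaR(\mathbf X)\bigr) = G(t_{n+1}) - \wG^{N}(\widehat t_{n+1}),
\]
where $\widehat t_i = (\wF_i^N)^{-1}(1-\alpha)$ and $\widehat t_{n+1} = \sum_{i=1}^n \widehat t_i$. So the whole argument reduces to a joint asymptotic expansion of $\wG^N$ at the random point $\widehat t_{n+1}$ and of the marginal empirical quantiles $\widehat t_i$.

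First I would split
\[
G(t_{n+1}) - \wG^N(\widehat t_{n+1}) = \bigl[G(t_{n+1}) - G(\widehat t_{n+1})\bigr] + \bigl[(G-\wG^N)(\widehat t_{n+1}) - (G-\wG^N)(t_{n+1})\bigr] + \bigl[G(t_{n+1}) - \wG^N(t_{n+1})\bigr].
\]
The $\alpha$-mixing Bahadur representation (Lemma \ref{lem:quan-alpha}) gives $\widehat t_i - t_i = O_P(N^{-1/2})$, so the bracketed oscillation term is shown to be $o_P(N^{-1/2})$ using the $\alpha$-mixing empirical-process result in Lemma \ref{lem:emp-alpha} applied to the process $\wG^N - G$ in a shrinking neighborhood of $t_{n+1}$ (this uses boundedness of $g$ from Assumption \ref{ass:DQVaR-alpha}). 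The first bracket is handled by the mean-value theorem together with continuity of $g$ at $t_{n+1}$, yielding $-g(t_{n+1})(\widehat t_{n+1} - t_{n+1}) + o_P(N^{-1/2})$.

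Next I apply the Bahadur representation (Lemma \ref{lem:quan-alpha}) to each marginal to write
\[
\widehat t_i - t_i = \frac{1}{N f_i(t_i)}\sum_{k=1}^{N}\bigl[F_i(t_i) - \id_{\{X_i^{(k)}\le t_i\}}\bigr] + o_P(N^{-1/2}),
\]
and sum over $i$ to obtain a linearisation of $\widehat t_{n+1}-t_{n+1}$. Combining these pieces gives
\[
\sqrt{N}\,\alpha\bigl(\wDQVaR(N) - \DQVaR(\mathbf X)\bigr) = \frac{1}{\sqrt N}\sum_{k=1}^{N}\alpha\,\mathbf A_\VaR^\top\bigl(\mathbf J^{(k)} - \E[\mathbf J^{(k)}]\bigr) + o_P(1),
\]
where $\mathbf J^{(k)} = \bigl(\id_{\{X_1^{(k)}\le t_1\}},\dots,\id_{\{X_n^{(k)}\le t_n\}},\id_{\{S^{(k)}\le t_{n+1}\}}\bigr)$; I read the statement as intending $\mathbf I^{(k)}$ to include the $(n{+}1)$-th coordinate so that the dimensions of $\mathbf A_\VaR$ and $\Sigma_\VaR^\alpha$ match. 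Finally, since $\{\mathbf J^{(k)}\}$ is a stationary $\alpha$-mixing sequence of bounded random vectors with mixing rate $\alpha(m)=O(m^{-5-\epsilon})$, the $\alpha$-mixing CLT (Lemma \ref{lem:CLT-alpha}) applies and produces $\mathrm{N}(0,\mathbf A_\VaR^\top \Sigma_\VaR^\alpha \mathbf A_\VaR)$, with the long-run covariance $\Sigma_\VaR^\alpha = \var(\mathbf J^{(1)}) + 2\sum_{k\ge 2}\cov(\mathbf J^{(1)},\mathbf J^{(k)})$ converging absolutely thanks to the polynomial mixing rate.

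The main obstacle is step two: controlling the stochastic equicontinuity increment $(G-\wG^N)(\widehat t_{n+1}) - (G-\wG^N)(t_{n+1})$ under dependence. In the i.i.d.\ case this is standard Donsker theory, but under $\alpha$-mixing one must combine the uniform rate for $\widehat t_{n+1}-t_{n+1}$ with a modulus-of-continuity bound for the mixing empirical process in a neighbourhood of size $O(N^{-1/2})$; this is precisely why the strengthened mixing rate in Assumption \ref{ass:alpha} (together with the $f_i$ and $g$ regularity in Assumption \ref{ass:DQVaR-alpha}) is needed. All other steps are essentially the same as in the i.i.d.\ proof of Theorem \ref{thm:2}, with the i.i.d.\ CLT, Bahadur representation, and Glivenko–Cantelli results systematically replaced by their $\alpha$-mixing counterparts.
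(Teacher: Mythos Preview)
Your proposal is correct and follows essentially the same route as the paper: the paper uses the two-term split $I_1+I_2$ with $I_1=\wG^N(\widehat t_{n+1})-G(\widehat t_{n+1})$ and $I_2=G(\widehat t_{n+1})-G(t_{n+1})$, reduces $I_1$ to $\wG^N(t_{n+1})-G(t_{n+1})$ via the strong approximation in Lemma~\ref{lem:emp-alpha} (which is exactly your oscillation term being $o_P(N^{-1/2})$), handles $I_2$ by the mean value theorem and Lemma~\ref{lem:quan-alpha}, and finishes with Lemma~\ref{lem:CLT-alpha}. Your three-term split is an equivalent rearrangement that isolates the stochastic-equicontinuity term explicitly, and your remark that $\mathbf I^{(k)}$ should carry the $(n{+}1)$-th coordinate $\id_{\{S^{(k)}\le t_{n+1}\}}$ to make the dimensions of $\mathbf A_\VaR$ and $\Sigma_\VaR^\alpha$ agree is well taken.
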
	
 Next, we discuss the asymptotic normality of $\mathrm{DQ}_\alpha^{\ES}$ for $\alpha$-mixing sequences based on the following assumption.

 \begin{assumption}\label{ass:DQES-alpha}
For each $i\in [n]$, the distribution function $F_i$  has a positive continuous density $f_i(x)$ where $f_i(x)$ and $f'_i(x)$ are bounded in some neighborhood of $F_i^{-1}(p)$ and  $\mathbb{E}[|X_i|^{2+\epsilon_i}]<\infty$ for some $\epsilon_i>1/2$. The distribution  function $G$ has a positive continuous density $g(x)$ where $g(x)$ and $g'(x)$ are bounded in some neighborhood of $g^{-1}(p)$ and   $\mathbb{E}[|S|^{2+\epsilon}]<\infty$ for some $\epsilon>1/2$.  
\end{assumption}

\begin{theorem}[Asymptotic normality of $\mathrm{DQ}_\alpha^{\ES}$ for $\alpha$-mixing sequences]\label{thm:alphaDQES} 
 Suppose that  $\{\bX^{(k)}\}_{k \ge 1}$ satisfies Assumptions  \ref{ass:unique}, \ref{ass:alpha} and \ref{ass:DQES-alpha}, and $p \in (0,1)$.
  Let $\alpha^*=\alpha \DQES(\mathbf X)$, $t_i=F_i^{-1}(1-\alpha)$ for $i\in [n]$ and $s=G^{-1}(1-\alpha^*)$.  As $N \to \infty$, we have 
\begin{equation*}\label{eq:sig_ES}\sqrt{N}\left(\wDQES(N)-\DQES(\bX)\right)\lawto \mathrm{N}(0, \sigma_{\ES}^2)\end{equation*}
with $\sigma_\ES^2=\mathbf A_\ES^\top \Sigma^\alpha_\ES \mathbf A_\ES/c^2$ where  $c=\left( \VaR_{\alpha^*}(S)-\ES_{\alpha^*}(S)\right)/\DQES(\bX)$, 
$\mathbf A_\ES=( 1/\alpha, \dots,1/\alpha, -1/\alpha^*)$ and  $\Sigma^\alpha_\ES=\var( \mathbf Y^{(1)})+2\sum_{k=2}^\infty \cov( \mathbf Y^{(1)}, \mathbf Y^{(k)})$,  $\mathbf Y^{(k)}=( (X^{(k)}_1-t_1)_+, \dots, (X^{(k)}_n-t_n)_+, (S^{(k)}-s)_+)$ for $k \ge 1$.\label{thm:5}
\end{theorem}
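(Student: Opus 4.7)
The plan is to mirror the proof of Theorem \ref{thm:3}, replacing each i.i.d.\ tool by its $\alpha$-mixing counterpart, while relying on the stronger mixing rate imposed by Assumption \ref{ass:alpha}. Write $\alpha^* = \alpha\,\DQES(\bX)$ and $\widehat{\alpha}^* = \alpha\,\wDQES(N)$. By Assumption \ref{ass:unique} (which guarantees $\DQES(\bX) > 0$) and the strict decrease of $\beta \mapsto \ES_\beta(S)$, Proposition \ref{prop:alpha-con} yields $\widehat{\alpha}^* \pcn \alpha^*$, and $\alpha^*$ is pinned down by the identity $\ES_{\alpha^*}(S) = \sum_{i=1}^n \ES_\alpha(X_i)$. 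The first-order optimality condition for the minimization in Proposition \ref{prop:3}, applied both at the population and at the empirical level (so that the empirical $\widehat{\alpha}^*$ is asymptotically equivalent to the solution of the analog implicit equation), yields the empirical counterpart $\widehat{\ES}_{\widehat{\alpha}^*}(S) = \sum_{i=1}^n \widehat{\ES}_\alpha(X_i) + o_{\p}(N^{-1/2})$.

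Subtracting the two identities and splitting
\[
\widehat{\ES}_{\widehat{\alpha}^*}(S) - \ES_{\alpha^*}(S) = \bigl(\widehat{\ES}_{\widehat{\alpha}^*}(S) - \widehat{\ES}_{\alpha^*}(S)\bigr) + \bigl(\widehat{\ES}_{\alpha^*}(S) - \ES_{\alpha^*}(S)\bigr),
\]
a mean-value linearization of the first bracket on a shrinking neighborhood of $\alpha^*$, via $\tfrac{\d}{\d \beta}\ES_\beta(S)\big|_{\beta=\alpha^*} = (\VaR_{\alpha^*}(S) - \ES_{\alpha^*}(S))/\alpha^* = c/\alpha$ (which uses positivity and continuity of $g$ at $s = G^{-1}(1-\alpha^*)$), gives $(\widehat{\alpha}^* - \alpha^*)\cdot (c/\alpha) + o_{\p}(N^{-1/2})$. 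For each of the remaining ES differences, the Bahadur-type representation for $\alpha$-mixing stationary sequences (Lemma \ref{lem:quan-alpha}), together with the smoothness of $f_i$ and $g$ and the $(2+\epsilon)$-moment hypothesis in Assumption \ref{ass:DQES-alpha}, produces a linearization of the form
\[
\widehat{\ES}_\alpha(X_i) - \ES_\alpha(X_i) = \frac{1}{\alpha N}\sum_{k=1}^N \Bigl((X_i^{(k)} - t_i)_+ - \E[(X_i - t_i)_+]\Bigr) + o_{\p}(N^{-1/2}),
\]
and analogously for $\widehat{\ES}_{\alpha^*}(S) - \ES_{\alpha^*}(S)$ with $(t_i,\alpha)$ replaced by $(s,\alpha^*)$.

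Assembling these expansions, solving for $\widehat{\alpha}^* - \alpha^*$ and dividing by $\alpha$ reduces the target to
\[
\sqrt{N}\bigl(\wDQES(N) - \DQES(\bX)\bigr) = \frac{1}{c}\cdot \frac{1}{\sqrt{N}}\sum_{k=1}^N \bigl(\mathbf{A}_\ES^\top \mathbf{Y}^{(k)} - \E[\mathbf{A}_\ES^\top \mathbf{Y}^{(1)}]\bigr) + o_{\p}(1),
\]
with $\mathbf{Y}^{(k)}$ as defined in the theorem. Applying the $\alpha$-mixing CLT of Lemma \ref{lem:CLT-alpha} to the stationary scalar sequence $\{\mathbf{A}_\ES^\top \mathbf{Y}^{(k)}\}$, whose long-run variance equals $\mathbf{A}_\ES^\top \Sigma_\ES^\alpha \mathbf{A}_\ES$ and whose $(2+\epsilon)$-moment is finite by Assumption \ref{ass:DQES-alpha}, yields the stated limit with $\sigma_\ES^2 = \mathbf{A}_\ES^\top \Sigma_\ES^\alpha \mathbf{A}_\ES / c^2$.

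The main obstacle is controlling the two $o_{\p}(N^{-1/2})$ remainders under $\alpha$-mixing dependence: the stochastic equicontinuity needed to justify the step $\widehat{\ES}_{\widehat{\alpha}^*}(S) - \widehat{\ES}_{\alpha^*}(S) \approx (\widehat{\alpha}^*-\alpha^*)(c/\alpha)$, and the second-order terms in the Bahadur expansion of each empirical ES. This is precisely where the rate $\alpha(m) = \O(m^{-5-\epsilon})$, the boundedness of $f_i, f_i', g, g'$ in neighborhoods of the relevant quantiles, and the $(2+\epsilon)$-moment condition enter the argument, through Lemmas \ref{lem:emp-alpha} and \ref{lem:quan-alpha}.
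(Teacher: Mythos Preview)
Your proposal is correct and follows essentially the same route as the paper: both proofs recycle the decomposition from the i.i.d.\ case (Theorem \ref{thm:3}) to obtain the linear representation $\sqrt{N}\bigl(\wDQES(N)-\DQES(\bX)\bigr) = c^{-1}N^{-1/2}\sum_{k=1}^N\bigl(\mathbf{A}_\ES^\top\mathbf{Y}^{(k)}-\E[\mathbf{A}_\ES^\top\mathbf{Y}^{(1)}]\bigr)+o_\p(1)$, with the $\alpha$-mixing Bahadur expansion (Lemma \ref{lem:quan-alpha}) replacing the i.i.d.\ one, and then conclude via the $\alpha$-mixing CLT (Lemma \ref{lem:CLT-alpha}). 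The paper's write-up is terser---it simply points back to the proof of Theorem \ref{thm:3} and checks the hypotheses of Lemma \ref{lem:CLT-alpha}---but the underlying argument and the role of each assumption are exactly as you describe.
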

In the case that $\{\bX^{(k)}\}_{k \ge 1}$ is a sequence of i.i.d.~random vectors, the coveriance terms in  Theorems \ref{thm:alphaDQVaR} and \ref{thm:alphaDQES} disappear; hence, Theorems \ref{thm:alphaDQVaR} and \ref{thm:alphaDQES} degenerate to Theorems \ref{thm:2} and \ref{thm:3}, respectively. 
\section{Comparison of  DQ and DR}\label{sec:6}

For a given risk measure $\phi: \X \to \R$, a well-studied diversification index, called diversification ratio (DR), is defined by 
$$
\mathrm{DR}^\phi(\bX)=\frac{\phi(S)}{\sum_{i=1}^n \phi(X_i)}, ~~~\bX \in \X^n.$$
Let $\widehat{\phi}^N(X_i)$ be the empirical estimators for $\phi(X_i)$ for $i \in [n]$ and  $\widehat{\phi}^N(S)$ be the empirical estimators for $\phi(S)$. Assume we have the asymptotic normality for $(\widehat{\phi}^N(X_i), \dots, \widehat{\phi}^N(X_i), \widehat{\phi}^N(S))$; that is,
$$(\widehat{\phi}^N(X_1)-\phi(X_1), \dots, \widehat{\phi}^N(X_n)-\phi(X_n), \widehat{\phi}^N(S)-\phi(S)) \lawto \mathrm{N}(0, \Sigma_\phi).$$
By Delta method,  the empirical estimator for DR satisfies
$$\widehat{\mathrm{DR}^\phi}(N)-\mathrm{DR}^\phi(\bX)\lawto \mathrm{N}(0, \sigma^2_\phi),$$
with $\sigma^2_\phi=\mathbf R_\phi^\top \Sigma_\phi \mathbf R_\phi$
where $\mathbf{R}_\phi \in \R^{n+1} $ such that $$\mathbf{R}_\phi=\left(-\frac{\phi(S)}{(\sum_{i=1}^n  \phi(X_i))^2}, \dots, -\frac{\phi(S)}{(\sum_{i=1}^n  \phi(X_i))^2}, \frac{1}{\sum_{i=1}^n \phi(X_i)}\right).$$
An obvious drawback for DR empirical estimator is that $\sigma^2_\phi \to \infty$ if $\sum_{i=1}^n \phi(X_i) \to 0$. As a result, the empirical estimator for DR based on a cash-additive risk measure fails if we consider a centralized portfolio loss $\mathbf Y=(X_1-\phi(X_1), \dots, X_n -\phi(X_n))$. That is, we should be very careful in the location of the portfolio when we estimate DR. 
This disadvantage is avoided by DQ based on a class of cash-additive risk measures since DQ is location invariance. More precisely, if $\rho=\VaR$ or $\ES$, then  $\mathrm{DQ}_\alpha^\rho(\mathbf{X}+\mathbf{c})=\mathrm{DQ}_\alpha^\rho(\mathbf{X})$  for all $\mathbf{c}=(c_1,\dots,c_n) \in \R^n$ and all $ \mathbf X\in \X^n$.  Moreover, as we can see from Theorems  \ref{thm:2} and \ref{thm:3}, $\sigma^2_\VaR$ and $\sigma^2_\ES$   are finite under certain conditions.

Below, we present the explicit asymptotic distributions for $\widehat{\mathrm{DR}^\VaR}(N)$ and $\widehat{\mathrm{DR}^\ES}(N)$.

\begin{proposition}[Asymptotic normality of $\mathrm{DR}$]\label{prop:5}
Suppose that    $\bX^{(1)}, \bX^{(2)}, \dots\in\X^n$ are i.i.d.~samples of $\bX=(X_1, \dots, X_n)$.  Let  $s=G^{-1}(1-\alpha)$, $t_i=F_i^{-1}(1-\alpha)$ for $i\in [n]$ and $ t_{n+1}=\sum_{i=1}^{n}t_i$.  The following statements hold. 
\begin{itemize}
    \item[(i)] Suppose   $\alpha\in (0,1)$ and that Assumption \ref{ass:DQVaR} holds. As $N\to\infty$, we have  $$\sqrt{N}\left(\widehat{\mathrm{DR}^{\VaR_{\alpha}}}(N)-\mathrm{DR}^{\VaR_{\alpha}}(\mathbf{X})\right)\stackrel{d}{\longrightarrow} \mathrm{N}(0, \widetilde \sigma^2_{\VaR_{\alpha}}),$$
   with $\widetilde \sigma^2_{\VaR_{\alpha}}=\mathbf R_{\VaR}^\top\Sigma_{\VaR} \mathbf R_{\VaR}$, where  where  $$\mathbf{R}_{\VaR}=\left(\frac{s}{f_1(t_1)t_{n+1}^2}, \dots, \frac{s}{f_n(t_n) t_{n+1}^2}, \frac{-1}{g(s)t_{n+1}}\right),$$  and $\Sigma_{\VaR}$ is the covariance matrix of random vector $(\id_{\{X_1 \le t_1\}}, \dots, \id_{\{X_n \le t_n\}}, \id_{\{S\le s\}}).$
\item[(ii)]Suppose   $\alpha\in(0,1)$ and Assumption \ref{ass:DQES} holds. As $N\to\infty$, we have 
$$\sqrt{N}\left(\widehat{\mathrm{DR}^{\ES_{\alpha}}}(N)-\mathrm{DR}^{\ES_{\alpha}}(\mathbf{X})\right)\lawto \mathrm{N}(0,\widetilde\sigma^2_{\ES}),$$ 
with $\widetilde\sigma^2_{\ES}=\mathbf R^\top_{\ES}\Sigma_{\ES} \mathbf R_{\ES}$, where
 $$\mathbf{R}_{\ES}=\frac{1}{\alpha}\left(-\frac{\ES_\alpha(S)}{\left(\sum_{i=1}^n\ES_\alpha(X_i)\right)^2}, \dots, -\frac{\ES_\alpha(S)}{\left(\sum_{i=1}^n\ES_\alpha(X_i)\right)^2},  \frac{1}{\sum_{i=1}^n\ES_\alpha(X_i)}\right),$$
and  $\Sigma_{\ES}$ 
is  the covariance matrix of random vector  $( (X_1-t_1)_+, \dots, (X_n-t_n)_+,(S-s)_+)$. 
\end{itemize}  
\end{proposition}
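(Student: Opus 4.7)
The plan is to apply the multivariate delta method to the functional
$g(v_1,\ldots,v_n,v_{n+1}) = v_{n+1}/\sum_{i=1}^n v_i$
in both parts. Concretely, I would first establish joint asymptotic normality of the $(n+1)$-vector formed by the $n$ marginal empirical risk-measure estimators together with the empirical estimator for $S$, and then differentiate $g$ at the true values to propagate this to joint normality for the ratio. The gradient
$\nabla g = (-v_{n+1}/(\sum_j v_j)^2,\ldots,-v_{n+1}/(\sum_j v_j)^2,\,1/\sum_j v_j)$
evaluated at the true risk values matches exactly the structure of $\mathbf R_\VaR$ and $\mathbf R_\ES$ displayed in the proposition.

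For part (i), under Assumption \ref{ass:DQVaR} the Bahadur representation (Lemma \ref{lem:Bahadur} in the appendix) gives, for each $i\in[n]$,
$$\sqrt N\bigl(\widehat{\VaR}_\alpha(X_i) - t_i\bigr) = \frac{1}{f_i(t_i)\sqrt N}\sum_{k=1}^N\bigl((1-\alpha) - \id_{\{X_i^{(k)}\le t_i\}}\bigr) + o_p(1),$$
and the analogous expansion with $g(s)$ replacing $f_i(t_i)$ and $s$ replacing $t_i$ holds for $\widehat{\VaR}_\alpha(S)$. Every linearization is an affine function of the single indicator vector $\mathbf I^{(k)} = (\id_{\{X_1^{(k)}\le t_1\}},\ldots,\id_{\{S^{(k)}\le s\}})$, so the classical multivariate CLT on $\mathbf I^{(1)},\ldots,\mathbf I^{(N)}$ delivers joint asymptotic normality with covariance $\Sigma_\VaR$. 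Combining $\nabla g$ evaluated at $(t_1,\ldots,t_n,s)$ with the Bahadur denominators $f_i(t_i)$ and $g(s)$ produces precisely the coefficients in $\mathbf R_\VaR$.

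For part (ii), I would use the standard asymptotic linearization of the empirical ES: under Assumption \ref{ass:DQES},
$$\sqrt N\bigl(\widehat{\ES}_\alpha(X_i) - \ES_\alpha(X_i)\bigr) = \frac{1}{\alpha\sqrt N}\sum_{k=1}^N\Bigl((X_i^{(k)} - t_i)_+ - \E[(X_i - t_i)_+]\Bigr) + o_p(1),$$
with the analogous expression for $\widehat{\ES}_\alpha(S)$ using $s=G^{-1}(1-\alpha)$ in place of $t_i$. The multivariate CLT applied to the vector $((X_1-t_1)_+,\ldots,(S-s)_+)$ yields joint normality with covariance $\Sigma_\ES$. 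Evaluating $\nabla g$ at $(\ES_\alpha(X_1),\ldots,\ES_\alpha(S))$ and composing with the common $1/\alpha$ factor inherited from the linearization gives $\mathbf R_\ES$ as stated.

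The main obstacle is justifying the ES linearization in part (ii): after the first-order expansion $(X-\widehat t_i)_+ \approx (X-t_i)_+ - (\widehat t_i - t_i)\id_{\{X>t_i\}}$, one has to show that the quantile-estimation error $\widehat t_i - t_i$ cancels in the empirical average through a second Bahadur-type argument, leaving only an $o_p(N^{-1/2})$ remainder. The $(2+\epsilon_i)$ moment condition in Assumption \ref{ass:DQES} is precisely what is needed for the weighted-empirical-process arguments of \cite{SY96} to apply. Once the joint linearizations are established, the delta-method gradient computations that yield $\mathbf R_\VaR$ and $\mathbf R_\ES$ are entirely mechanical.
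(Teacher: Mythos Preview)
Your proposal is correct and matches the paper's approach: Bahadur linearization of the empirical quantiles, multivariate CLT on the resulting i.i.d.\ summands, and the delta method applied to the ratio $g(v_1,\ldots,v_{n+1})=v_{n+1}/\sum_i v_i$. For the ES linearization you flag as the main obstacle, the paper takes a slightly cleaner route than your proposed cancellation argument---it integrates the Bahadur representation over $[1-\alpha,1]$ and changes variables to obtain the $(X_i-t_i)_+$ influence function directly (see \eqref{eq:I_2_ES}--\eqref{eq:I_3_ES} in the proof of Theorem~\ref{thm:3}), which avoids having to track the quantile-estimation error separately.
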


\section{Simulation study}\label{sec:7}

This section verifies the asymptotic normality for the empirical estimator under the assumption that the portfolio follows an elliptical distribution. Elliptical distributions represent a fundamental class of multivariate distributions, characterized by their symmetry and flexible dependence structures, see \cite{MFE15}. Specifically, a random vector  is elliptically distributed if its characteristic function can be expressed as:
	$$
	\begin{aligned}
		\psi(\mathbf{t}) =\mathbb{E}\left[\exp \left(\texttt{i} \mathbf{t}^\top \mathbf{X}\right)\right] & 
		=\exp \left(\texttt{i} \mathbf{t}^\top \boldsymbol{\mu}\right) \tau\left(\mathbf{t}^\top \Sigma \mathbf{t}\right),
	\end{aligned}
	$$
where  $\boldsymbol{\mu}\in \mathbb{R}^{n}$  represents the location parameter,   $ \Sigma\in \R^{n\times n}$ is a positive semi-definite dispersion matrix, and  $\tau: \mathbb{R}_{+} \rightarrow \mathbb{R}$ is known as the characteristic generator.
	Such distributions are denoted by $ \mathrm{E}_{n}(\boldsymbol{\mu}, \Sigma, \tau).
	$ 
	Prominent examples of elliptical distributions include the normal and 
t-distributions, both of which are widely applied in risk management and financial modeling.
 
 For  a positive semi-definite matrix $\Sigma$,
	we write $\Sigma=(\sigma_{ij})_{n\times n}$, $\sigma_i^2=\sigma_{ii}$, and $\boldsymbol \sigma=(\sigma_1,\dots,\sigma_n)$, and
	define the constant
	\begin{equation*}\label{eq:k}k_\Sigma= \frac {\sum_{i=1}^n 
			\left(\mathbf{e}^\top_i \Sigma \mathbf{e}_i \right)^{1/2}} {\left( \mathbf{1}^\top \Sigma \mathbf{1}\right)^{1/2}  } 
	=\frac{\sum_{i=1}^n\sigma_{i} }{ \left(\sum_{i, j}^n \sigma_{ij}\right)^{1/2} }
	\in [1,\infty),\end{equation*}
where   $\mathbf{1}=(1,\dots,1)\in\R^n$ and  $ \mathbf e_{1},\dots, \mathbf e_{n}$ are the column vectors of the $n\times n$ identity matrix $I_n$. 

The representations of DQs based on VaR and ES for elliptical distributions can be explicitly derived, as shown in the following proposition. The proofs can be found in Proposition 2 of \cite{HLW23}. We define the \emph{superquantile transform} of a distribution $F$ with finite mean
	is  a distribution $\overline F$ with quantile function $\alpha\mapsto \ES_{1-\alpha}(X)$ for $p\in(0,1)$, where $X\sim F$.
\begin{proposition}
	\label{prop:comp_Dvar}
	Suppose that $\mathbf X \sim  \mathrm{E}_{n}(\boldsymbol{\mu}, \Sigma, \tau)$.  We have, for $\alpha \in (0,1)$,
	$${\rm DQ}_\alpha^{\VaR}(\mathbf X)=
		\frac{1-F  (k_\Sigma \VaR_{\alpha}(Y)  )}{\alpha},
~~
		{\rm DQ}_\alpha^{\ES}(\mathbf X)=
		\frac{1- \overline F (k_\Sigma \ES_{\alpha}(Y)  )}{\alpha},	
	$$
		where $ Y \sim \mathrm{E}_1(0,1,\tau)$ with  distribution function  $F$, and  $ \overline F$ is the superquantile transform of  $F$.
\end{proposition}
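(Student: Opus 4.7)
The plan rests on the closure property of elliptical families under linear combinations: for $\mathbf{X} \sim \mathrm{E}_n(\boldsymbol{\mu}, \Sigma, \tau)$ and any $\mathbf{a} \in \R^n$, one has $\mathbf{a}^\top \mathbf{X} \stackrel{d}{=} \mathbf{a}^\top \boldsymbol{\mu} + (\mathbf{a}^\top \Sigma \mathbf{a})^{1/2} Y$ with $Y \sim \mathrm{E}_1(0,1,\tau)$ having distribution $F$. Specializing to $\mathbf{a}=\mathbf{e}_i$ and $\mathbf{a}=\mathbf{1}$, both $X_i$ and $S$ are one-dimensional elliptical driven by the same standard variable $Y$. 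Combined with the translation invariance and positive homogeneity shared by $\VaR_\alpha$ and $\ES_\alpha$, this yields, for $\rho \in \{\VaR_\alpha, \ES_\alpha\}$, the identities $\rho(X_i) = \mu_i + \sigma_i \rho(Y)$ and $\rho(S) = \mathbf{1}^\top \boldsymbol{\mu} + (\mathbf{1}^\top \Sigma \mathbf{1})^{1/2} \rho(Y)$. Summing the marginal identities and invoking the definition of $k_\Sigma$ gives
\[
\sum_{i=1}^n \rho(X_i) = \mathbf{1}^\top \boldsymbol{\mu} + k_\Sigma (\mathbf{1}^\top \Sigma \mathbf{1})^{1/2} \rho(Y).
\]

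For the VaR formula, I would plug these two expressions into the alternative representation of Proposition \ref{prop:3}, namely $\DQVaR(\mathbf{X}) = \alpha^{-1} \p(S > \sum_{i=1}^n \VaR_\alpha(X_i))$. The location term $\mathbf{1}^\top \boldsymbol{\mu}$ cancels, and dividing through by $(\mathbf{1}^\top \Sigma \mathbf{1})^{1/2}$ reduces the event to $\{Y > k_\Sigma \VaR_\alpha(Y)\}$, whose probability is $1 - F(k_\Sigma \VaR_\alpha(Y))$; dividing by $\alpha$ produces the first identity.

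For the ES formula, I would work directly from Definition \ref{def:DQ}. Since $\beta \mapsto \ES_\beta(S)$ is continuous and strictly decreasing on $(0,1)$ whenever $S$ has a nondegenerate elliptical law, the critical level $\alpha^*$ is uniquely characterized by $\ES_{\alpha^*}(S) = \sum_{i=1}^n \ES_\alpha(X_i)$. Substituting the two elliptical identities again cancels the location and scale factors and reduces the equation to $\ES_{\alpha^*}(Y) = k_\Sigma \ES_\alpha(Y)$. Recognizing $\beta \mapsto \ES_\beta(Y)$ as $\beta \mapsto \overline{F}^{-1}(1-\beta)$, I invert to obtain $\alpha^* = 1 - \overline{F}(k_\Sigma \ES_\alpha(Y))$, and $\DQES(\mathbf{X}) = \alpha^*/\alpha$ delivers the second identity.

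The main obstacle is not analytic but bookkeeping: one must check that the solution $\alpha^*$ of $\ES_{\alpha^*}(Y) = k_\Sigma \ES_\alpha(Y)$ indeed lies in $(0,1)$, which follows from $k_\Sigma \ge 1$ together with strict monotonicity of the superquantile and the tacit premise $\sum_i \ES_\alpha(X_i) \le \esssup(S)$; one must also verify that $k_\Sigma \VaR_\alpha(Y)$ lies in the range of $F$ and handle the degenerate case $\mathbf{1}^\top \Sigma \mathbf{1} = 0$ separately, where $S$ is constant and both identities are trivial. Once these boundary issues are dispatched, the argument reduces to the direct substitutions outlined above.
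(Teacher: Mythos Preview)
Your argument is correct. The paper does not supply its own proof of this proposition; it simply cites Proposition~2 of \cite{HLW23}. Your route---exploiting the affine closure of elliptical families to write $X_i \stackrel{d}{=} \mu_i + \sigma_i Y$ and $S \stackrel{d}{=} \mathbf{1}^\top\boldsymbol{\mu} + (\mathbf{1}^\top\Sigma\mathbf{1})^{1/2}Y$, then feeding these into the probability formula of Proposition~\ref{prop:3} for the $\VaR$ case and into the defining equation $\ES_{\alpha^*}(S)=\sum_i\ES_\alpha(X_i)$ for the $\ES$ case---is the natural one and is almost certainly what the cited reference does. The cancellation of the location term and the reduction to the standardized variable $Y$ via $k_\Sigma$ are exactly right, as is your identification $\ES_\beta(Y)=\overline{F}^{-1}(1-\beta)$ from the paper's definition of the superquantile transform. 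The boundary checks you flag (that $\alpha^*\in(0,1)$ follows from $k_\Sigma\ge 1$ and strict monotonicity, and that the degenerate case $\mathbf{1}^\top\Sigma\mathbf{1}=0$ is trivial) are the only loose ends, and you have identified them correctly.
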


Next, we take a close look at the asymptotic normality of DQ  for the two most important elliptical distributions used in finance
and insurance:  multivariate normal distribution and the multivariate t-distribution.  The explicit formulas  provided in Proposition \ref{prop:comp_Dvar} are used to calculate the true values of the DQs. For numerical illustrations, we consider     an equicorrelated model parameterized by $r\in [0,1]$ and $n\in \N$,
	\begin{equation}\label{eq:matrix}
	\Sigma=(\sigma_{ij})_{n\times n},~~~\mbox{ where $\sigma_{ii}=1$ and $\sigma_{ij}=r$ for $i\ne j$}.
	\end{equation}
We take two  models $\mathbf{X} \sim\mathrm N(\boldsymbol{\mu},\Sigma)$ and $ \mathbf{Y}\sim \mathrm  t(\nu,\boldsymbol{\mu},\Sigma)$.  Note that the location $\boldsymbol{\mu}$ does not matter in computing DQ, and we can simply take $\boldsymbol \mu=\mathbf 0$. The default parameters are set as $r=0.3$, $n=5$, $\nu=3$  and $\alpha =0.1$ if not explained otherwise. We generate $N=5000$ samples once and repeat the simulation 2000 times.  
\begin{figure}[htb!]
\centering
 \includegraphics[width=16cm]{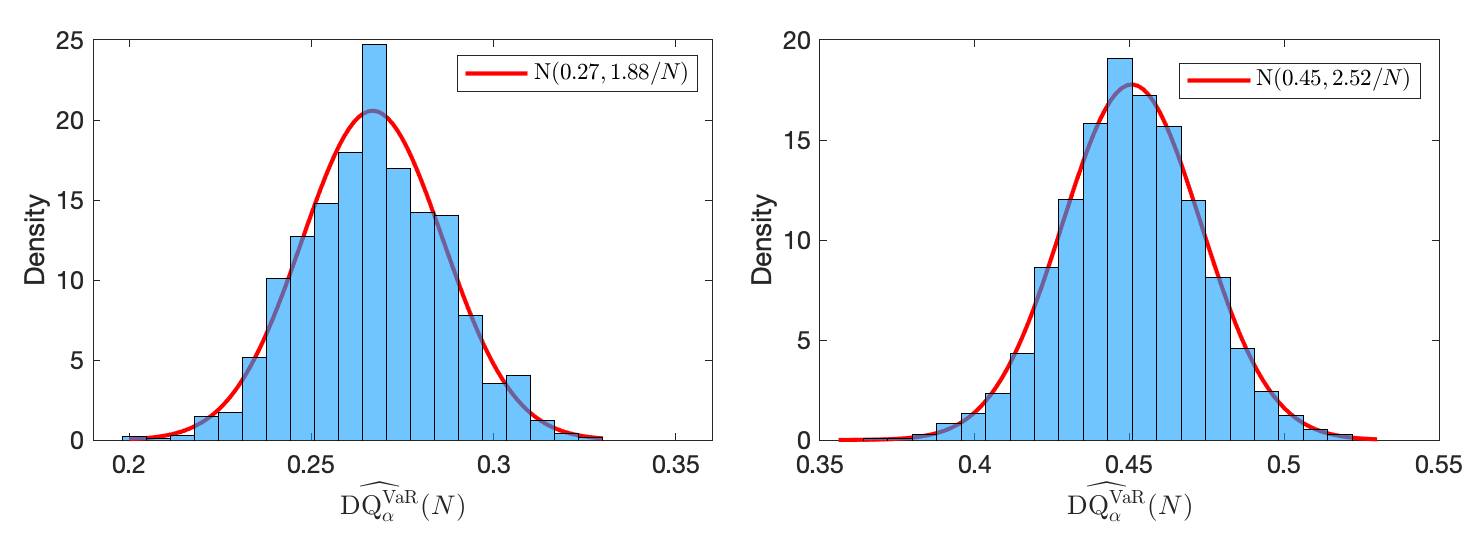}
 \captionsetup{font=small}
{\caption{   \small   Histogram of $\widehat{\mathrm{DQ}^{\VaR_\alpha}}(N)$  for $\mathbf X\sim  \mathrm N(\boldsymbol{\mu},\Sigma)$ (left panel) and $ \mathbf{Y}\sim \mathrm t(\nu,\boldsymbol{\mu},\Sigma)$ (right panel)}\label{fig:DQ_VaR}}
\end{figure}

\begin{figure}[htb!]
\centering
 \includegraphics[width=16cm]{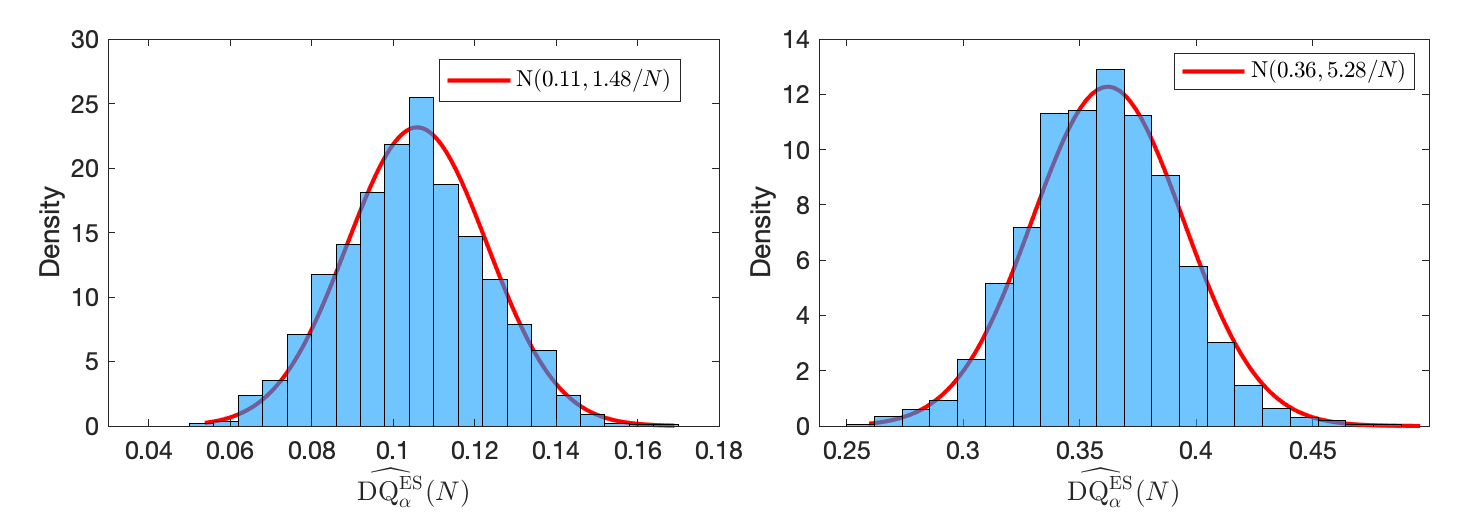}
 \captionsetup{font=small}
{\caption{   \small   Histogram of  $\widehat{\mathrm{DQ}^{\ES_\alpha}}(N)$ for $\mathbf X\sim  \mathrm N(\boldsymbol{\mu},\Sigma)$ (left panel) and $ \mathbf{Y}\sim \mathrm t(\nu,\boldsymbol{\mu},\Sigma)$ (right panel) }\label{fig:DQ_ES}}
\end{figure}

   As shown in  Figure \ref{fig:DQ_VaR} that,    the empirical estimates of  $\DQVaR$ matches quiet well with the density function of $\mathrm{N}(0.27,1.88/N)$ for  $\bf X\sim  \mathrm N(\boldsymbol{\mu},\Sigma)$,  and align closely with the density function of $\mathrm{N}(0.45,2.52/N)$ for $ \bf Y \sim \mathrm  t(\nu,\boldsymbol{\mu},\Sigma)$.
     Similarly, when $\bf X\sim  \mathrm N(\boldsymbol{\mu},\Sigma)$,   the empirical estimates of  $\DQES$ matches quiet well with the density function of  $\mathrm{N}(0.11,1.48/N)$, and  align closely with the density function of 
 $\mathrm{N}(0.36,5.28/N)$; see Figure \ref{fig:DQ_ES}.   However,  we do not expect the asymptotic variances of $\widehat{\mathrm{DQ}_\alpha^{\VaR}}(N)$ and $\widehat{\mathrm{DQ}_\alpha^{\ES}}(N)$ to have a consistent relationship in magnitude. As shown in the figures, the asymptotic variance of 
$\widehat{\mathrm{DQ}_\alpha^{\VaR}}(N)$ is smaller than that of $\widehat{\mathrm{DQ}_\alpha^{\ES}}(N)$  in the multivariate normal distribution case, but this relationship reverses when considering the multivariate t-distribution.
We find that the asymptotic variance of the multivariate t-distribution is larger than that of the multivariate normal distribution. This discrepancy can be attributed to the heavier tails of the t-distribution, which generally result in greater variability in the sample estimates.

 Next, we examine how the asymptotic variance $\sigma^2_\VaR$ and  $\sigma^2_\ES$, as derived in Theorems \ref{thm:2} and \ref{thm:3}, change with respect to   $\alpha$,  $r$ and  $n$ and $\nu$.  

\begin{figure}[htb!]
\centering
 \includegraphics[width=16cm]{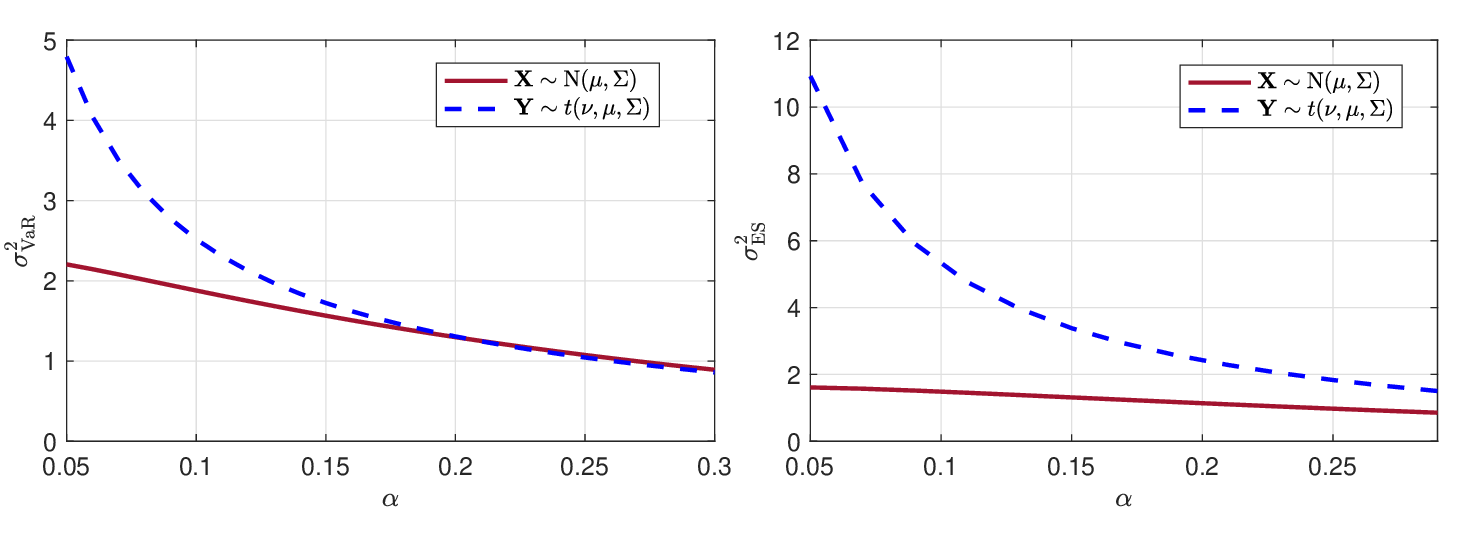}
 \captionsetup{font=small}
{\caption{   \small   Asymptotic variance    for $\alpha\in[0.05,0.3]$ with fixed $\nu=3$, $r=0.3$ and $n=5$}\label{fig:sigma_alpha}}
\end{figure}

In Figure \ref{fig:sigma_alpha}, we report  $\sigma^2_\VaR$ and  $\sigma^2_\ES$  for $\alpha\in[0.05,0.3]$ in the two models of $\mathbf X$ and $\mathbf Y$. The plot shows that both $\sigma^2_\VaR$ and  $\sigma^2_\ES$ decrease as $\alpha$ increases. This behavior aligns with the expectation that for smaller values of $\alpha$, fewer data points contribute to the estimation, resulting in higher variability.  
 Additionally, we note that the asymptotic variance of the multivariate t-distribution is larger than that of the multivariate normal distribution for both indices when $\alpha$ is relatively small, roughly for $\alpha<0.2$. 
\begin{figure}[htb!]
\centering
 \includegraphics[width=16cm]{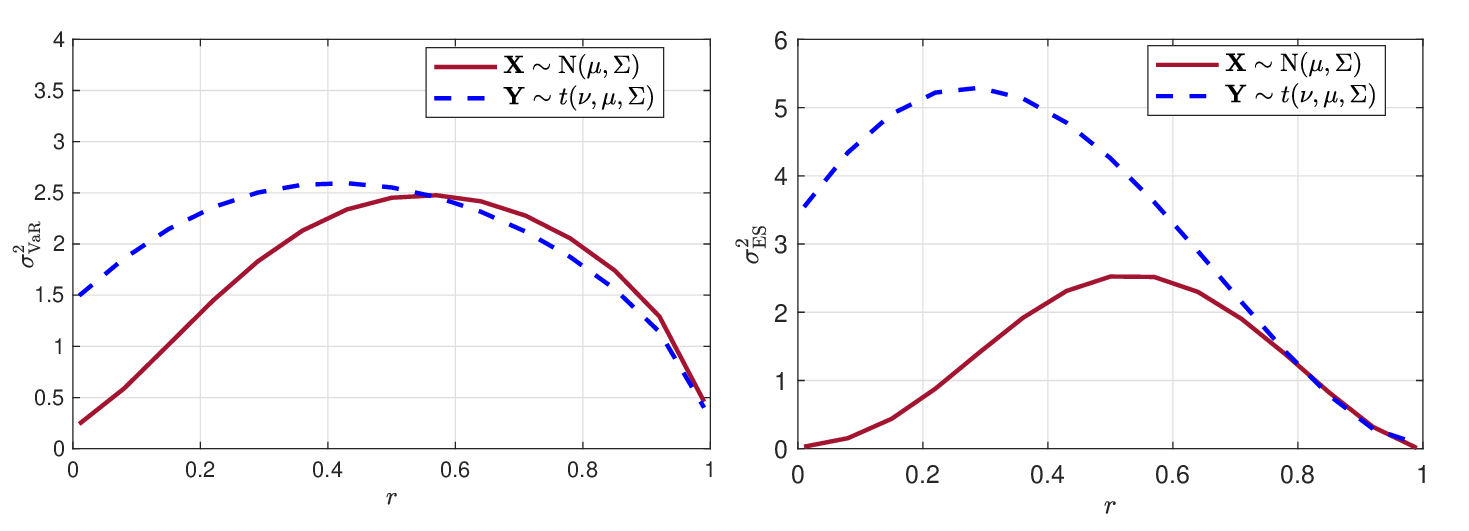}
 \captionsetup{font=small}
{\caption{   \small   Asymptotic variance   for $r\in[0.01,0.99]$ with fixed $\alpha=0.1$,  $\nu=3$ and  $n=5$}\label{fig:sigma_r}}
\end{figure}

In Figure \ref{fig:sigma_r}, we report how the asymptotic variance varies over  $r\in[0.1,0.99]$ in the two models. Both $\sigma^2_\VaR$ and $\sigma^2_\ES$ initially increase with $r$ and then decrease. This pattern reflects the combined effects of diversification and correlation among portfolio components.
When $r$ is small, the components are weakly correlated, resulting in smaller covariance contributions to $\Sigma_\VaR$ and $\Sigma_\ES$. As $r$ increases, stronger correlations amplify co-movements, raising the portfolio variance. When $r$ approaches 1, the portfolio behaves like a single asset, making its overall behavior more predictable from individual components.

\begin{figure}[htb!]
\centering
\includegraphics[width=16cm]{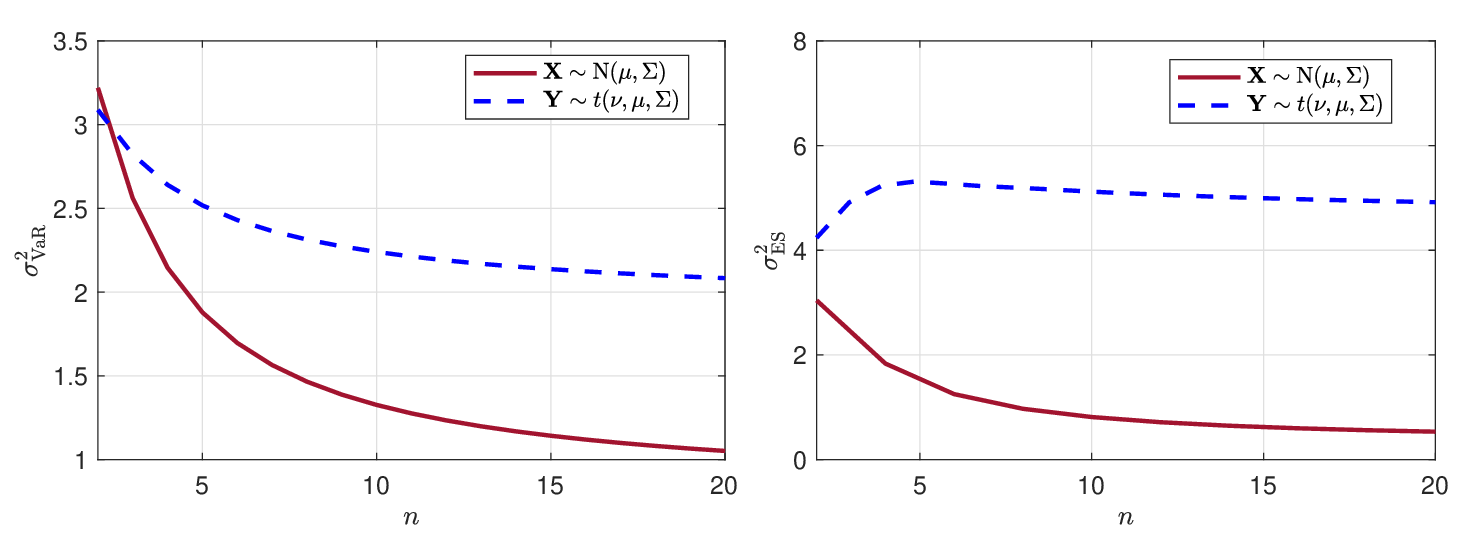}
 \captionsetup{font=small}
{\caption{   \small   Asymptotic variance    for $n\in[2,20]$ with fiexed $\alpha=0.1$, $\nu=0.3$ and $r=0.3$}\label{fig:sigma_stock}}
\end{figure}

In Figure \ref{fig:sigma_stock}, we report how the asymptotic variance changes with the number of assets $n \in [2, 20]$ under two models. For the Gaussian model $\mathbf{X} \sim \mathrm{N}(\boldsymbol{\mu}, \Sigma)$, both $\sigma^2_{\VaR}$ and $\sigma^2_{\ES}$ steadily decrease as $n$ increases. For the multivariate t-distribution $Y \sim \mathrm{t}(\nu, \boldsymbol{\mu}, \Sigma)$, the asymptotic variance $\sigma^2_{\ES}$ is not necessarily monotonic in $n$. However, once $n$ becomes moderately large ($n > 5$), a clear decreasing trend emerges. 


\begin{figure}[htb!]
\centering
\includegraphics[width=16cm]{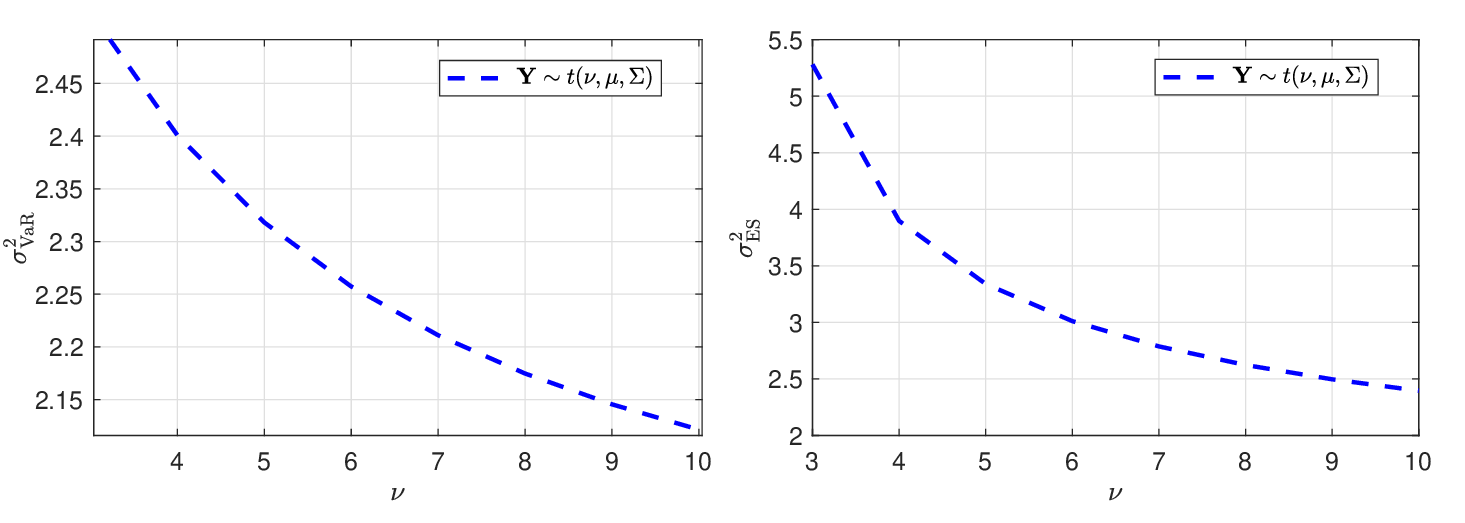}
 \captionsetup{font=small}
{\caption{   \small   Asymptotic variance  for $\nu\in[3,10]$ with fixed $\alpha=0.1$, $r=0.3$ and $n=5$}\label{fig:sigma_nu}}
\end{figure}

 Figure \ref{fig:sigma_nu} present the values of $\sigma^2_\VaR$ and  $\sigma^2_\ES$ for the t-models with varying $\nu$,  where $\nu\in[3,10]$. We observe a monotonic relation that both $\sigma^2_\VaR$ and  $\sigma^2_\ES$ are decreasing in $\nu$.  This is because the distribution's tails become lighter as $\nu$ increases, reducing the influence of extreme values and stabilizing the sample mean.  

In the following context, building on the results from Proposition \ref{prop:5}, we examine the asymptotic normality of $\widehat
{\mathrm{DR}^{\VaR_\alpha}}(N)$ and $\widehat{\mathrm{DR}^{\ES_\alpha}}(N)$
for the same models of $\mathbf{X} \sim\mathrm N(\boldsymbol{\mu},\Sigma)$ and $ \mathbf{Y}\sim \mathrm  t(\nu,\boldsymbol{\mu},\Sigma)$, using the same parameter settings as in Figures \ref{fig:DQ_VaR} and \ref{fig:DQ_ES}. Recall from Section 5.1 of \cite{HLW23} that for a centered elliptical distribution, both  ${\rm DR}^{\VaR_\alpha}$ and ${\rm DR}^{\ES_\alpha}$  do not depend on $\tau$, $\alpha$ or whether the risk measure is VaR or ES,  and are always equal to $1/k_{\Sigma}$.
	More precisely, for $\mathbf X \sim  \mathrm{E}_{n}(\mathbf{0}, \Sigma, \tau)$ and $\alpha\in(0,1/2)$, we have
	\begin{equation*}
		\label{eq:DRellip}
		{\rm DR}^{\VaR_\alpha}(\mathbf X)={\rm DR}^{\ES_\alpha}(\mathbf X)= {1}/{k_\Sigma}.
	\end{equation*}

\begin{figure}[htb!]
\centering
 \includegraphics[width=16cm]{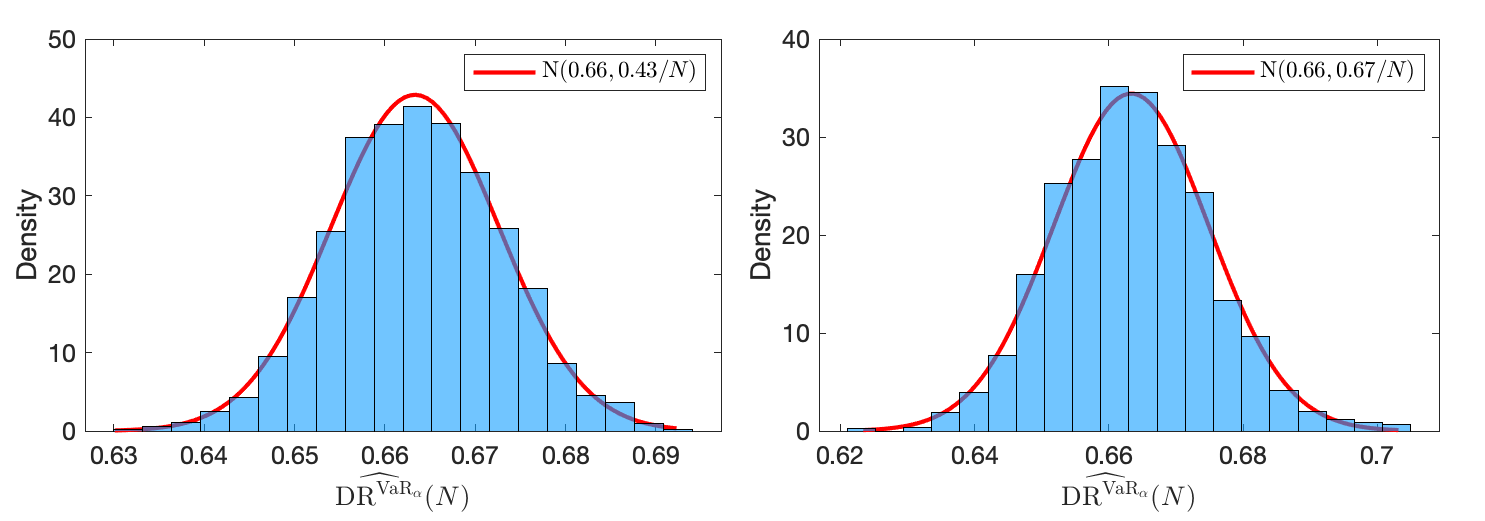}
 \captionsetup{font=small}
{\caption{   \small   Histogram of $\widehat{\mathrm{DR}^{\VaR_\alpha}}(N)$  for $\mathbf X\sim  \mathrm N(\boldsymbol{\mu},\Sigma)$ (left panel) and $ \mathbf{Y}\sim \mathrm t(\nu,\boldsymbol{\mu},\Sigma)$ (right panel)}\label{fig:DR_VaR}}
\end{figure}

\begin{figure}[htb!]
\centering
 \includegraphics[width=16cm]{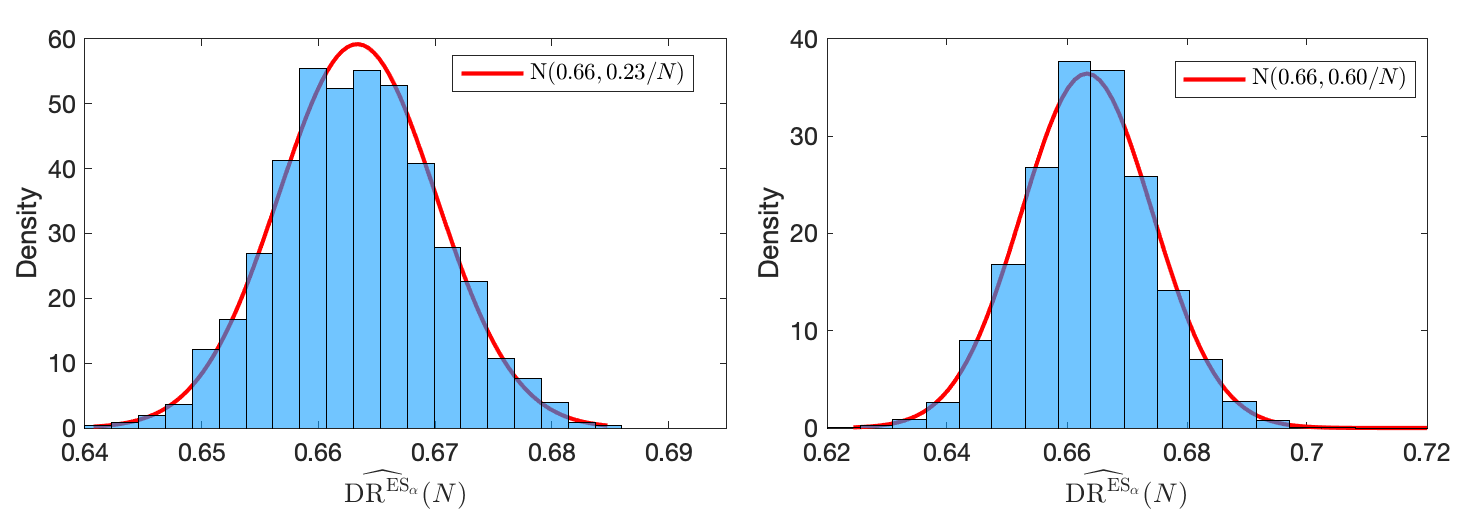}
 \captionsetup{font=small}
{\caption{   \small   Histogram of  $\widehat{\mathrm{DR}_\alpha^{\ES}}(N)$ for $\mathbf X\sim  \mathrm N(\boldsymbol{\mu},\Sigma)$ (left panel) and $ \mathbf{Y}\sim \mathrm t(\nu,\boldsymbol{\mu},\Sigma)$ (right panel) }\label{fig:DR_ES}}
\end{figure}
  Figure \ref{fig:DR_VaR} shows that the empirical estimates of  $\mathrm{DR}^{\VaR_\alpha}$ align quite well with the density function of  $\mathrm{N}(0.66,0.43/N)$ for  $\bf X\sim  \mathrm N(\boldsymbol{\mu},\Sigma)$,  and  closely match the density function of $\mathrm{N}(0.66,0.67/N)$ for $ \bf Y \sim \mathrm  t(\nu,\boldsymbol{\mu},\Sigma)$.
     Similarly,   Figure \ref{fig:DR_ES} shows that  the empirical estimates of  $\mathrm{DR}^{\ES_\alpha}$ matches quiet well with the density function of  $\mathrm{N}(0.66,0.23/N)$ for   $\bf X\sim  \mathrm N(\boldsymbol{\mu},\Sigma)$,  and  align closely with the density function of 
 $\mathrm{N}(0.66,0.60/N)$ for $ \bf Y \sim \mathrm  t(\nu,\boldsymbol{\mu},\Sigma)$. 
We also observe that the asymptotic variance of DR under both models is smaller than that under DQ, which appears to be a statistically favorable property. However, it is important to note that DR is not location-invariant. As discussed in Section \ref{sec:6}, the empirical estimator for DR based on a cash-additive risk measure fails when considering a centralized portfolio loss  $\bX-(\phi(X_1), \dots, \phi(X_n))$  as the variance can become explosive in such cases.
 
 Next, we present $\widetilde{\sigma}^2_\VaR$ and $\widetilde{\sigma}^2_\ES$  for the model  $\mathbf{X} - (\phi(X_1), \dots, \phi(X_n)) + \epsilon$  with different $\epsilon$ where $\mathbf X\sim \mathrm N(\boldsymbol{\mu},\Sigma)$ in Figure \ref{fig:DR_VaR_n_eps}. We observe that the asymptotic variance of both $\widehat
{\mathrm{DR}^{\VaR_\alpha}}(N)$ and $\widehat{\mathrm{DR}^{\ES_\alpha}}(N)$ can be very large when $\epsilon$ is approach to 0. However, this issue does not arise for the DQ estimator, as it is location-invariant.

\section{Real data analysis}\label{sec:real_data}
\begin{figure}[htb!]
\centering
 \includegraphics[width=16cm]{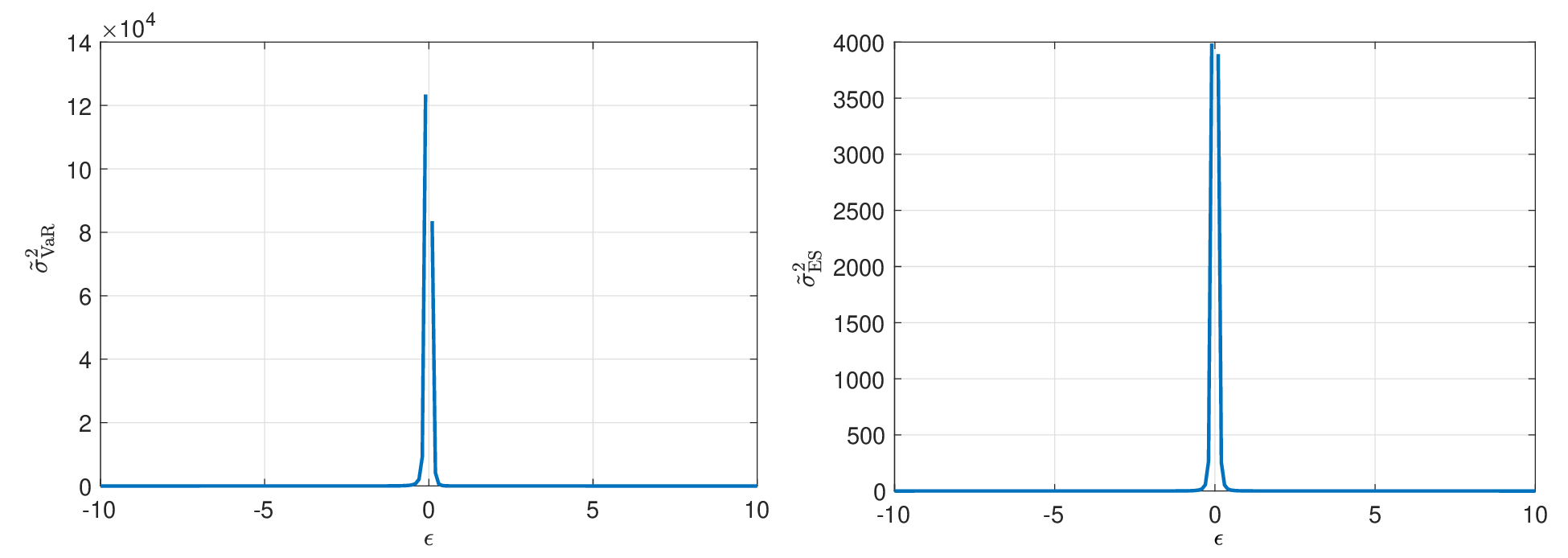}
 \captionsetup{font=small}
{\caption{   \small   The values of $\widetilde \sigma^2_\VaR$ and $\widetilde \sigma^2_\ES$  for $\epsilon\in[-10,10]$ with $\alpha=0.1$, $r=0.3$ and $n=5$}\label{fig:DR_VaR_n_eps}}
\end{figure}

The proposed nonparametric estimator is now investigated for financial data. Let $X_{t,j}$ denote the log-loss of asset $j$ on day $t$, where $t=1,\dots,N$ and $j=1,\dots,n$.  
For each day $t \ge 1$ and each asset $j$, we estimate $\DQVaR$ and $\DQES$ using the empirical estimator in Section~\ref{sec:4}, based on a rolling window of the most recent $m$ observations $\{X_{s,j}\}_{s=t-m}^{t-1}$. 
For the days $t < m$, the rolling window can be adjusted to include available past observations.

To construct confidence intervals for $\DQVaR$ and $\DQES$, one could, in principle employ a blockwise bootstrap for mixing sequences. However, since quantiles are involved, such nonparametric inference is often inefficient due to the limited number of available blocks. A more effective approach is to model the temporal dependence of each series $\{X_{s,j}\}_{s=t-m}^{t-1}$ using a suitable time series model and then apply a residual-based bootstrap, as suggested by \cite{APWY19} and \cite{LW23}.

Specifically, for each asset $j$, we fit an AR(1)--GARCH(1,1) model with Student-$t$ innovations on the rolling window via quasi-maximum likelihood estimation (see, e.g., \cite{FZ04}):
\begin{align}\label{AG}
X_{s,j} = c_j + \phi_j X_{s-1,j} + \varepsilon_{s,j}, ~~
\varepsilon_{s,j} = \sigma_{s,j} z_{s,j}, ~~
\sigma_{s,j}^2 = \omega_j + \alpha_j \varepsilon_{s-1,j}^2 + \beta_j \sigma_{s-1,j}^2,
\end{align}
where $z_{s,j}\sim t_{\nu_j}(0,1)$.  
The parameter estimates are
$
\widehat{\boldsymbol{\theta}}_j = (\widehat{c}_j, \widehat{\phi}_j, \widehat{\omega}_j, \widehat{\alpha}_j, \widehat{\beta}_j, \widehat{\nu}_j).
$
Standardized residuals are
$$
\widehat{z}_{s,j} = \frac{X_{s,j} - \widehat{c}_j - \widehat{\phi}_j X_{s-1,j}}{\widehat{\sigma}_{s,j}}, 
\quad s = t-m, \dots, t-1,
$$
with $\widehat{\sigma}_{s,j}^2$ initialized at the unconditional variance if $\widehat{\alpha}_j+\widehat{\beta}_j < 1$, and at the sample variance otherwise.  
Collecting standardized residuals across assets yields
$
\widehat{\boldsymbol{z}}_s = (\widehat{z}_{s,1}, \dots, \widehat{z}_{s,n}).
$ 

To generate bootstrap samples while preserving cross-sectional dependence, we resample with replacement from the standardized residuals in the rolling window, $\{\widehat{\boldsymbol{z}}_s\}$, using the same resampled index for all assets. 
Specifically, for each bootstrap replication, we draw an index vector $\mathcal{I} = (i_1, \dots, i_m)$ with replacement from the rolling window, and then set 
$\widehat{\boldsymbol{z}}^*_{k} = \widehat{\boldsymbol{z}}_{i_k}$ for $k=1,\dots,m$. For each asset $j$, we reconstruct the bootstrap paths ${X^\star_{s,j}}$ by iterating model \eqref{AG} with the estimated parameters $\widehat{\boldsymbol{\theta}}_j$. Using these bootstrap samples, we compute the bootstrapped estimators $\widehat{\DQVaR}^*(m)$ and $\widehat{\DQES}^*(m)$.\footnote{The superscript $*$ denotes quantities generated in the bootstrap procedure rather than observed from the original data.} Repeating this procedure yields a collection of simulated $\widehat{\DQVaR}^*(m)$ and $\widehat{\DQES}^*(m)$ values. The 95\% percentile-based bootstrap confidence interval (CI) is then obtained from the 2.5\% and 97.5\% quantiles of these simulated estimators.

We construct the portfolio by selecting the largest stocks by market capitalization in 2010 from different S\&P 500 sectors: XOM (Energy), MSFT (Information Technology), BRK/B (Financials), WMT (Consumer Staples), and JNJ (Health Care). Using daily log-loss data from January 4, 2010, to August 29, 2025 (3,939 observations), we compute  DQ on a monthly basis, using a rolling window of $m=500$ trading days  (approximately 21 trading days per month), and assess statistical reliability via a residual-based bootstrap with 500 replications.

\begin{figure}[htb!]
\centering
 \includegraphics[width=14cm]{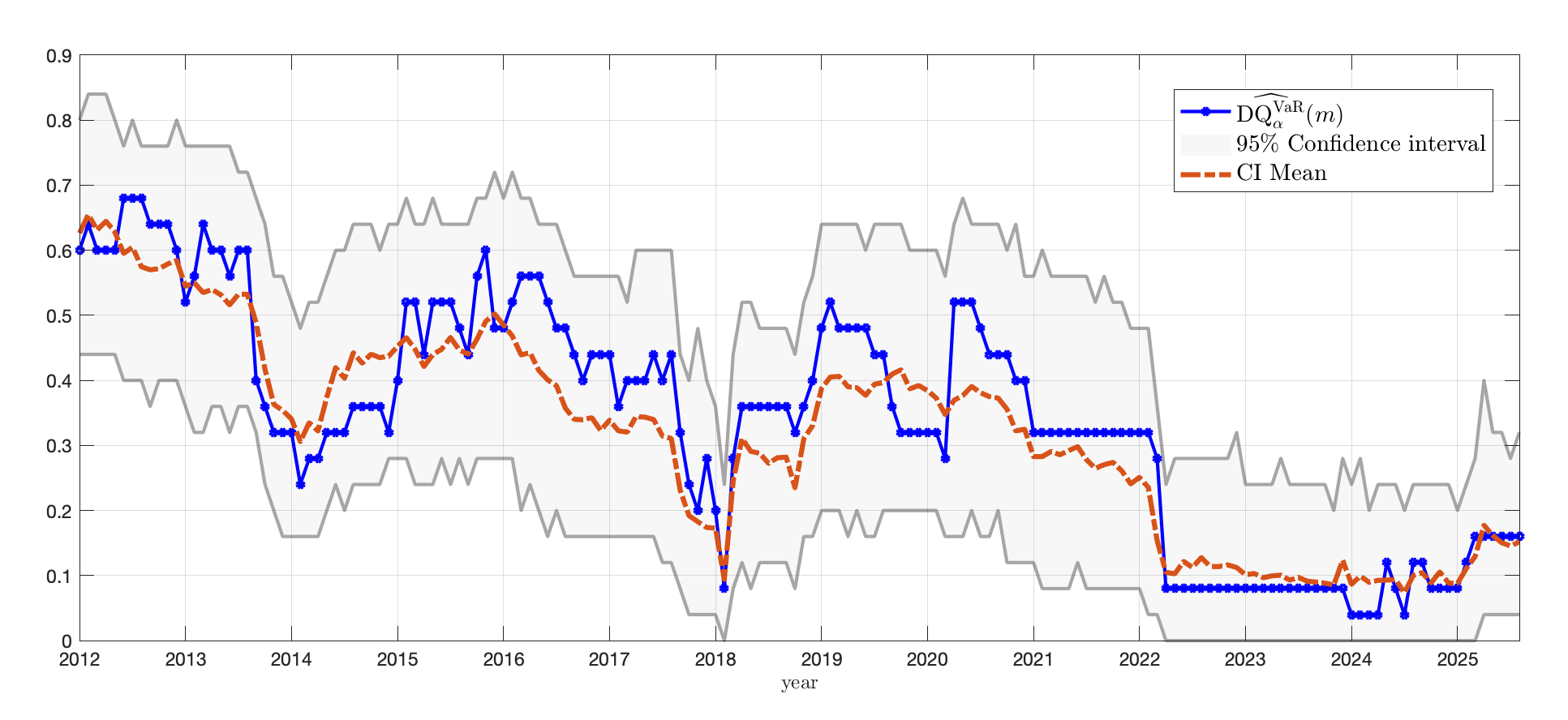}
 \captionsetup{font=small}
{\caption{   \small Bootstrap confidence bands for $\DQVaR$ with $\alpha=0.05$}\label{fig:DQ_VaR_0.05}}
\end{figure}

\begin{figure}[htb!]
\centering
 \includegraphics[width=14cm]{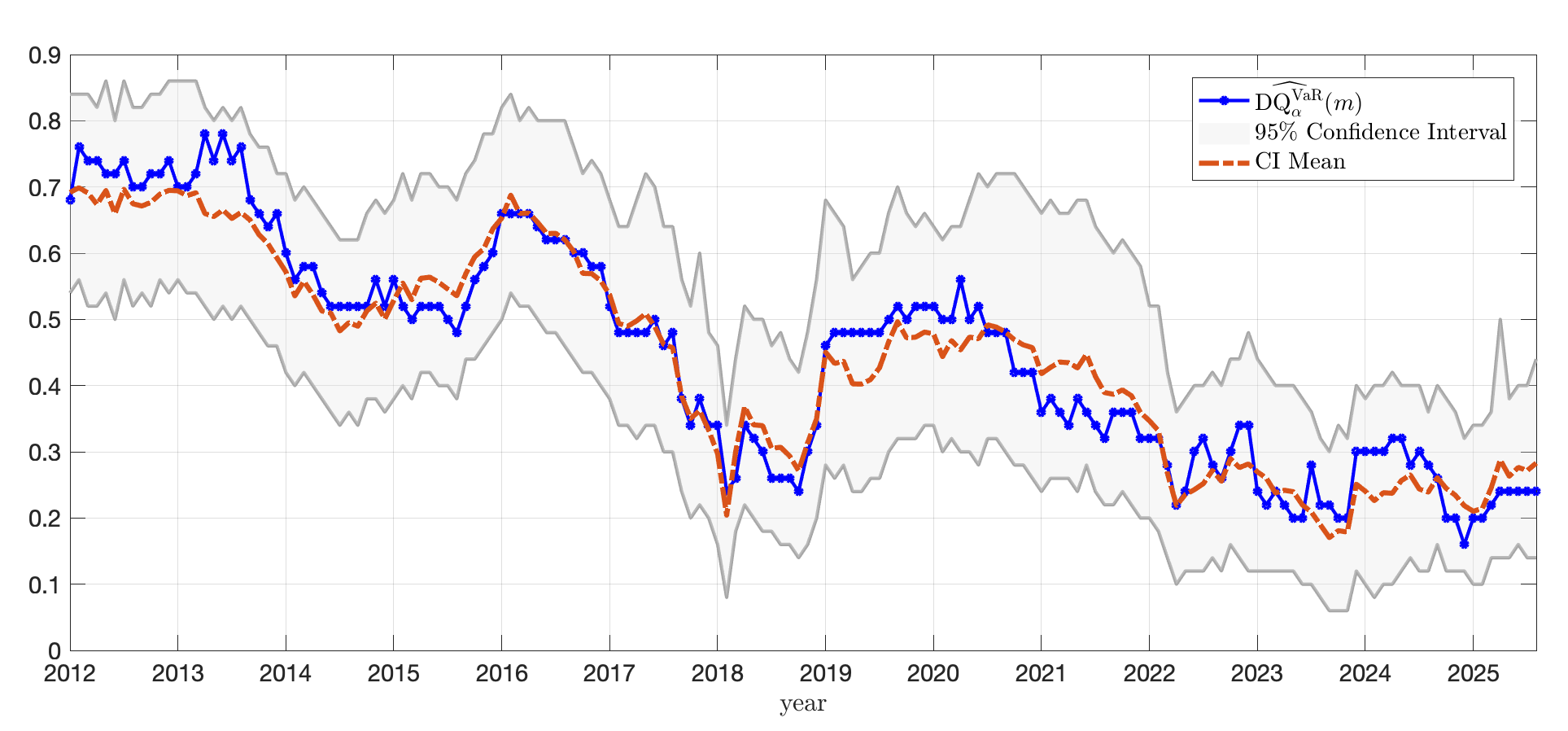}
 \captionsetup{font=small}
{\caption{   \small    \small Bootstrap confidence bands for $\DQVaR$ with $\alpha=0.1$}\label{fig:DQ_VaR_0.1}}
\end{figure}
Figures~\ref{fig:DQ_VaR_0.05} and \ref{fig:DQ_VaR_0.1} show the 95\% CIs, bootstrap means, and empirical estimates of $\DQVaR$ based on 500 trading days at $\alpha=0.05$ and $\alpha=0.1$. The bootstrap means lie near the center of the CIs, suggesting a roughly symmetric distribution of the bootstrap estimates. Moreover, the empirical estimates align closely with the bootstrap means for both $\mathrm{DQ}^{\VaR}_{0.05}$ and $\mathrm{DQ}^{\VaR}_{0.1}$, indicating that the estimator does not systematically over- or understate the underlying risk measure. These results underscore the estimator’s combined strengths of statistical reliability and economic interpretability, supporting its use in portfolio allocation and risk assessment.

We also compute the variance by averaging the bootstrap variances across replications. At $\alpha = 0.05$, the variance is 0.0096, while at $\alpha = 0.1$ it decreases to 0.0066. This pattern is consistent with Figure \ref{fig:sigma_alpha}, as fewer observations contribute to the estimation of $\DQVaR$ when $\alpha$ is small. Furthermore, during the COVID-19 period, the confidence intervals are noticeably wider, reflecting heightened volatility and more extreme events that increase the variability of the estimates.

\begin{figure}[htb!]
\centering
 \includegraphics[width=14cm]{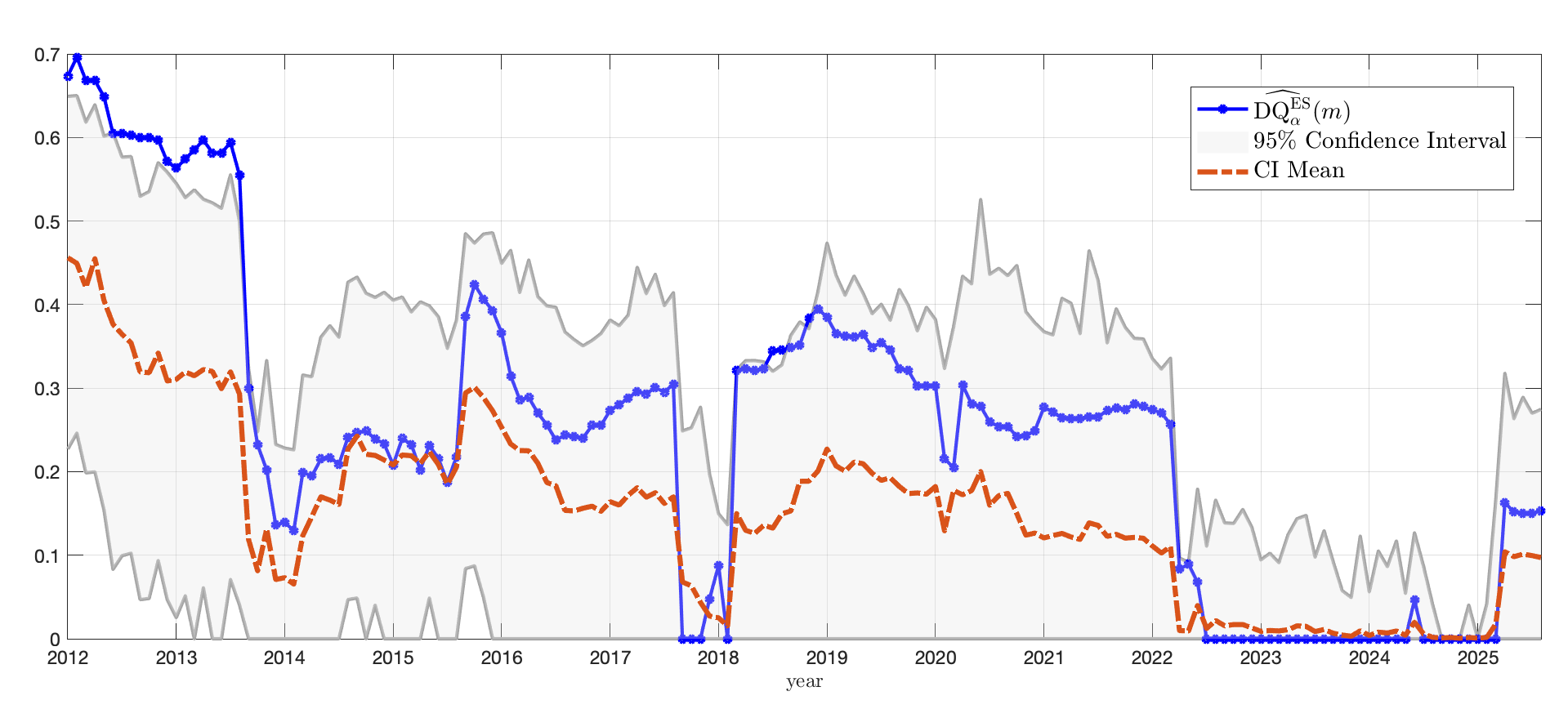}
 \captionsetup{font=small}
{\caption{   \small    Bootstrap confidence bands for $\DQES$ with $\alpha=0.05$}\label{fig:DQ_ES_0.05}}
\end{figure}

\begin{figure}[htb!]
\centering
 \includegraphics[width=14cm]{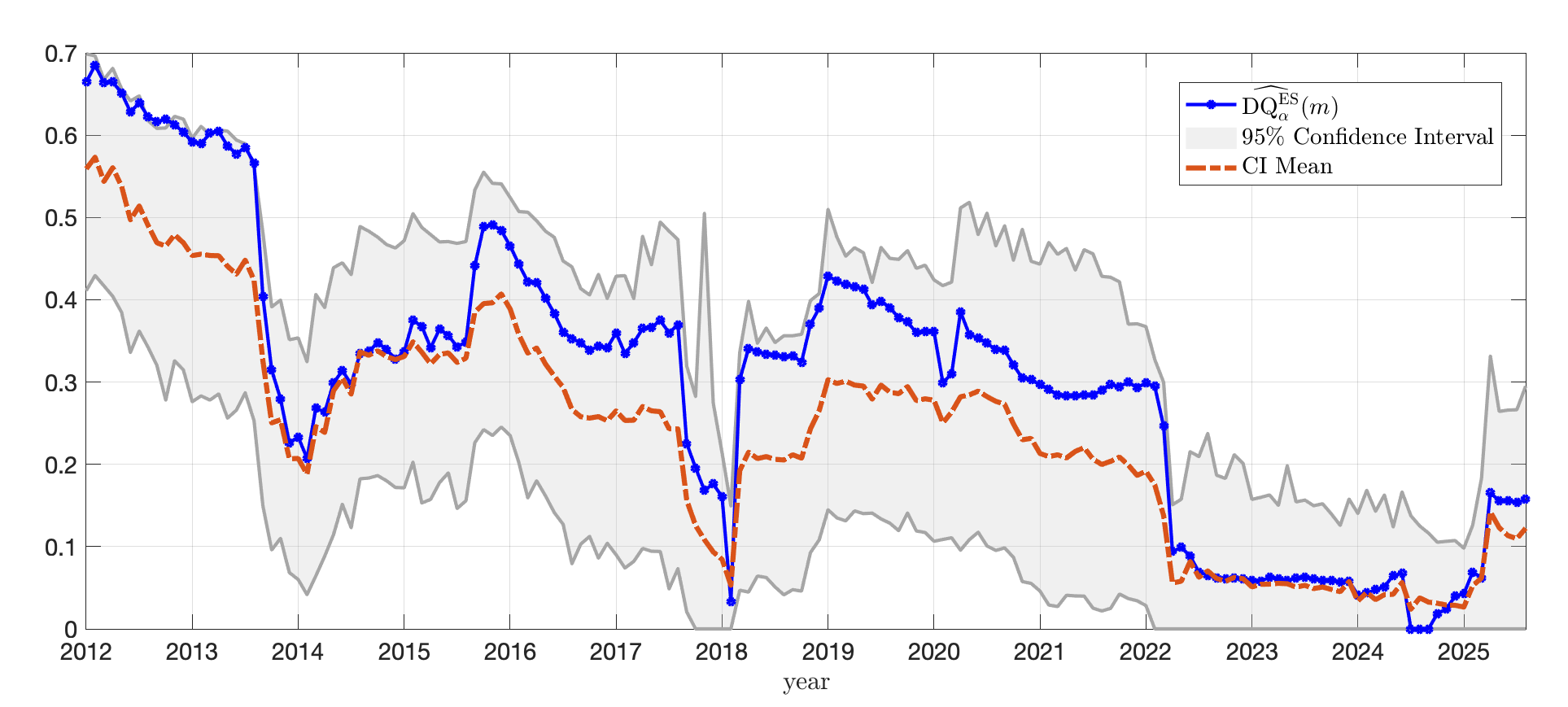}
 \captionsetup{font=small}
{\caption{   \small   Bootstrap confidence bands for $\DQES$ with $\alpha=0.1$}\label{fig:DQ_ES_0.1}}
\end{figure}
Figures~\ref{fig:DQ_ES_0.05} and \ref{fig:DQ_ES_0.1} show the 95\% CIs, bootstrap means, and empirical estimates of $\DQES$ based on 500 trading days at $\alpha=0.05$ and $\alpha=0.1$. Similar to the $\DQVaR$ case, the bootstrap mean lies near the center of the CIs, indicating a roughly symmetric distribution of the bootstrap estimates. The CIs are relatively wider at $\alpha=0.05$ due to limited tail data, with the lower bound occasionally reaching zero. Between 2012–2014, some empirical estimates fall outside their bootstrap CIs, and several lie close to the upper bound, especially during periods of weak diversification as reflected by $\DQES$. This arises because the empirical ES incorporates all extreme past losses, which tends to overestimate tail risk compared with the simulated estimator; the effect is more pronounced on aggregated losses when the market is less diversified.


We further compute the variance of $\DQES$ across bootstrap replications. For $\alpha=0.05$, the variance is 0.0089, while for $\alpha=0.1$, it decreases to 0.0060. This confirms our observation from Figure \ref{fig:sigma_alpha} that the variability of $\DQES$ declines as $\alpha$ increases.

\section{Conclusion}\label{sec:8}
This paper investigates the asymptotic behavior of empirical DQ estimators when the underlying risk measure is law-invariant, with a particular emphasis on VaR and ES. We establish both consistency and asymptotic normality for empirical $\DQVaR$ and $\DQES$ estimators under i.i.d.~and $\alpha$-mixing settings, which facilitate the application of DQ in realistic financial environments where dependence structures and volatility clustering are present. Compared with the estimation of DR, our DQ is more robust in location. We provide an example with real-world financial data to illustrate the confidence intervals of DQ estimators through AR-GARCH modeling and residual-based bootstrap inference—demonstrates.



Looking ahead, several promising directions remain for future research. One avenue is to integrate DQ as a constraint in portfolio optimization. Such an approach would not only balance risk minimization and return maximization but also explicitly ensure a desirable level of diversification, thereby offering a more comprehensive decision-making framework. Another avenue lies in extending the axiomatic theory of DQ. The current axiomatic framework is built on the premise that the index is defined with respect to a given class of parametric risk measures. A natural and challenging question is whether DQ can be characterized without relying on any pre-specified risk measure. Developing such a more general axiomatization would not only deepen our understanding of DQ, but also broaden its applicability in risk measurement and portfolio analysis.

    \subsection*{Acknowledgments}
 We thank  Junyi Guo and Ruodu Wang 
for helpful comments and discussions on an earlier version of the paper. Xia Han is supported by the National Natural Science Foundation of China (Grant Nos. 12301604, 12371471, and 12471449). Liyuan Lin is supported by the Department Seed Grant of Monash University (Grant No.~1752540) and 2025 ECA Research Grant of Monash University (Grant No.~1750890).

\appendix
\renewcommand{\thelemma}{A.\arabic{lemma}}
\renewcommand{\theequation}{A.\arabic{equation}}

\section{A further  discussion of DQ based on expectiles}\label{App:A}
In addition to VaR and ES, expectiles have emerged as an important class of risk measures. Given their significance, \cite{HLWW24} extends the DQ framework by incorporating expectiles, thereby broadening the toolkit available for risk management beyond VaR and ES.   In this section, we will further examine  the asymptotic normality of DQ based on expectiles.  

The expectile at a given confident level $\alpha\in(0,1)$ for a loss random variable is the unique minimizer of an asymmetric quadratic loss. Mathematically, the expectile   of a loss random variable $X\in L^2$  at a confidence level $\alpha \in(0,1)$, denoted by $\ex_{\alpha}(X)$, is defined as the following minimizer:  \begin{equation*}\label{eq:expectile}
\ex_{\alpha}(X)=\argmin_{x \in \mathbb{R}}   \left\{(1-\alpha) \mathbb{E}\left[(X-x)_{+}^{2}\right]+\alpha \mathbb{E}\left[(X-x)_{-}^{2}\right]\right\}
\end{equation*}
where $(x)_+=\max\{0,x\}$ and  $(x)_+=\max\{0,-x\}$.
 By the first-order condition, $\ex_{\alpha}(X)$  is equal to the unique number $t$   satisfying 
\begin{equation}\label{eq:ex0} (1-\alpha) \mathbb{E}\left[(X-t)_{+}\right]=\alpha \mathbb{E}\left[(X-t)_{-}\right].
\end{equation}

The equation \eqref{eq:ex0} is well-defined for each $X \in L^1$, which serves as the natural domain for expectiles. We take \eqref{eq:ex0} as the definition of expectiles, and let $\X=L^1$ in the following context. 

Let $\bX^{(1)}, \dots, \bX^{(N)}$ be i.i.d.~samples of $\bX=(X_1, \dots, X_n)$ where $\bX^{(k)}=(X_1^{(k)}, \dots, X_n^{(k)})$ for $k \in [N]$. For $i\in [n]$, let  $\widehat x^{\ex_\alpha}_i$ be the empirical estimators for  $\ex_\alpha(X_i)$.
 For a given $\alpha \in(0,1/2)$,  the empirical estimator for $\DQex$ is given by 
\begin{equation*}
\widehat \DQex(N) =\frac{1}{\alpha}
\frac{\sum_{k=1}^N\left[ (S^{(k)}-t) _+\right]}{\sum_{k=1}^N\left[\left| (S^{(k)}-t) \right|\right]},\mbox{~where }S^{(k)}=\sum_{i=1}^n X^{(k)}_i \mbox{~and~}t=\sum_{i=1}^n \widehat x^{\ex_\alpha}_i.
\end{equation*}
For empirical estimators of expectile-based risk measures, we refer, for example, to \cite{KZ17} and \cite{MBL21}.

For a random variable $X\sim F$, we define  \begin{equation}\label{eq:tildeF}
 \widetilde{F}(y) = \frac{y F(y) - \int_{-\infty}^{y} x \, \d F(x)}{2 \left( y F(y) - \int_{-\infty}^{y} x \, \d F(x) \right) + \mathbb{E}[X] - y}. \end{equation}
  It is easy to check that $\widetilde F$ is a continuous distribution function that strictly increases its support. In this section, the tilde notation is used consistently in the following context, with each function following a similar transformation  as defined by \eqref{eq:tildeF}.
 
 Recall that $S\sim G$. Theorem 1 of \cite{HLWW24} shows that  ${\rm DQ}^{\ex}_\alpha (\mathbf X)$ can also be computed by
\begin{equation}\label{eq:DQ_ex}
\mathrm{DQ}^{\ex}_\alpha (\mathbf{X})=\frac{1-\widetilde{G}\left(\sum_{i=1}^n \ex_\alpha(X_i)\right) }{\alpha},~~~\mathbf X\in \X^n.\end{equation}
 Thus, \eqref{eq:DQ_ex} can be rewritten as 
$$
\DQex (\mathbf X) =\frac{1}{\alpha} \mathbb Q\left(\sum_{i=1}^{n}X_i> \sum_{i=1}^n \ex_{\alpha}(X_i)\right),~~~\mathbf X\in \X^n, 
$$
for some probability measure $\mathbb{Q}$,
which reveals  that $\mathrm{DQ}^{\ex}_\alpha(\mathbf{X})$ shares a structural similarity with $\mathrm{DQ}^{\VaR}_\alpha(\mathbf{X})$ given by \eqref{eq:DQ_VaR}.   While $\widetilde{G}$ is indeed a distribution function with a positive density function, the estimator $\widetilde{\widehat{G}^N}$, constructed from  i.i.d.~samples of $X_i$ for $i=1,\dots,n$, does not serve as  an empirical distribution function for  $\widetilde{G}$. This discrepancy arises because $\widetilde{G}$ represents the distribution of a distinct random variable; hence, an empirical distribution function for $\widetilde{G}$ would need to be derived from samples of that specific variable.  Nonetheless, the following proposition establishes a result analogous to Bahadur’s representation.

\begin{lemma}
\label{lemma:ex}
Let $X^{(1)},\dots,X^{(N)}$ be an i.i.d.~sample from $X\in L^1$ with $F$ as distribution function of $X$, and ${F}^N$ be the empirical distribution based on $X^{(1)},\dots,X^{(N)}$. 
Let $\widetilde{  F^N}$\footnote{We simplify the notation by using $\widetilde{  F^N}$  instead of $\widetilde{ \widehat F^N}$.}  and $\widetilde{F}$ be defined as in \eqref{eq:tildeF}, with  the positive density functions $\widetilde{f^N}$ and $\widetilde f$.
  Then as $k\to\infty$, we have
$$
\widetilde{  F^N}^{-1}(p)=\widetilde{F}^{-1}(p)+\frac{p-\widetilde{  F^N}(\widetilde{F}^{-1}(p))}{\widetilde f(\widetilde{F}^{-1}(p))}+\smallO_\p\left(\frac{1}{\sqrt{N}}\right),~~~~p\in(0,1).
$$
\end{lemma}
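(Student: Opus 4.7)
The natural first step is to recognise a clean representation of $\widetilde F$ and $\widetilde{F^N}$ that makes the required expansion tractable. Since $yF(y)-\int_{-\infty}^{y}x\,\d F(x)=\E[(y-X)_+]$ and $\E[X]-y+2(yF(y)-\int_{-\infty}^{y}x\,\d F(x))=\E[|X-y|]$, one has $\widetilde F(y)=U(y)/V(y)$ with $U(y)=\E[(y-X)_+]$ and $V(y)=\E[|X-y|]$, and analogously $\widetilde{F^N}(y)=U_N(y)/V_N(y)$ with $U_N(y)=N^{-1}\sum_k (y-X^{(k)})_+$ and $V_N(y)=N^{-1}\sum_k |X^{(k)}-y|$. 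Differentiating, $U'(y)=F(y)$ and $V'(y)=2F(y)-1$, so $\widetilde f$ is continuous and strictly positive. The point is that both the limit and its empirical version are smooth functions of empirical means of $1$-Lipschitz functionals of the sample, which avoids the usual rough step-function behaviour.

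Let $q=\widetilde F^{-1}(p)$ and $\widehat q=\widetilde{F^N}^{-1}(p)$. The proof proceeds through three standard Bahadur-type steps. First I would establish the coarse rate $\widehat q-q=O_\p(N^{-1/2})$: pointwise $\widetilde{F^N}(q)-\widetilde F(q)=O_\p(N^{-1/2})$ by the ordinary CLT applied to the smooth ratio $U_N/V_N$ via the delta method, and then the mean value theorem combined with the lower bound of $\widetilde f$ near $q$ converts this into the same rate for $\widehat q-q$. Secondly, a one-term Taylor expansion of $\widetilde F$ at $q$ gives
\begin{equation*}
\widetilde f(q)\bigl(\widehat q-q\bigr)=\widetilde F(\widehat q)-p+O_\p(N^{-1})=\bigl(\widetilde F-\widetilde{F^N}\bigr)(\widehat q)+O_\p(N^{-1}),
\end{equation*}
using $\widetilde{F^N}(\widehat q)=p$ and $\widehat q-q=O_\p(N^{-1/2})$ together with boundedness of $\widetilde f$ near $q$.

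The crucial third step is the equicontinuity estimate
\begin{equation*}
\bigl(\widetilde{F^N}-\widetilde F\bigr)(\widehat q)-\bigl(\widetilde{F^N}-\widetilde F\bigr)(q)=\smallO_\p\bigl(N^{-1/2}\bigr),
\end{equation*}
which together with step two yields the desired representation. To obtain it, I would expand the ratio
\begin{equation*}
\widetilde{F^N}(y)-\widetilde F(y)=\frac{\bigl(U_N(y)-U(y)\bigr)V(y)-U(y)\bigl(V_N(y)-V(y)\bigr)}{V_N(y)V(y)},
\end{equation*}
so that only the centred empirical processes $U_N-U$ and $V_N-V$ indexed by $y$ need to be controlled. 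Since $\{x\mapsto(y-x)_+:y\in\R\}$ and $\{x\mapsto|x-y|:y\in\R\}$ are uniformly Lipschitz in $y$ and form Donsker classes (for $X\in L^1$ with finite mean), the oscillation moduli of $\sqrt{N}(U_N-U)$ and $\sqrt{N}(V_N-V)$ on an interval of length $\delta_N=|\widehat q-q|=O_\p(N^{-1/2})$ tend to zero in probability. Because $V$ is bounded away from zero near $q$ and $V_N(q)\to V(q)$ in probability, this oscillation bound transfers to $\widetilde{F^N}-\widetilde F$, giving the required $\smallO_\p(N^{-1/2})$ increment.

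The main obstacle, as usual with Bahadur-type statements, is step three: the process $\widetilde{F^N}-\widetilde F$ is not a plain empirical distribution function, and one cannot quote Kiefer's or Bahadur's oscillation bounds directly. The workaround is precisely the Lipschitz-class Donsker argument above, which reduces the problem to two genuine empirical processes. Once equicontinuity is in hand, combining steps one through three, dividing by $\widetilde f(q)$, and using $(p-\widetilde{F^N}(q))/\widetilde f(q)$ in place of the right-hand side of step two yields the claimed expansion, since $\widetilde f(q)$ can be replaced by $\widetilde f(\widetilde F^{-1}(p))$ exactly.
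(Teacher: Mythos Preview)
Your proposal is correct and follows the same overall Bahadur--Ghosh template as the paper (tightness of $\sqrt{N}(\widehat q-q)$ plus an equicontinuity estimate for $\widetilde{F^N}-\widetilde F$), but the execution of the key equicontinuity step is genuinely different. The paper argues pointwise in $t$: it writes $Z^N(t)-Z^N(0)$ via the mean value theorem applied to $\widetilde F$ and to $\widetilde{F^N}$ (which is differentiable between order statistics), and reduces the oscillation to the a.s.\ convergence $\widetilde{f^N}(q)\to\widetilde f(q)$, obtained by the strong law applied termwise to the explicit formula for $\widetilde{f^N}$; tightness of $\kappa^N$ is shown separately via a monotonicity argument, and then Ghosh's lemma assembles the pieces. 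Your route instead exploits the clean representation $\widetilde F=U/V$ with $U(y)=\E[(y-X)_+]$, $V(y)=\E[|X-y|]$ and passes the oscillation to the centred empirical processes $\sqrt{N}(U_N-U)$ and $\sqrt{N}(V_N-V)$, whose stochastic equicontinuity over Lipschitz (hence bounded-increment, VC-type) classes delivers the $\smallO_\p(N^{-1/2})$ increment directly. The paper's argument is more elementary and self-contained (no Donsker machinery, only the law of large numbers and the mean value theorem), while yours is shorter conceptually, transparently separates the smooth ratio structure from the randomness, and would extend more readily to other transforms of the same type. One small point to tidy: your ``coarse rate'' step~1 implicitly evaluates $\widetilde{F^N}-\widetilde F$ at the random point $\widehat q$; the cleanest fix is to do it at the deterministic points $q\pm M/\sqrt{N}$ using monotonicity of $\widetilde{F^N}$, exactly as the paper does for tightness of $\kappa^N$.
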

\begin{proof}
For $p\in(0,1)$ and $t\in \mathbb{R}$, let $\xi^N(t)=\widetilde{F}^{-1}(p)+t/\sqrt{N}$ and
$$
Z^N(t)=\frac{\sqrt{N}\left( \widetilde{F}(\xi^N(t))-\widetilde {F^N}(\xi^N(t))\right) }{\widetilde f(\widetilde{F}^{-1}(p))}, \; U^N(t)=\frac{\sqrt{N}\left( \widetilde{F}(\xi^N(t))-p\right) }{\widetilde f(\widetilde{F}^{-1}(p))}.
$$
For a fixed $t \in \R$, take $N$ large enough such that $\widetilde{F}^{-1}(p) \in (X_{[k]}, X_{[k+1]})$ for some $k \in [N]$ where $X_{[k]}$ is the $k$-th order statistics of $\{X^{(k)}\}_{k \in [N]}$ and $t/\sqrt{N}$ small enough, 
 by the  mean value theorem,  we have 
\begin{equation*}
\begin{aligned}
Z^N(t)-Z^N(0)&=\frac{\sqrt{N}\left( \widetilde{F}(\xi^N(t))-\widetilde {F^N}(\xi^N(t))-p+\widetilde {F^N}(\widetilde{F}^{-1}(p))\right) }{\widetilde f(\widetilde{F}^{-1}(p))}\\
&=\frac{\sqrt{N}\left( \left( \widetilde{f}(\widetilde{F}^{-1}(p))+\smallO_\p(1)\right) t/\sqrt{N}-\left( \widetilde{f^N}(\widetilde{F}^{-1}(p))+\smallO_\p(1)\right) t/\sqrt{N}\right) }{\widetilde f(\widetilde{F}^{-1}(p))}\\
&=t\left( 1-\frac{\widetilde {f^N}(\widetilde{F}^{-1}(p))}{\widetilde  f(\widetilde{F}^{-1}(p))}+\smallO_\p(1)\right). 
\end{aligned}
\end{equation*}
By direct calculation, we have
\begin{equation*}\label{eq:wideF} \widetilde{f}(y)=\frac{F(y)\mathbb{E}[X]-\int_{-\infty}^{y}x\d F(x)}{\left( 2\left( yF(y)-\int_{-\infty}^{y}x\d F(x)\right)+\mathbb{E}[X]-y\right) ^2}, 
\end{equation*}
and
 \begin{equation*} \widetilde{f^N}(y)=\frac{\wF^N(y)\displaystyle{\frac{\sum_{k=1}^N X^{(k)}}{N}}-\displaystyle{\frac{1}{N}\sum_{k=1}^N X^{(k)}\id_{\{X^{(k)} \le y\}}}}{\left( 2\left( y\wF^N(y)-\displaystyle{\frac{1}{N}\sum_{k=1}^N X^{(k)}\id_{\{X^{(k)} \le y\}}}\right)+\displaystyle{\frac{\sum_{k=1}^N X^{(k)}}{N}}-y\right) ^2}. \end{equation*}

As $N \to \infty$, we have $\wF^N(y) \to F(y)$, $\sum_{k=1}^N X^{(k)} \to \E[X]$ and $\sum_{k=1}^N X^{(k)}\id_{\{X^{(k)} \le y\}}/N \to \int_{-\infty}^{y}x\d F(x)$, $\p$- a.s. Thus,
$$\frac{\widetilde{f^N}(\widetilde{F}^{-1}(p))}{\widetilde{f}(\widetilde{F}^{-1}(p))}\to 1, ~~\mathbb{P}\text{-a.s.}$$ 
Therefore, for a fixed $t\in\R$, $Z^N(t)-Z^N(0)=\smallO_\p(1)$. Meanwhile, by Taylor expansion, for any $t \in \R$, we have 
\begin{equation*}
U^N(t)=\frac{\sqrt{N}\left( \widetilde{F}(\xi^N(t))-p\right) }{\widetilde{f}(\widetilde{F}^{-1}(p))}
\to t, ~~~~\mbox{as}~~~ N \to \infty.
\end{equation*}
Set $\kappa^N=\sqrt{N}(\widetilde{ F^N}^{-1}(p) -\widetilde{F}^{-1}(p))$. 
We first show that $\kappa^N$ is bounded in probability.
For any $t \in \R$, we have 
\begin{align*}
\p(\kappa^N \le t)
&=\p(p\le \widetilde{F^N}(\xi^N(t)))\\
&=\p\left( \frac{1}{N}\sum_{k=1}^N\left((2p-1)\left( \xi^N(t)-X^{(k)}\right)\id_{\{X^{(k)}\le \xi^N(t)\}} +pX^{(k)} \right)- p\xi^N(t)\le  0\right)\\
&=\p\left( \frac{1}{N}\sum_{k=1}^N(1-p)\left( X^{(k)}-\xi^N(t)\right)\id_{\{X^{(k)}\le \xi^N(t)\}} +p\left( X^{(k)}-\xi^N(t)\right)\id_{\{X^{(k)}>\xi^N(t)\}}\le  0\right)\\
&\le \p\left( \frac{1}{N}\sum_{k=1}^N(1-p)\left( X^{(k)}-\xi^1(t)\right)\id_{\{X^{(k)}\le \xi^1(t)\}} +p\left( X^{(k)}-\xi^1(t)\right)\id_{\{X^{(k)}>\xi^1(t)\}}\le  0\right).
\end{align*}
By CLT, we have 
$$\frac{1}{N}\sum_{k=1}^N(1-p)\left( X^{(k)}-\xi^1(t)\right)\id_{\{X^{(k)}\le \xi^1(t)\}} +p\left( X^{(k)}-\xi^1(t)\right)\id_{\{X^{(k)}>\xi^1(t)\}} \sim \mathrm{N}(\mu(t), \sigma(t)^2)$$
with $\mu(t)=(1-p)\E\left[\left( X^{(k)}-\xi^1(t)\right)_-\right] +p\E\left[\left( X^{(k)}-\xi^1(t)\right)_+\right]$ and $\sigma(t)<\infty$. It is clear that $\mu(t) \to +\infty$ as $t \to -\infty$, which gives $\p(\kappa^N \le t)  \to 0$ as $t \to -\infty$. Similarly, we have $\p(\kappa^N > t)  \to 0$ as $t \to \infty$.
Hence, for any $\epsilon>0$, there exists a constant $t^*$ such that $\p(\vert \kappa^N\vert >t^*)\le \epsilon$ for all $N$.

Moreover,  for any $\varepsilon>0$ and $t \in \R$, we have 
$$
\mathbb{P}\left( \kappa^N\leq t,\;Z^N(0)\geq t+\varepsilon\right) =\mathbb{P}\left( Z^N(t)\leq U^N(t),\;Z^N(0)\geq t+\varepsilon\right)\to0,~~~ \mbox{as}~~~ N \to \infty.
$$
Similarly, we can get $\mathbb{P}\left( \kappa^N\geq t,\;Z^N(0)\leq t-\varepsilon\right)\to 0$.

Let $a_i=-t^*+2ti/M$ for $i \in \{0, 1, \dots, M\}$. Then,  
\begin{align*}
\p\left(\vert \kappa^N-Z^N(0)\vert >\epsilon\right)&\le \p\left(\vert\kappa^N\vert>t^*\right)+\p\left(\vert \kappa^N\vert \le M,\vert \kappa^N-Z^N(0)\vert >\epsilon \right)\\
&\le \epsilon+\sum_{i=1}^M \p\left(a_{i-1}\le \kappa^N\le a_i, \vert \kappa^N-Z^N(0)\vert >\epsilon \right)\\
&\le \epsilon+\sum_{i=1}^M \p\left( \kappa^N\le a_i, Z^N(0)\ge a_i+\epsilon/2 \right)+\p\left( \kappa^N\ge a_{i-1}, Z^N(0)\le a_{i-1}-\epsilon/2 \right).
\end{align*}
This implies that  $\kappa^N-Z^N(0)= \smallO_\p(1)$, i.e.,  $$
\widetilde{  F^N}^{-1}(p)=\widetilde{F}^{-1}(p)+\frac{p-\widetilde{ F^N}(\widetilde{F}^{-1}(p))}{\widetilde{f}(\widetilde{F}^{-1}(p))}+\smallO_\p\left(\frac{1}{\sqrt{N}}\right),~~~~p\in(0,1),
$$ 
which completes the proof. 
\end{proof}

The following theorem characterizes the asymptotic normality of $\DQex$. 
\begin{theorem}\label{thm:6}
Let $\alpha \in (0,1/2)$ and $\bX^{(1)}, \bX^{(2)}, \dots$ are i.i.d.~samples of $\bX=(X_1, \dots, X_n) \in \X^n$. Let $t=\sum_{i=1}^n \ex_\alpha(X_i)$, $\theta_1=\E[(S-t)_+]$ and $\theta_2=\E[\vert S-t\vert ]$ where $S=\sum_{i=1}^n X_i$.  Let $y_i=\widetilde{F}_i^{-1}(1-\alpha)$, $\mu_i=\mathbb{E}[X_i]$ and $\mu_i^{-}=\mathbb{E}[(X_i-y_i)_{-}]$ for $i\in[n]$. As $N \to \infty$, we have 
$$\sqrt{N} \left(\wDQex(N)-\DQex(\bX)\right) \lawto \mathrm{N}(0, \sigma^2_{\ex})$$
with  $ \sigma^2_\ex= \mathbf A^\top_\ex \Sigma_\ex\mathbf A_\ex$, where $\mathbf A_{\ex}=  \nabla h(\mathbf{v})\nabla g(\boldsymbol{\theta})^\top,$ and $\Sigma_\ex$ is the covariance matrix of random vector 
$$\left((S-t)_+, |S-t|, (y_1-X_1)\mathbb{I}_{\{X_1\leq y_1\}}, X_1,  \dots, (y_n-X_n)\mathbb{I}_{\{X_n\leq y_n\}}, X_n\right)^\top.$$ 
 Further, $\nabla h(\mathbf{v})$ is given by $$
\nabla h(\mathbf{v})^\top = 
\begin{pmatrix}
1 & 0 & \dfrac{1 - G(t)}{\widetilde{f}_1(y_1)} \cdot \dfrac{D_1 - 2\mu_1^-}{D_1^2} & \dfrac{1 - G(t)}{\widetilde{f}_1(y_1)} \cdot \dfrac{-\mu_1^-}{D_1^2} & \cdots & \dfrac{1 - G(t)}{\widetilde{f}_n(y_n)} \cdot \dfrac{D_n - 2\mu_n^-}{D_n^2} & \dfrac{1 - G(t)}{\widetilde{f}_n(y_n)} \cdot \dfrac{-\mu_n^-}{D_n^2} \\
0 & 1 & \dfrac{1 - 2G(t)}{\widetilde{f}_1(y_1)} \cdot \dfrac{D_1 - 2\mu_1^-}{D_1^2} & \dfrac{1 - 2G(t)}{\widetilde{f}_1(y_1)} \cdot \dfrac{-\mu_1^-}{D_1^2} & \cdots & \dfrac{1 - 2G(t)}{\widetilde{f}_n(y_n)} \cdot \dfrac{D_n - 2\mu_n^-}{D_n^2} & \dfrac{1 - 2G(t)}{\widetilde{f}_n(y_n)} \cdot \dfrac{-\mu_n^-}{D_n^2}
\end{pmatrix}$$ with $
D_i = 2\mu_i^- + \mu_i - y_i, (i = 1, \dots, n)
,$  and  $ \nabla g(\boldsymbol{\theta})=\left(1 / (\alpha\theta_2),-\theta_1 /\left(\alpha\theta^2_2\right)\right).$ 
\end{theorem}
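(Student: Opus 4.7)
\noindent\textbf{Proof proposal for Theorem \ref{thm:6}.}
The plan is to reduce $\widehat{\mathrm{DQ}^{\ex}_\alpha}(N)-\mathrm{DQ}^{\ex}_\alpha(\mathbf{X})$ to a linear functional of averages of the $(2n{+}2)$-dimensional vector
\[
\mathbf Z^{(k)}=\bigl((S^{(k)}-t)_+,\;|S^{(k)}-t|,\;(y_1-X_1^{(k)})\id_{\{X_1^{(k)}\le y_1\}},\;X_1^{(k)},\;\ldots,\;(y_n-X_n^{(k)})\id_{\{X_n^{(k)}\le y_n\}},\;X_n^{(k)}\bigr)^\top,
\]
whose expectation is $\mathbf v=(\theta_1,\theta_2,\mu_1^-,\mu_1,\ldots,\mu_n^-,\mu_n)^\top$, and then to apply the multivariate CLT together with two successive applications of the delta method. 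Write $\widehat t=\sum_{i=1}^n\widehat x_i^{\ex_\alpha}$ and
\[
\widehat\theta_1'=\frac1N\sum_{k=1}^N(S^{(k)}-\widehat t)_+,\qquad \widehat\theta_2'=\frac1N\sum_{k=1}^N|S^{(k)}-\widehat t|,
\]
so that $\widehat{\mathrm{DQ}^{\ex}_\alpha}(N)=g(\widehat\theta_1',\widehat\theta_2')$ with $g(a,b)=a/(\alpha b)$ and $\mathrm{DQ}^{\ex}_\alpha(\mathbf X)=g(\theta_1,\theta_2)$.

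The first block of work is the $\sqrt N$-linearization of $\widehat t-t$. By Lemma \ref{lemma:ex} applied to each coordinate,
\[
\widehat x_i^{\ex_\alpha}-\ex_\alpha(X_i)=\frac{(1-\alpha)-\widetilde{F_i^N}(y_i)}{\widetilde f_i(y_i)}+\smallO_\p(N^{-1/2}).
\]
Observing that $\widetilde F_i(y)=\E[(y-X_i)_+]/\E[|y-X_i|]$ with numerator $A_i=\mu_i^-$ and denominator $D_i=2\mu_i^-+\mu_i-y_i$, I then expand the ratio defining $\widetilde{F_i^N}(y_i)$ by a standard delta-method argument, using $\widehat A_i=N^{-1}\sum(y_i-X_i^{(k)})_+$ and $\widehat\mu_i=N^{-1}\sum X_i^{(k)}$ with $\widehat B_i=2\widehat A_i+\widehat\mu_i-y_i$; this gives
\[
\widetilde{F_i^N}(y_i)-\widetilde F_i(y_i)=\frac{D_i-2\mu_i^-}{D_i^2}(\widehat A_i-A_i)-\frac{\mu_i^-}{D_i^2}(\widehat\mu_i-\mu_i)+\smallO_\p(N^{-1/2}),
\]
so that $\widehat t-t$ is, up to $\smallO_\p(N^{-1/2})$, a linear combination of the last $2n$ coordinates of $\bar{\mathbf Z}_N-\mathbf v$ with coefficients $-1/\widetilde f_i(y_i)$ times the brackets appearing in the Jacobian of the theorem.

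The second block linearizes the numerator and denominator of $\widehat{\mathrm{DQ}^{\ex}_\alpha}(N)$ around the fixed threshold $t$. Writing
\[
(S-\widehat t)_+-(S-t)_+=-\int_t^{\widehat t}\id_{\{S>s\}}\,\d s,\qquad |S-\widehat t|-|S-t|=\int_t^{\widehat t}\bigl(\id_{\{S<s\}}-\id_{\{S>s\}}\bigr)\d s,
\]
averaging over samples and using Donsker's theorem (so that $\sup_s|\widehat G^N(s)-G(s)|=O_\p(N^{-1/2})$) together with $\widehat t-t=O_\p(N^{-1/2})$ and continuity of $g$ at $t$ (Assumption on the density of $S$), I obtain
\[
\widehat\theta_1'=\tfrac1N\textstyle\sum (S^{(k)}-t)_+-(1-G(t))(\widehat t-t)+\smallO_\p(N^{-1/2}),\qquad \widehat\theta_2'=\tfrac1N\textstyle\sum |S^{(k)}-t|+(2G(t)-1)(\widehat t-t)+\smallO_\p(N^{-1/2}).
\]
Substituting the linearization of $\widehat t-t$ from the first block recovers exactly the two rows of $\nabla h(\mathbf v)^\top$ displayed in the statement, so that $(\widehat\theta_1'-\theta_1,\widehat\theta_2'-\theta_2)^\top=\nabla h(\mathbf v)^\top(\bar{\mathbf Z}_N-\mathbf v)+\smallO_\p(N^{-1/2})$.

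To conclude I would apply the multivariate CLT to $\sqrt N(\bar{\mathbf Z}_N-\mathbf v)\lawto \mathrm N(0,\Sigma_\ex)$ (finite second moments follow from $\E[|X_i|^{2+\epsilon_i}]<\infty$, which also guarantees $\E[S^2]<\infty$), and then the delta method to $g(\theta_1,\theta_2)=\theta_1/(\alpha\theta_2)$ with $\nabla g(\boldsymbol\theta)=(1/(\alpha\theta_2),-\theta_1/(\alpha\theta_2^2))$. The chain rule yields $\mathbf A_\ex=\nabla h(\mathbf v)\nabla g(\boldsymbol\theta)^\top$ and $\sigma_\ex^2=\mathbf A_\ex^\top\Sigma_\ex\mathbf A_\ex$. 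The main obstacle is the second block: controlling the error in replacing $\widehat t$ by $t$ inside the nonlinear functionals $(S-\cdot)_+$ and $|S-\cdot|$. The one-dimensional empirical-process argument above handles this, but it must be combined carefully with the already $\smallO_\p(N^{-1/2})$ remainder from Lemma \ref{lemma:ex} and from the delta-method linearization of $\widetilde F_i^N(y_i)$, so that all remainders remain of order $\smallO_\p(N^{-1/2})$ after summation over $i$.
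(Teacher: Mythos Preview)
Your proposal is correct and follows essentially the same route as the paper: linearize $(\widehat\theta_1',\widehat\theta_2')-(\theta_1,\theta_2)$ by first replacing the random threshold $\widehat t$ by $t$ (your second block; the paper does this by the direct integral decomposition around \eqref{theta1}), then expanding $\widehat t-t$ via Lemma~\ref{lemma:ex} and a delta-method expansion of the ratio defining $\widetilde{F_i^N}(y_i)$ (your first block; the paper encodes this in the map $h$), and finally applying the multivariate CLT plus the delta method for $g$. The only cosmetic difference is that the paper keeps the nonlinear ratio $x_{2i+1}/(2x_{2i+1}+x_{2i+2}-y_i)$ inside $h$ and differentiates $h$, whereas you linearize the ratio by hand first; the resulting Jacobians coincide.
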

\begin{proof}
 By Theorem 1 of  \cite{HLWW24}, we have 
\begin{equation*}\label{eq:ex}
\DQex(\bX) =\frac{1}{\alpha}
\frac{\mathbb{E}\left[ (S-t) _+\right]}{\mathbb{E}\left[\left| S-t \right|\right]}.
\end{equation*}

Let $t^N=\sum_{i=1}^n \widehat{ \ex_{i,\alpha}}(N)$ where $\widehat {\ex_{i,\alpha}}(N)$ is the empirical estimator of $\ex_\alpha(X_i)$ for $i \in [n]$. Let $\theta^N_1=\sum_{k=1}^N (S^{(k)}-t^N)_+/N$ and $\theta^N_2=\sum_{k=1}^N \vert S^{(k)}-t^N\vert/N$.

For $\theta_1^N$, we have 
\begin{equation*}\label{eq:theta1}
   \theta_1^N-\theta_1=\frac{1}{N}\sum_{k=1}^{N}(S^{(k)}-t^N)_+-\frac{1}{N}\sum_{k=1}^{N}(S^{(k)}-t)_++\frac{1}{N}\sum_{k=1}^{N}(S^{(k)}-t)_+-\theta_1.
\end{equation*}
Moreover,
\begin{equation}\label{theta1}
   \begin{aligned}
    \frac{1}{N}\sum_{k=1}^{N}(S^{(k)}-t^N)_+-\frac{1}{N}\sum_{k=1}^{N}(S^{(k)}-t)_+&=\int_{-\infty}^{\infty}(x-t^N)_+\d \widehat G^N(x)-\int_\infty^{\infty}(x-t)_+\d \widehat G^N(x)\\
&=\int_{t^N}^{\infty}(x-t^N)\d \widehat G^N(x)-\int_t^{\infty}(x-t)\d \widehat G^N(x)\\
    &=\int_{t^N}^t(x-t^N)\d \widehat G^N(x) +\int_t^{\infty}(t-t^N)\d \widehat G^N(x)\\
    &=\int_{t^N}^t(x-t^N)\d  \widehat G^N(x)+(t-t^N)(1-\widehat G^N(t)).
\end{aligned} 
\end{equation}
Notice that
$$\int_{t^N}^t(x-t^N)\d \widehat G^N(x)\leq |(t-t^N)(\widehat G^N(t)-\widehat G^N(t^N))|=\smallO_\p(N^{-\frac{1}{2}}).$$
So \eqref{theta1} can be written into 
\begin{equation*}
  \frac{1}{N}\sum_{k=1}^{N}(S^{(k)}-t^N)_+-\frac{1}{N}\sum_{k=1}^{N}(S^{(k)}-t)_+=(t-t^N)(1-G(t))+\smallO_\p(N^{-\frac{1}{2}}). 
\end{equation*}
Then we have
\begin{equation*}
    \theta_1^N-\theta_1=\frac{1}{N}\sum_{k=1}^{N}(S^{(k)}-t)_+-\theta_1+(t-t^N)(1-G(t))+\smallO_\p(N^{-\frac{1}{2}}). 
\end{equation*}
Similarly, for $\theta_2$ we can get
\begin{equation*}
    \theta_2^N-\theta_2=\frac{1}{N}\sum_{k=1}^{N}|S^{(k)}-t|-\theta_2+(t-t^N)(1-2G(t))+\smallO_\p(N^{-\frac{1}{2}}).
\end{equation*}
By Lemma \ref{lemma:ex}, we obtain
\begin{equation}\label{eq:tN}
\begin{aligned}
    t-t^N&=\sum_{i=1}^{n}\left(\widetilde{F}_i^{-1}(1-\alpha)-\widetilde{F_i^N}^{-1}(1-\alpha)\right)\\
&=\sum_{i=1}^n\frac{\widetilde{F^N_i}(\widetilde{F}_i^{-1}(1-\alpha))-(1-\alpha)}{\widetilde{f}_i(\widetilde{F}_i^{-1}(1-\alpha))}+\smallO_\p(N^{-\frac{1}{2}}).
\end{aligned}
\end{equation}
For simplicity, we denote $y_i=\widetilde{F}_i^{-1}(1-\alpha)$. With this notation, together with \eqref{eq:tildeF}, we have 
\begin{equation}
\begin{aligned}
\widetilde{F_i^N}\left(y_i\right) & =\frac{y_i F_i^N\left(y_i\right)-\int_{-\infty}^{y_i} x \mathrm{~d} F_i^N(x)}{2\left(y_i F_i^N\left(y_i\right)-\int_{-\infty}^{y_i} x \mathrm{~d} \widehat{F}_i(x)\right)+\widehat{x}_i^N-y_i} \\
& =\frac{\frac{1}{N} \sum_{k=1}^N\left(y_i-X_i^{(k)}\right) \mathbb{I}_{\left\{X_i^{(k)} \leqslant y_i\right\}}}{2\left(\frac{1}{N} \sum_{k=1}^N\left(y_i-X_i^{(k)}\right) \mathbb{I}_{\left\{X_i^{(k)} \leqslant y_i\right\}}\right)+\frac{1}{N} \sum_{k=1}^N X_i^{(k)}-y_i},
\end{aligned}\label{FiN}
\end{equation}
where $\widehat x_i^N$ is the empirical estimators of $\E[X_i].$ 

Define the function $h:\mathbb{R}^{2n+2}\to\mathbb{R}^2$ by
$$
  h(\mathbf{x}):=(h_1(\mathbf{x}),h_2(\mathbf{x}))=
  \begin{pmatrix}
    
      x_1+\left(1-G(t)\right)\displaystyle\sum_{i=1}^n\frac{1}{\widetilde{f}_i(y_i)}\frac{x_{2i+1}}{2x_{2i+1}+x_{2i+2}-y_i}\\ x_2+(1-2G(t))\displaystyle\sum_{i=1}^n\frac{1}{\widetilde{f}_i(y_i)}\frac{x_{2i+1}}{2x_{2i+1}+x_{2i+2}-y_i},
  \end{pmatrix}
$$
where $\mathbf{x}=(x_1, x_2, \dots, x_{2n+2})$. 
Let $\mu_i=\mathbb{E}[X_i]$ and $\mu_i^{-}=\mathbb{E}[(X_i-y_i)_{-}]$.    Define $\mathbf V\in\mathbb{R}^{2n+2}$  as 
\begin{equation*}
    \mathbf{V}=\left((S-t)_+, |S-t|, (y_1-X_1)\mathbb{I}_{\{X_1\leq y_1\}}, X_1,  \dots,(y_n-X_n)\mathbb{I}_{\{X_n\leq y_n\}}, X_n\right)^\top.
\end{equation*}
Suppose $\mathbf {V}^{(1)}$, $\mathbf {V}^{(2)},...$ are i.i.d.~samples of $\mathbf V$, and let $\overline{\mathbf V}^N= \sum_{i=1}^N{\mathbf V}^{(k)}/N$. By multivariate central limit theorem, $\sqrt{N}\overline{V}^N\stackrel{d}{\longrightarrow}\mathrm{N}(\mathbf v, \Sigma_{\ex})$, where $\mathbf v=(\theta_1, \theta_2, \mu_1^-,\mu_1,\dots, \mu_n^-,\mu_n)^\top$, and $\Sigma_\ex$ is covariance matrix of $\mathbf{V}$.
 By Delta method, we have
\begin{equation}
    \sqrt{N}(\theta_1^N-\theta_1, \theta_2^N-\theta_2)^\top=\sqrt{N}(h(\overline{
\mathbf V}^N)-h(\mathbf v))+\smallO_\p(1)\stackrel{d}{\to}\mathrm{N}(0, \mathbf \nabla h(\mathbf{v}) ^\top\Sigma_\ex \nabla h(\mathbf{v})  )).
\end{equation}

Take $g\left(x_1, x_2\right)={x_1}/{ \alpha x_2}$, then we have   
 $ \nabla g(\boldsymbol{\theta} )=\left(1 /  \alpha \theta_2,-\theta_1 /\left( \alpha\theta^2_2\right)\right)$ with $\boldsymbol{\theta}=(\theta_1,\theta_2)$.   
 Note that 
\begin{equation*}
\begin{aligned}
    &g\left(h(\overline{
\mathbf V}^N)-h(\mathbf v)+\smallO_\p(\frac{1}{\sqrt{N}})\right)
\\
    &=\frac{1}{\alpha}\frac{h_1(\overline{ \mathbf V}^N)-h_1(\mathbf v)+\smallO_\p(\frac{1}{\sqrt{N}})}{h_2(\overline{ \mathbf V}^N)-h_2(\mathbf v)+\smallO_\p(\frac{1}{\sqrt{N}})}-\frac{1}{\alpha}\frac{h_1(\overline{ \mathbf V}^N)-h_1(\mathbf v)}{h_2(\overline{ \mathbf V}^N)-h_2(\mathbf v)}+\frac{1}{\alpha}\frac{h_1(\overline{ \mathbf V}^N)-h_1(\mathbf v)}{h_2(\overline{ \mathbf V}^N)-h_2(\mathbf v)}\\
    &=\frac{1}{\alpha}\frac{\left(h_2(\overline{
\mathbf V}^N)-h_2(\mathbf v)-h_1(\overline{
\mathbf V}^N)+h_1(\mathbf v)\right)\smallO_\p(\frac{1}{\sqrt{N}})}{\left(h_2(\overline{
\mathbf V}^N)-h_2(\mathbf v)+\smallO_\p(\frac{1}{\sqrt{N}})\right)(h_2(\overline{
\mathbf V}^N)-h_2(\mathbf v))}+\frac{1}{\alpha}\frac{h_1(\overline{
\mathbf V}^N)-h_1(\mathbf v)}{h_2(\overline{
\mathbf V}^N)-h_2(\mathbf v)}\\
    &=\frac{1}{\alpha}\frac{h_1(\overline{
\mathbf V}^N)-h_1(\mathbf v)}{h_2(\overline{
\mathbf V}^N)-h_2(\mathbf v)}+\smallO_\p(\frac{1}{\sqrt{N}}).
\end{aligned}    
\end{equation*}
  Therefore, we have $\sqrt{N}\left(g(\mathbf{\theta}^N)-g(\mathbf{\theta})\right)=\sqrt{N}\left(g(h(\overline{
\mathbf V}^N)-h(\mathbf v))\right)+\smallO_\p(1)$.  Define  $\mathbf A_{\ex}=  \nabla h(\mathbf{v})\nabla g(\mathbf \theta) $.  Using the  Delta method again,  we have 
\begin{equation*}
    \sqrt{N}(\widehat{\mathrm{DQ}^{\ex}_{\alpha}}(N)-\DQex(\mathbf{X}))=\frac{\sqrt{N}}{\alpha }\left(\frac{\theta_1^N}{\theta_2^N}-\frac{\theta_1}{\theta_2}\right)\stackrel{d}{\to}\mathrm{N}(0,\sigma^2_\ex),
\end{equation*}
where $ \sigma^2_\ex= \mathbf A^\top_\ex \Sigma_\ex\mathbf A_\ex$. 
\end{proof}

In the following analysis, we adopt the same model framework as in Section \ref{sec:7}, assuming that $\mathbf{X} \sim \mathrm{N}(\boldsymbol{\mu}, \Sigma)$ and $\mathbf{Y} \sim t(\nu, \boldsymbol{\mu}, \Sigma)$, where the covariance matrix $\Sigma$ is specified in equation~\eqref{eq:matrix}. The default parameter settings are $r = 0.3$, $n = 5$, $\nu = 3$, and $\alpha = 0.1$.

Figure~\ref{fig:DQ_ex} illustrates that the empirical estimates of $\mathrm{DQ}^{\ex}_\alpha$ for $\mathbf{X} \sim \mathrm{N}(\boldsymbol{\mu}, \Sigma)$ closely follow the normal distribution $\mathrm{N}(0.33,0.78/N)$. Similarly, for $\mathbf{Y} \sim t(\nu, \boldsymbol{\mu}, \Sigma)$, the estimates align well with the distribution $\mathrm{N}(0.45,2.53/N)$.

\begin{figure}[htb!]
\centering
 \includegraphics[width=16cm]{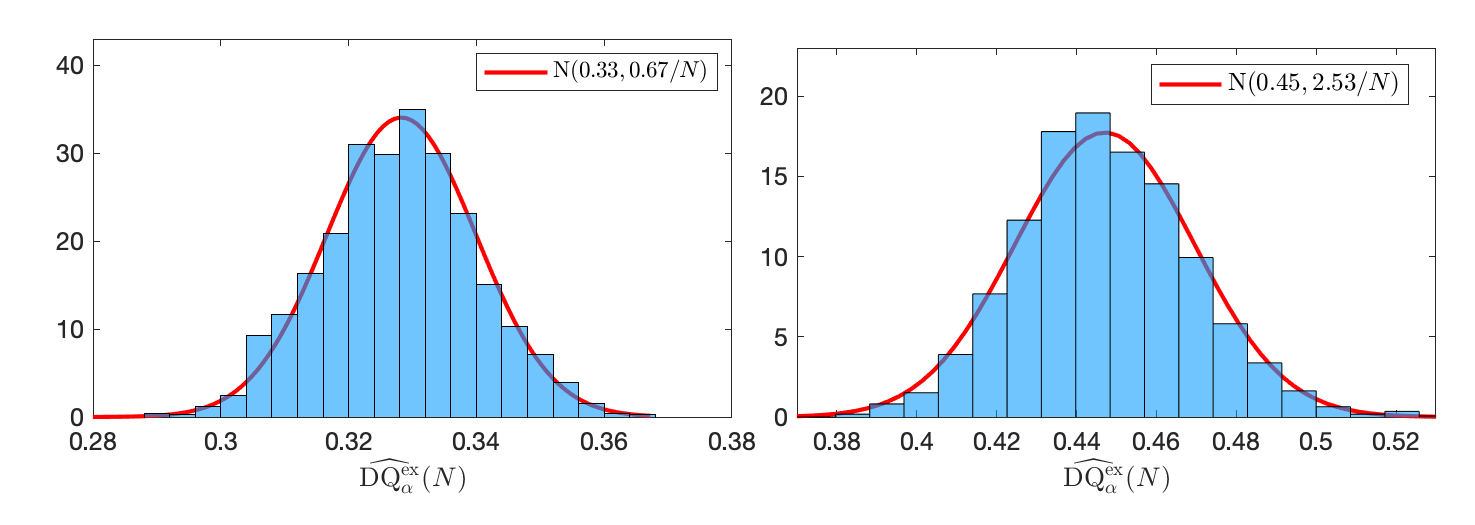}
 \captionsetup{font=small}
{\caption{   \small   Histogram of $\widehat{\mathrm{DQ}^{\ex}_\alpha}(N)$  for $\mathbf X\sim  \mathrm N(\boldsymbol{\mu},\Sigma)$ (left panel) and $ \mathbf{Y}\sim \mathrm t(\nu,\boldsymbol{\mu},\Sigma)$ (right panel)}\label{fig:DQ_ex}}
\end{figure}

We also examine how the asymptotic variance $\sigma^2_{\ex}$, as derived in Theorem~\ref{thm:6}, varies with respect to $\alpha$, $r$, $n$, and $\nu$ in Figures~\ref{sigma_ex_alpha} and~\ref{sigma_ex_nu}. Since the resulting curves exhibit similar patterns to those observed for $\mathrm{DQ}_\alpha^{\mathrm{VaR}}$ and $\mathrm{DQ}_\alpha^{\mathrm{ES}}$ in Section \ref{sec:7}, we omit further discussion.

\begin{figure}[htb!]
\centering
\includegraphics[width=16cm]{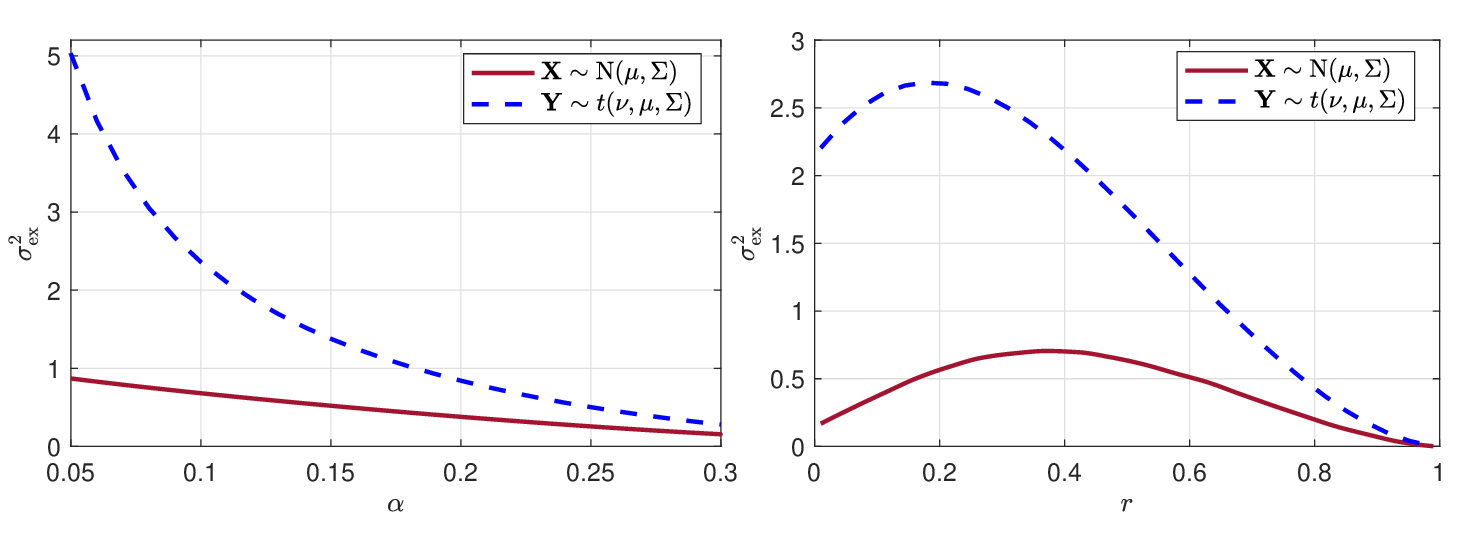}
 \captionsetup{font=small}
{\caption{   \small   Asymptotic variance    for $\alpha\in[0.05,0.3]$ with fixed $n=5$, $\nu=3$ and $r=0.3$ (left),  and for  $r\in[0.01,0.99]$ with fixed $n=5$, $\alpha=0.1$ and $\nu=3$  (right)}\label{sigma_ex_alpha}}
\end{figure}

\begin{figure}[htb!]
\centering
\includegraphics[width=16cm]{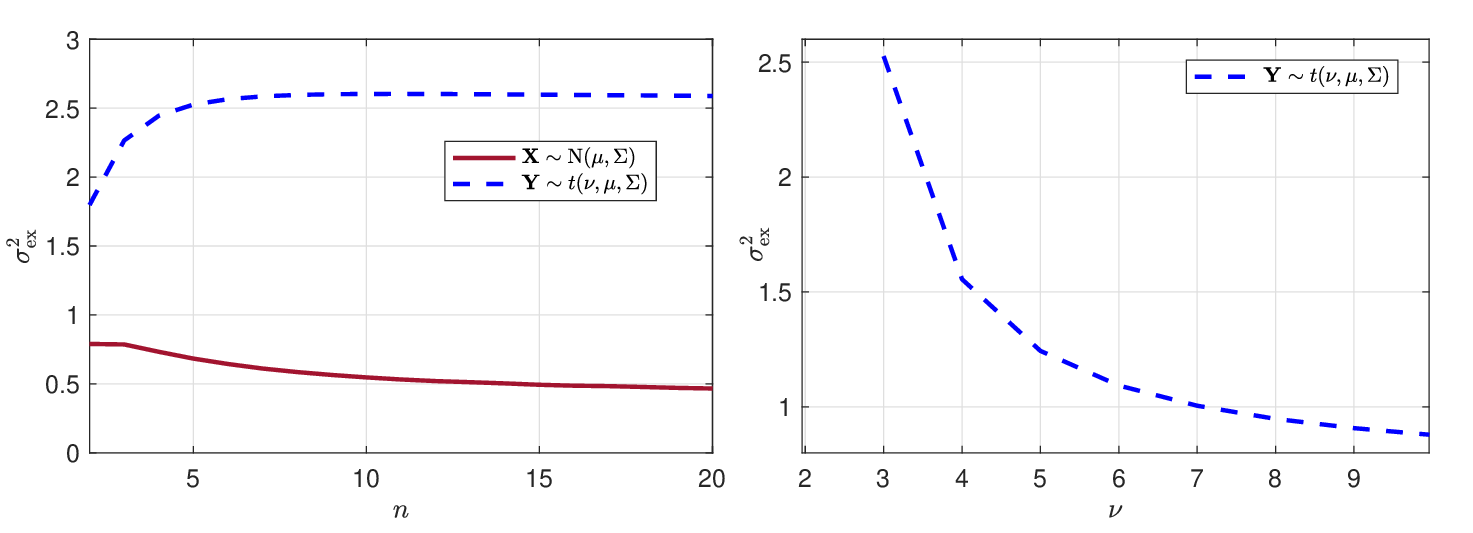}
 \captionsetup{font=small}
{\caption{ \small   Asymptotic variance    for $n\in[2,20]$ with fixed $\alpha=0.1$, $\nu=3$ and $r=0.3$ (left),  and for  $\nu\in[3,10]$ with fixed $n=5$, $\alpha=0.1$ and $r=0.3$  (right)}\label{sigma_ex_nu}}
\end{figure}

 Finally, using the financial data described in Section \ref{sec:real_data}, we construct confidence intervals for $\DQex$ in Figures \ref{fig:DQ_ex_0.05} and \ref{fig:DQ_ex_0.1}. We find that the pattern of $\DQex$ is similar to those of $\DQVaR$ and $\DQES$, but its confidence intervals are noticeably narrower. In particular, the variance of $\DQex$ is 0.0039 for $\alpha=0.05$ and 0.0026 for $\alpha=0.1$, which are much smaller than the corresponding variances of $\DQVaR$ and $\DQES$. This observation is consistent with the results under elliptical models and highlights the advantage noted in \cite{HLWW24} that DQ based on expectiles can handle the challenges of computation with small samples, which pose practical difficulties for VaR, ES, and their corresponding DQ due to the scarcity of tail data.

\begin{figure}[htb!]
\centering
 \includegraphics[width=14cm]{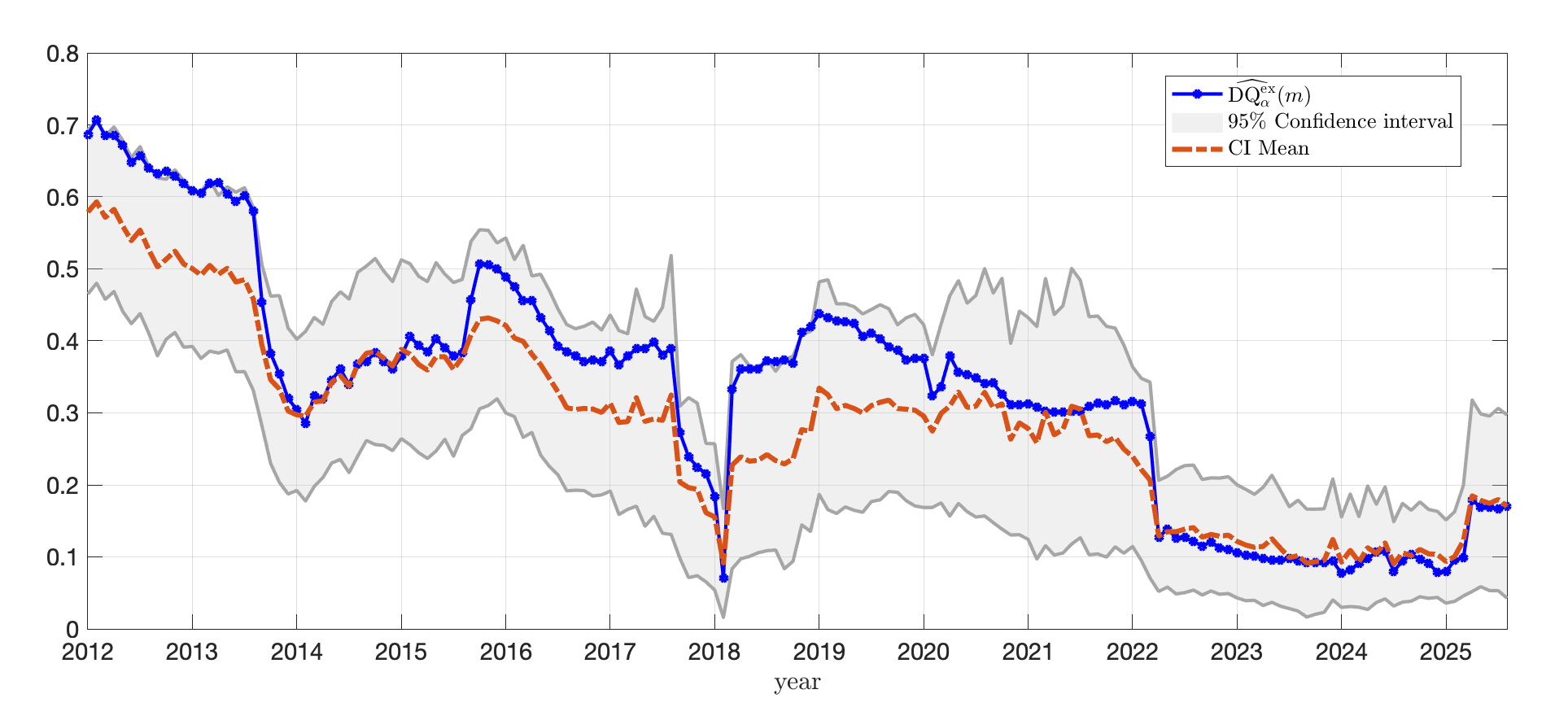}
 \captionsetup{font=small}
{\caption{   \small    Bootstrap confidence bands for $\DQex$ with $\alpha=0.05$}\label{fig:DQ_ex_0.05}}
\end{figure}

\begin{figure}[htb!]
\centering
 \includegraphics[width=14cm]{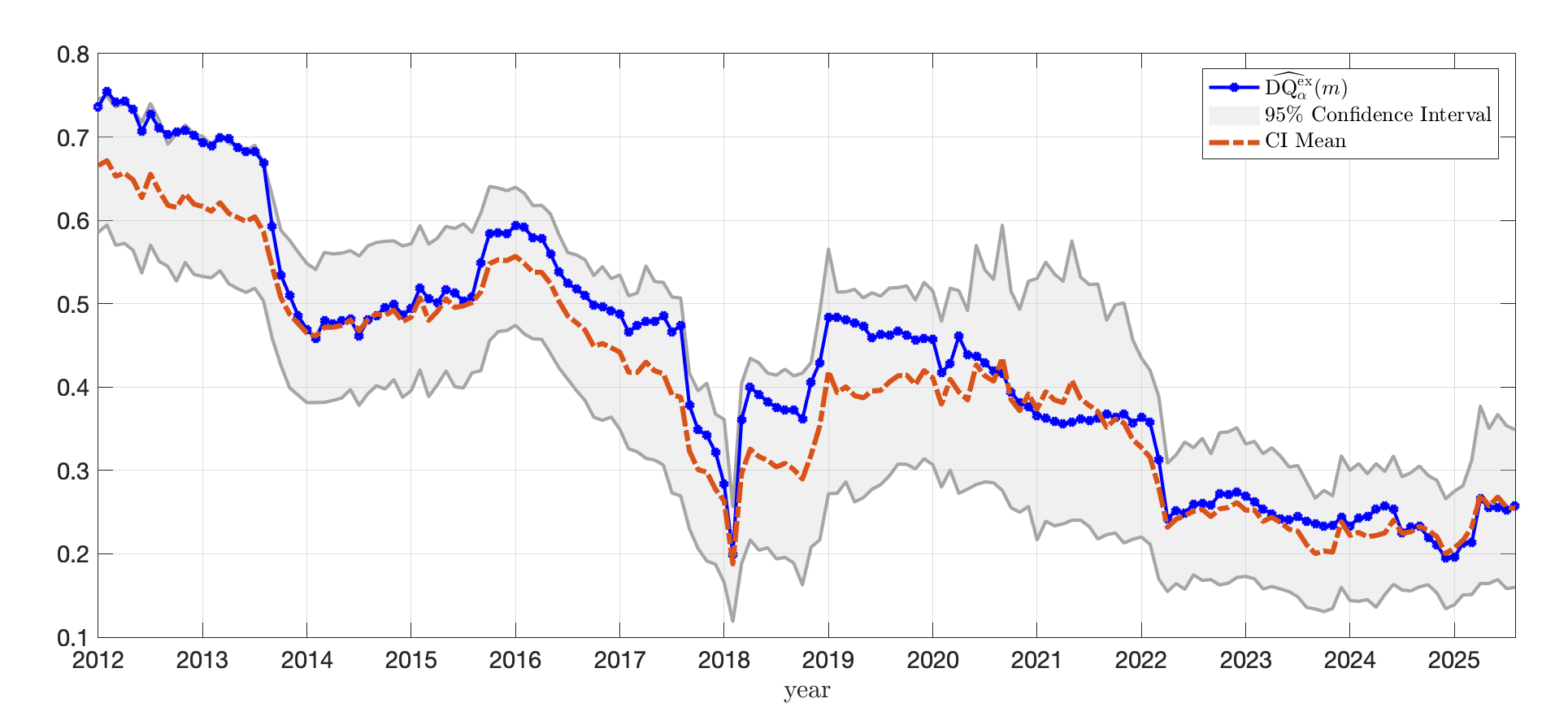}
 \captionsetup{font=small}
{\caption{   \small   Bootstrap confidence bands for $\DQex$ with $\alpha=0.1$}\label{fig:DQ_ex_0.1}}
\end{figure}

\section{The proof of the main context}
\label{App:B}
\subsection{Proof in Section \ref{sec:3}}

\begin{proof}[Proof of Proposition \ref{prop:1}]
The proof is straightforward for the fact that $X_i \laweq Y_i$, $i\in [n]$ and $\sum_{i=1}^n X_i \laweq \sum_{i=1}^n Y_i$ for any $\mathbf X=(X_1, \dots, X_n), \mathbf Y=(Y_1, \dots, Y_n) \in \X^n$.
\end{proof}

\begin{proof}[Proof of Theorem \ref{thm:1}]
Let $\bX=(X_1, \dots, X_n)$ and  $\bX^{(k)}=(X^{(k)}_1, \dots, X^{(k)}_n)$ for $k \in \N$.
Define the function $h: I \to \R$ and $h^k: I \to \R$ for every $k \in \N$  as 
 $$h(\beta)=\rho_{\beta}\left(\sum_{i=1}^n X_i\right)-\sum_{i=1}^{n}\rho_{\alpha}(X_i)~~~\mbox{and} ~~~h^k(\beta)=\rho_{\beta}\left(\sum_{i=1}^n X^{(k)}_i\right)-\sum_{i=1}^{n}\rho_{\alpha}(X^{(k)}_i),~~~\beta\in I.$$ 
Since $\bX^{(k)} \lawto \bX$, we have $X^{(k)}_i \lawto X_i$ for all $i\in [n]$ and $\sum_{i=1}^n X^{(k)}_i \lawto \sum_{i=1}^n X_i$.
If (a) or (b) holds, we have $\rho_\alpha(X^{(k)}_i) \to \rho_\alpha(X_i)$ for all $i\in [n]$ and $\rho_\beta(\sum_{i=1}^n X^{(k)}_i) \to \rho_\beta(\sum_{i=1}^n X_i)$  for all $\beta \in I$ as $k \to \infty$. Hence, $h^k(\beta) \to h(\beta)$ as $k \to \infty$ for all $\beta \in I$.
 Moreover, it is clear that $h$ and $h^k$, $k \in \N$, are decreasing functions. 

 If $\DQrho(\bX)=0$ or $\widetilde \alpha=\sup I$, the conclusion holds trivially. Without loss of generality, we assume that $\DQrho(\bX)>0$ and $\widetilde \alpha<\sup I$.
For a given  $\beta< \alpha^*=\inf\{\beta \in I: h(\beta)\le 0\}$, it is clear that $h(\beta)>0$. Hence, there exists $K_\beta$ such that $h^k(\beta)>0$ for all $k>K_\beta$. Since $h^k$, $k \in \N$,  are  decreasing functions, we have $\alpha^k=\inf\{\beta \in I: h^k(\beta)\le 0\}\ge \beta$ for all $k >K_\beta$, which implies $\liminf \alpha^k\ge \alpha^*$. By Definition \ref{def:DQ}, we have $\liminf_{k\to\infty}\DQrho(\mathbf X^{(k)})\ge \DQrho(\mathbf X)$.

For a given  $\beta> \widetilde\alpha=\sup\{\beta \in I: h(\beta)\ge 0\}$, we have  $h(\beta)<0$. Hence, there exists $K_\beta$ such that $h^k(\beta)<0$ for all $k>K_\beta$. Therefore, $\widetilde\alpha^k=\sup\{\beta \in I: h^k(\beta)\ge 0\}\le \beta$, which implies $\limsup_{k \to \infty} \widetilde\alpha^k\le \widetilde\alpha$. Since $\alpha^k \le \widetilde \alpha^k$, by Definition \ref{def:DQ}, we have $\limsup_{k\to\infty}\DQrho(\bX^{(k)})\le {\widetilde{\alpha}
}/{\alpha}$.

Next, assume $\rho_\beta(\sum_{i=1}^n X_i)$ is a strictly decreasing function.
In this case, we have $\alpha^*=\widetilde \alpha$. Therefore,
$$\frac{\alpha^*}{\alpha}=\liminf_{k\to\infty}\DQrho(\mathbf X^{(k)})\le  \limsup_{k\to\infty}\DQrho(\mathbf X^{(k)}) \le \frac{\widetilde\alpha}{\alpha},$$
which gives $\liminf_{k\to\infty}\DQrho(\bX^{(k)})=  \limsup_{k\to\infty}\DQrho(\bX^{(k)})=\DQrho(\bX)$.
\end{proof}

\subsection{Proofs in Section \ref{sec:4}}\label{app:2}

To show the asymptotical normality,  we first present two lemmas that will be helpful in the discussion of asymptotic distribution for DQ's empirical estimators.
A Gaussian process $\{B(t); t \in [0,1]\}$ is called Brownian bridge if $\E[B(t)]=0$ for all $t \in [0,1]$ and $\E[B(t_1)B(t_2)]=t_1 \wedge t_2- t_1t_2$ for any $t_1, t_2 \in [0,1]$.
\begin{lemma}[Theorem 3.1.1 in \citet{CM83}]\label{lem:1}
Let $U_1, U_2, \dots$ be a sequence of independently uniformly distributed random variables on $[0,1]$. For $N \in \mathbb N$ and $y \in \R$, let $E^N(y)=\sum_{k=1}^N\mathbb{I}_{\{U_k\leq y\}}/N$ and $\delta^N(y)=\sqrt{N}(E^N(y)-y)$.
There exists a probability space on which one can define a Brownian bridge $\{B^N(y); 0\le y\le 1\}$ for each $N$ such that 
$$
\sup_{0\le y \le 1}\vert \delta^N(y)-B^N(y)\vert =  \O(N^{-1/2}\log N)~~~ \mbox{a.s.}
$$   
\end{lemma}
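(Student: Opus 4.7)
The plan is to establish this via a Hungarian-type dyadic construction, which is the classical route for obtaining the rate $\mathcal{O}(N^{-1/2}\log N)$. First I would fix the dyadic grid $\{k/2^j : 0\le k\le 2^j\}$ on $[0,1]$ and record the empirical counts $N^j_k = \#\{i\le N : U_i \in ((k-1)/2^j, k/2^j]\}$. The joint law of $(N^j_k)_{k\le 2^j}$ is multinomial with equal cell probabilities, and, crucially, conditional on the count $N^{j-1}_k$ in a parent dyadic cell, the split into the two children is distributed as $\mathrm{Bin}(N^{j-1}_k,1/2)$, independently across parents. This conditional-independence tree is what makes a recursive coupling possible.

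Second, I would couple each of these conditional binomials with a normal random variable using the sharp one-step KMT coupling for $\mathrm{Bin}(n,1/2)$: on a suitable probability space one constructs $X_n \sim \mathrm{Bin}(n,1/2)$ and $Z_n \sim \mathrm{N}(n/2, n/4)$ with $|X_n - Z_n| = \mathcal{O}(\log n)$ almost surely (more precisely, with exponentially small tails beyond that scale). Applying this at every node of the dyadic tree down to depth $J_N := \lfloor \log_2 N \rfloor$, and assembling the Gaussian summands, gives a Gaussian field $\{G^N(k/2^j)\}$ such that
\begin{equation*}
\max_{0\le k \le 2^j} \bigl|\delta^N(k/2^j) - G^N(k/2^j)\bigr| = \mathcal{O}(N^{-1/2}\log N) \quad \text{a.s.},
\end{equation*}
where the $\log N$ factor comes from summing $\mathcal{O}(\log N)$ contributions across the $J_N$ dyadic levels and a Borel--Cantelli argument handles the maximum over the $2^{J_N}$ endpoints.

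Third, I would identify $G^N$ with the dyadic skeleton of a Brownian bridge: the Gaussian increments produced by the coupling have exactly the covariance structure of a Brownian bridge at dyadic points, so filling in the gaps with independent Brownian bridges on each finest-scale sub-interval produces a genuine Brownian bridge $B^N$ on $[0,1]$. To pass from the dyadic bound to a uniform bound I would control the modulus of continuity on sub-intervals of length $2^{-J_N} \asymp 1/N$: for the empirical process via a Dvoretzky--Kiefer--Wolfowitz-type inequality, and for the Brownian bridge via L\'evy's modulus of continuity. Both give oscillations of order $\sqrt{N^{-1}\log N}/\sqrt{N} = \mathcal{O}(N^{-1/2}\log N)$ in the $\delta^N$-scale, which is absorbed into the target rate.

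The main obstacle is the sharp one-step binomial coupling $|X_n - Z_n|=\mathcal{O}(\log n)$ underlying the whole construction; this is the technical core of the Koml\'os--Major--Tusn\'ady theorem and requires a delicate Fourier-analytic refinement of the local central limit theorem with exponential tail control on the ratio of binomial and normal densities. Everything else (the dyadic assembly, the interpolation, and the modulus-of-continuity patchup) is comparatively routine once that coupling is in hand.
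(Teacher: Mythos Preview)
Your outline is a faithful sketch of the Koml\'os--Major--Tusn\'ady construction and would, if carried through, yield the stated rate. However, the paper does not prove this lemma at all: it is quoted verbatim as Theorem~3.1.1 of \cite{CM83} and used as a black box in the proof of Theorem~\ref{thm:2}. So there is nothing in the paper to compare your argument against; the authors simply import the result from the literature, whereas you are proposing to reconstruct the underlying KMT machinery. Your identification of the sharp binomial--normal coupling as the technical core is accurate, and the dyadic assembly with modulus-of-continuity patchup is the standard route. If your goal is only to match what the paper does, a one-line citation suffices.
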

\begin{lemma}[\cite{G71}]\label{lem:Bahadur}
Let $\alpha \in (0,1)$. Suppose $F'(F^{-1}(\alpha))=f(F^{-1}(\alpha))>0$. For empirical distribution $\wF^N$ built from i.i.d.~samples, we have 
$$(\wF^N)^{-1}(\alpha)=F^{-1}(\alpha)+\frac{\alpha-\wF^N(F^{-1}(\alpha))}{f(F^{-1}(\alpha))}+\smallO_\p(N^{-1/2}).$$
\end{lemma}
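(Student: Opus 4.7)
The plan is to adapt the Ghosh-style sandwich argument that the authors employ in the proof of Lemma~\ref{lemma:ex}, which treats a modified distribution; here the same template applies more directly to the empirical distribution $\wF^N$. Write $\xi = F^{-1}(\alpha)$ and $\kappa^N = \sqrt{N}((\wF^N)^{-1}(\alpha) - \xi)$; for each $t \in \R$ I would introduce the auxiliary quantities
\begin{equation*}
Z^N(t) = \frac{\sqrt{N}\bigl(F(\xi + t/\sqrt{N}) - \wF^N(\xi + t/\sqrt{N})\bigr)}{f(\xi)}, \qquad U^N(t) = \frac{\sqrt{N}\bigl(F(\xi + t/\sqrt{N}) - \alpha\bigr)}{f(\xi)}.
\end{equation*}
Taylor's theorem together with differentiability of $F$ at $\xi$ gives $U^N(t) \to t$, and a direct variance bound---using that $|\id_{\{X^{(k)} \le \xi + t/\sqrt{N}\}} - \id_{\{X^{(k)} \le \xi\}}|$ is Bernoulli with mean $O(1/\sqrt{N})$---yields $\var(Z^N(t) - Z^N(0)) = O(1/\sqrt{N})$, so $Z^N(t) - Z^N(0) = \smallO_\p(1)$ for each fixed $t$.

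The key identity is that monotonicity of $\wF^N$ together with continuity of $F$ at $\xi$ forces $\{\kappa^N \le t\}$ and $\{\wF^N(\xi + t/\sqrt{N}) \ge \alpha\}$ to agree up to a $\p$-null set, which in the $Z^N, U^N$ notation reads $\{Z^N(t) \le U^N(t)\}$. Tightness of $\kappa^N$ follows from the CLT applied to $\wF^N(\xi \pm t^*/\sqrt{N})$: for any $\epsilon > 0$ there exists $t^*$ with $\p(|\kappa^N| > t^*) < \epsilon$ uniformly in $N$. Combining these ingredients, for any $\epsilon > 0$ and fixed $t$,
\begin{equation*}
\p\bigl(\kappa^N \le t,\, Z^N(0) \ge t + \epsilon\bigr) = \p\bigl(Z^N(t) \le U^N(t),\, Z^N(0) \ge t + \epsilon\bigr) \longrightarrow 0,
\end{equation*}
since $U^N(t) \to t$ and $Z^N(t) - Z^N(0) = \smallO_\p(1)$; the symmetric inequality vanishes by the mirror argument.

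I would close with a covering argument: pick $t^*$ from the tightness step, subdivide $[-t^*, t^*]$ at $a_i = -t^* + 2 t^* i/M$ with $2 t^*/M < \epsilon/2$, and bound
\begin{equation*}
\p(|\kappa^N - Z^N(0)| > \epsilon) \le \epsilon + \sum_{i=1}^{M} \bigl[\p(\kappa^N \le a_i,\, Z^N(0) \ge a_i + \epsilon/2) + \p(\kappa^N \ge a_{i-1},\, Z^N(0) \le a_{i-1} - \epsilon/2)\bigr].
\end{equation*}
Each summand tends to zero by the previous step, so $\kappa^N - Z^N(0) = \smallO_\p(1)$, which rewrites as the stated Bahadur identity after unfolding the definition of $Z^N(0)$ and dividing by $\sqrt{N}$. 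The main obstacle I anticipate is the passage from pointwise convergence of $Z^N(t) - Z^N(0)$ at each $t$ to a statement controlling the random argument $\kappa^N$; the monotonicity trick combined with a finite grid is precisely what handles this, and the $\epsilon/2$-slacks in the sandwich must be aligned carefully so that the remainder emerges as a genuine $\smallO_\p(N^{-1/2})$ on the original scale.
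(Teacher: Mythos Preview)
The paper does not supply its own proof of this lemma: it is quoted verbatim as a known result, attributed to \cite{G71}, and invoked as a black box in the proofs of Theorems~\ref{thm:2}, \ref{thm:3}, and Proposition~\ref{prop:5}. Your proposal reconstructs Ghosh's original argument correctly---the sandwich via $\{\kappa^N \le t\} = \{Z^N(t) \le U^N(t)\}$, the variance bound giving $Z^N(t)-Z^N(0)=\smallO_\p(1)$, tightness of $\kappa^N$, and the finite-grid covering---and indeed this is exactly the template the authors borrow for their proof of Lemma~\ref{lemma:ex} in the expectile appendix. So there is no discrepancy to report: your proof is sound and matches the source the paper cites, while the paper itself simply defers to that source.
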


\begin{proof}[Proof of Theorem \ref{thm:2}]
   By Proposition \ref{prop:3}, we know that 
$$\wDQVaR(N)-\mathrm{DQ}^{\VaR}_{\alpha}(\mathbf{X})=\frac{1}{\alpha}\left( {\wG^N}\left(\sum_{i=1}^{n}(\wF^N_i)^{-1}(1-\alpha)\right)-G\left(\sum_{i=1}^{n}F_i^{-1}(1-\alpha)\right)\right). 
$$
Using a straightforward decomposition, we have 
\begin{equation*}
   \begin{aligned}
&{\wG^N}\left(\sum_{i=1}^{n}(\wF^N_i)^{-1}(1-\alpha)\right)-G\left(\sum_{i=1}^{n}F_i^{-1}(1-\alpha)\right)\\=~&\underbrace{{\wG^N}\left(\sum_{i=1}^{n}(\wF^N_i)^{-1}(1-\alpha)\right)-G\left(\sum_{i=1}^{n}(\wF^N_i)^{-1}(1-\alpha)\right)}_{I_1}+\underbrace{G\left(\sum_{i=1}^{n}(\wF^N_i)^{-1}(1-\alpha)\right)-G\left(\sum_{i=1}^{n}F_i^{-1}(1-\alpha)\right)}_{I_2}\\=~&I_1+I_2.
\end{aligned} 
\end{equation*}


We first consider $I_1$.
For any $N \in \N$ and $y \in \R$, let $\gamma^N(y)=\sqrt{N}(\wG^N(y)-G(y))$.
Since $\sum_{i=1}^n X^{(k)}_i$, $k \in \N$, are i.i.d.,  there exist independently uniformly distributed random variables $U^{(1)}, U^{(2)}, \dots$, such that $\sum_{i=1}^n X^{(k)}_i=G^{-1}(U^\k)$ for $k \in \N$. Let $E^N(y)=\sum_{k=1}^N\mathbb{I}_{\{U^{(k)}\leq y\}}/N$ and $\delta^N(y)=\sqrt{N}(E^N(y)-y)$. Hence,  $\gamma^N(y)=\delta^N(G(y))$.
By Lemma \ref{lem:1}, we have 
\begin{align*}
 \sup_{y\in \mathbb{R}}|\gamma^N(y)-B^N(G(y))|= \sup_{y\in \mathbb{R}}|\delta^N(G(y))-B^N(G(y))|=\sup_{0\leq y\leq1}|\delta^N(y)-B^N(y)|=\smallO_\p(1),
\end{align*}
where $B^N$ is a standard Brownian bridge for each $N \in \N$.
 Since $\sqrt{N}I_1=\gamma^N\left(\sum_{i=1}^{n}(\wF^N_i)^{-1}(1-\alpha)\right)$, we have
 $$\left\vert \sqrt{N}I_1-B^N\left(G\left(\sum_{i=1}^{n}(\wF^N_i)^{-1}(1-\alpha)\right)\right)\right\vert\le  \sup_{y\in \mathbb{R}}|\gamma^N(y)-B^N(G(y))|=\smallO_\p(1),$$
 which implies
\begin{equation}\label{eq:DQVaR1}
    \sqrt{N}I_1=B^N\left(G\left(\sum_{i=1}^{n}(\wF^N_i)^{-1}(1-\alpha)\right)\right)+\smallO_\p(1).
\end{equation}
Using the same argument, we can show that 
\begin{equation}\label{eq:DQVaR2}
\sqrt{N}\left(  \wG^N\left( \sum_{i=1}^{n}F_i^{-1}(1-\alpha)\right)-G\left( \sum_{i=1}^{n}F_i^{-1}(1-\alpha)\right) \right)=B^N\left(G\left(\sum_{i=1}^{n}F_i^{-1}(1-\alpha)\right)\right)+\smallO_\p(1).
\end{equation}

For each standard Brownian bridge $B^N$, $N \in \N$, by \citet[Theorem 1.5.2]{CM83}, we have 
$$
\lim_{h\to 0} \sup_{0\leq t\leq 1-h}\frac{|B^N(t+h)-B^N(t)|}{(2h\ln (1/h))^{1/2}}=1 ~~~\mbox{a.s.}$$
Together with the fact that  $\sum_{i=1}^{n}(\wF^N_i)^{-1}(1-\alpha) \to \sum_{i=1}^{n}F_i^{-1}(1-\alpha)$ a.s. and   $G$ is continuous and its density function is  continuous and   bounded density at $\sum_{i=1}^n F_i^{-1}(1-\alpha)$ in Assumption \ref{ass:DQVaR}, we have 
\begin{equation}\label{eq:DQVaR3}
B^N\left(G\left(\sum_{i=1}^{n}(\wF^N_i)^{-1}(1-\alpha)\right)\right)=B^N\left(G\left(\sum_{i=1}^{n}F_i^{-1}(1-\alpha)\right)\right)+\smallO_\p(1).
\end{equation}
Combine \eqref{eq:DQVaR1} -- \eqref{eq:DQVaR3} and $t_{n+1}=\sum_{i=1}^n F^{-1}_i(1-\alpha)$, we further have 
\begin{equation}\label{eq:I_1}
\sqrt{N}I_1=\sqrt{N}\left(  \wG^N\left( t_{n+1}\right)-G\left( t_{n+1}\right) \right)+\smallO_\p(1).
\end{equation}

 We now consider the term $I_2$ by employing Lemma \ref{lem:Bahadur} along with the mean value theorem. Recall that $t_i=F_i^{-1}(1-\alpha)$ for $i \in [n]$ and $t_{n+1}=\sum_{i=1}^n t_i$. By Assumption \ref{ass:DQVaR},  we have  

\begin{align*}
\sqrt{N}I_2&=\sqrt{N}\left(G\left(\sum_{i=1}^{n}(\wF^N_i)^{-1}(1-\alpha)\right)-G\left(\sum_{i=1}^{n}F_i^{-1}(1-\alpha)\right)\right) \\&=\sqrt{N}g\left(t_{n+1}\right)\left( \sum_{i=1}^{n}(\wF^N_i)^{-1}(1-\alpha)-\sum_{i=1}^{n}F_{i}^{-1}(1-\alpha)\right) +\smallO_\p(1)\\
&=\sqrt{N}g\left(t_{n+1}\right){\left( \sum_{i=1}^{n}\frac{F_i(t_i)-\wF^N_i(t_i)}{f_i(t_i)}\right)} +\smallO_\p(1).
\end{align*}
Let 
 $$\mathbf H=\left(\frac{g(t_{n+1})}{f_1(t_1)}, \dots, \frac{g(t_{n+1})}{f_n(t_n)}\right) ~~~\mbox{and}~~~{\mathbf M}^N=\sqrt{N}\left(F_1(t_1)-{\wF}^{N}_1(t_1),\dots,F_n(t_n))-{\wF}^{N}_n(t_n)\right) .
$$
We have 
\begin{equation}\label{eq:I_2}
\sqrt{N}I_2=    \mathbf H^\top\mathbf M^N+\smallO_\p(1).
\end{equation}

For $k \in \N$,  let  $\mathbf Z^\k=(Z^\k_1, \dots, Z^\k_{n+1})$ where $Z^\k_i=F_i(t_i)-\mathbb{I}_{\{X^\k_i\leq t_i\}}$ for $i\in [n]$ and $Z^\k_{n+1}=G(t_{n+1})-\mathbb{I}_{\{\sum_{i=1}^n X^\k_i\leq t_{n+1}\}}$.  
We have 
$$
\sqrt{N}\overline{\mathbf Z}^N:=\sqrt{N}\frac{\sum_{k=1}^N \mathbf Z^\k}{N}=\left( \mathbf M^N, \sqrt{N}(G(t_{n+1})-{\wG^N}(t_{n+1}))\right).
$$
Combining with \eqref{eq:I_1} and \eqref{eq:I_2}, we have
$$\sqrt{N}\left(\wDQVaR(N)-\mathrm{DQ}^{\VaR}_\alpha(\mathbf X)\right)=\sqrt{N}\frac{I_1+I_2}{\alpha}=\sqrt{N}\mathbf A_\VaR^\top\overline{\mathbf Z}^N+\smallO_\p(1),$$ where $\mathbf A_\VaR=\left(\mathbf H/\alpha, -1/\alpha\right)$.
It is clear that $\mathbf Z^\k$, $k\in \N$, are i.i.d.~random vectors.  By the multiple central limit theorem, we  obtain
$
\overline{\mathbf Z}^N\lawto {\rm N}(0,\Sigma_\VaR),
$
where $\Sigma_\VaR$ is the covariance matrix of $\mathbf Z^\k$. 
Therefore, $$\sqrt{N}\left(\wDQVaR(N)-\mathrm{DQ}^{\VaR}_\alpha(\mathbf X)\right)\lawto {\rm N}(0,\mathbf A_\VaR^\top\Sigma_\VaR \mathbf A_\VaR).$$
As $\bX^{(k)}\sim \bX$,   $\Sigma_\VaR$ is also the covariance matrix of the random vector $(\id_{\{X_1 \le t_1\}}, \dots, \id_{\{X_n \le t_n\}}, \id_{\{S\le t_{n+1}\}})$.
\end{proof}

\begin{proof}[Proof of Theorem \ref{thm:3}]
Assumption \ref{ass:unique} implies that $\alpha^*$ satisfies
\begin{equation}\label{eq:1}
\frac{1}{\alpha^*}\int_{1-\alpha^*}^1 G^{-1}(p) \d p=\sum_{i=1}^n \frac{1}{\alpha}\int_{1-\alpha}^1 F_i^{-1}(p)\d p.
\end{equation}
 Let $\alpha^N=\alpha\wDQES(N)$.  For the empirical version, we have
\begin{equation}\label{eq:2}
\frac{1}{\alpha^N}\int_{1-\alpha^N}^1 (\wG^N)^{-1}(p)\d p=\sum_{i=1}^n \frac{1}{\alpha}\int_{1-\alpha}^1 (\wF^N_i)^{-1}(p)\d p.
\end{equation}
Combine \eqref{eq:1} and \eqref{eq:2}, we get
\begin{align*}
0&=\underbrace{\frac{1}{\alpha^N}\int_{1-\alpha^N}^{1} (\wG^N)^{-1}(p)\d p-\frac{1}{\alpha^*}\int_{1-\alpha^*}^{1} (\wG^N)^{-1}(p)\d p}_{I_1}+\underbrace{\frac{1}{\alpha^*}\int_{1-\alpha^*}^{1} (\wG^N)^{-1}(p)\d p-\int_{1-\alpha^*}^1 G^{-1}(p) \d p}_{I_2}\\
&~~~-\underbrace{\left(\sum_{i=1}^n \frac{1}{\alpha}\int_{1-\alpha}^1  (\wF^N_i)^{-1}(p)\d p-\sum_{i=1}^n \frac{1}{\alpha}\int_{1-\alpha}^1 F_i^{-1}(p)\d p\right)}_{I_3}.
\end{align*}
For $I_1$,  the mean value theorem gives
$$
I_1=(\alpha^N-\alpha^*)\frac{ \widetilde{\alpha}^N{(\wG^N})^{-1}(1-\widetilde{\alpha}^N)-\int^{1}_{1-\widetilde{\alpha}^N}({\wG^N})^{-1}(p)\d p}{(\widetilde{\alpha}^N)^2},
$$
for some $\widetilde{\alpha}^N$  lies between $\alpha^*$ and $\alpha^N$. 
Since $S$ has a positive density function in Assumption \ref{ass:DQES}, $\beta \mapsto \ES_\beta(S)$ is a strictly decreasing function. 
By Proposition \ref{prop:consistency}, $\alpha^N$ is a strong consistent estimator of $\alpha^*$; that is $\alpha^N \to \alpha^*$ a.s. 
With the regularity and continuity conditions in Assumption \ref{ass:DQES}, we obtain
$$
\frac{ \widetilde{\alpha}^N{(\wG^N})^{-1}(1-\widetilde{\alpha}^N)-\int^{1}_{1-\widetilde{\alpha}^N}({\wG^N})^{-1}(p)\d p}{(\widetilde{\alpha}^N)^2}=\frac{ \alpha^*G^{-1}(1-\alpha^*)-\int^{1}_{1-\alpha^*}G^{-1}(p)\d p}{(\alpha^*)^2}+\smallO_\p(1).
$$

Let $S^{(k)}=\sum_{i=1}^n X^{(k)}_i$ for $k \in \N$. For $I_2$, by Lemma \ref{lem:Bahadur}, we have
\begin{equation}\label{eq:I_2_ES}\begin{aligned}
I_2&= \frac{1}{\alpha^*}\int_{1-\alpha^*}^{1} \left( (\wG^N)^{-1}(p)-G^{-1}(p)\right) \d p\\
&=\frac{1}{\alpha^*}\int_{1-\alpha^*}^{1} \frac{G(G^{-1}(p))-\wG^N(G^{-1}(p))}{g(G^{-1}(p))}\d p+\smallO_\p(1)\\
&=\frac{1}{\alpha^*}\int_{s}^{\infty} \left(G(x)-\wG^N(x)\right)\d x+\smallO_\p(1)\\
&=\frac{1}{\alpha^*}\left(\int_{s}^{\infty} \frac{1}{N}\sum_{k=1}^N\id_{\{S^{(k)}>x\}}\d x-\int_{s}^{\infty} 1-G(x)\d x\right)+\smallO_\p(1)\\
&=\frac{1}{\alpha^* N}\sum_{k=1}^N\id_{\{S^{(k)}>s\}} \int_{s}^{S^{(k)}} 1\d x -\frac{1}{\alpha^*}\int_{s}^{\infty} 1-G(x)\d x+\smallO_\p(1)\\
&=\frac{1}{\alpha^* N}\sum_{k=1}^N(S^{(k)}-s)_+-\frac{1}{\alpha^*}\E[(S-s)_+]+\smallO_\p(1).
\end{aligned}\end{equation}
Similarly, for $I_3$, we have 
\begin{align}\label{eq:I_3_ES}
I_3= \frac{1}{\alpha}\sum_{i=1}^n\int_{1-\alpha}^1 \left((\wF^N_i)^{-1}(p)-F_i^{-1}(p)\right)\d p=\frac{1}{\alpha N}\sum_{k=1}^N\sum_{i=1}^n (X_i^{(k)}-t_i)_+-\frac{1}{\alpha}\sum_{i=1}^n\E[(X_i-t_i)_+]+\smallO_\p(1).
\end{align}
Let $Z^{(k)}=(S^{(k)}-s)_+/\alpha^*-\sum_{i=1}^n(X_i^{(k)}-t_i)_+/\alpha$ for $k \in \N$. It is clear that $Z^{(1)}, Z^{(2)}, \dots$ are i.i.d.~with expected value 
$$\mu=\E[Z^{(k)}]=\frac{1}{\alpha^*}\E[(S-s)_+]-\frac{1}{\alpha}\sum_{i=1}^n\E[(X_i-t_i)_+],$$ 
and variance $\sigma^2:=\var(Z^{(k)})$. 
Define $\mathbf{A}_\ES \in \R^{n+1}$  by $\mathbf{A}_\ES=(1/\alpha, \dots, 1/\alpha, -1/\alpha^*)$ and a sequence of random vectors $\mathbf Y^{(k)}=( (X^{(k)}_1-t_1)_+, \dots, (X^{(k)}_n-t_n)_+, (S^{(k)}-s)_+)$ for $k \in \N$. It is clear that $Z^{(k)}=-\mathbf A^{\top} \mathbf Y^{(k)}$. Let $\Sigma_\ES$ 
be the covariance matrix of the random vector $\mathbf Y=( (X_1-t_1)_+, \dots, (X_n-t_n)_+, (S-s)_+)$. We have $\sigma^2=\mathbf A_\ES^\top \Sigma_\ES \mathbf A_\ES< \infty$ by the regularity condition in Assumption \ref{ass:DQES}.
Let $\bar Z=\sum_{k=1}^N Z^{(k)}/N$. By
 central limit theorem, we have 
\begin{align*}
\sqrt{N}(I_2-I_3)=\sqrt{N}(\bar Z-\mu)+\smallO_\p(1)\lawto \mathrm{N}(0, \sigma^2).
\end{align*}
Let $$c:=\frac{ \alpha^*s-\int^{1}_{1-\alpha^*}G^{-1}(p)\d p}{(\alpha^*)^2/\alpha}=\frac{\VaR_{\alpha^*}(S)-\ES_{\alpha^*}(S)}{\DQES(\bX)}.$$ Positive density of $G$ in Assumption \ref{ass:DQES} further implies $c<0$. Moreover, we have $I_1=c(\alpha^N-\alpha^*)/\alpha+\smallO_\p(1)$.
Combine the fact $I_1+I_2-I_3=0$, we have 
\begin{align*}
\sqrt{N}(\wDQES(N) -\DQES(\bX))=\frac{I_1}{c}+\smallO_\p(1)=\frac{\sqrt{N}(I_3-I_2)}{c} \lawto \mathrm{N}\left(0, \frac{\sigma^2}{c^2}\right),
\end{align*}
which completes the proof. 
\end{proof}

\subsection{Proofs in Section \ref{sec:5}}\label{app:depend}

We first present a lemma for uniformly convergence of empirical distribution with $\alpha$-mixing data from \cite[Corollary 2.1]{CR92}.

\begin{lemma}\label{lem:alpha-dis}
Let $\{X^{(k)}\}_{k \ge 1}$ be a stationary $\alpha$-mixing sequence of real-valued random variables with distribution $F$ and mixing coefficient $\alpha(m)$ satisfying 
\begin{equation}\label{eq:alpha}
\sum_{m=1}^ \infty m^{-1} (\log m) (\log \log m)^{1+\delta} \alpha(m) < \infty,~~~\mbox{for some}~~~\delta>0,
\end{equation}
and let $\wF^{N}$ be the empirical distribution function based on the $X^{(1)}, \dots, X^{(N)}$.
 Then, $$\sup_{x \in \R} \vert \wF(x) -F(x)\vert  \to 0, ~~~\p\text{-a.s.}$$
\end{lemma}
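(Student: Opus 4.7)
My plan is to establish the lemma via the classical Glivenko--Cantelli route: reduce to pointwise almost sure convergence at each $x$, and then upgrade to uniform convergence using the monotonicity of both $\wF^N$ and $F$. The $\alpha$-mixing hypothesis enters only in the first step.

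For the pointwise step, fix $x \in \R$ and observe that $\{\id_{\{X^{(k)} \le x\}}\}_{k \ge 1}$ is a bounded, stationary sequence whose mixing coefficients are dominated by $\alpha(m)$, since each indicator is a measurable function of $X^{(k)}$. The goal is to deduce $\wF^N(x) \to F(x)$ a.s. I would invoke the Ibragimov-type covariance inequality $|\cov(\id_{\{X^{(1)} \le x\}}, \id_{\{X^{(1+m)} \le x\}})| \le 4\alpha(m)$ to control $\var\bigl(\sum_{k=1}^N \id_{\{X^{(k)} \le x\}}\bigr)$, and then combine this with a blocking argument along the geometric subsequence $N_j = 2^j$ plus a Borel--Cantelli step. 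The condition $\sum m^{-1}(\log m)(\log \log m)^{1+\delta} \alpha(m) < \infty$ is precisely tailored so that the tail probabilities of the deviations are summable along this subsequence. Monotonicity of partial sums then allows interpolation between $N_j$ and $N_{j+1}$, yielding almost sure convergence for the full sequence.

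For the uniform step, fix $\epsilon > 0$ and choose a partition $-\infty = x_0 < x_1 < \dots < x_m < x_{m+1} = +\infty$ with $F(x_{i+1}^-) - F(x_i) \le \epsilon$ for every $i$; this is always possible because $F$ has only finitely many jumps of size $\ge \epsilon$. Let $\Omega_0$ be the complement of the countable union of null sets on which pointwise convergence fails at some $x_i$ or $x_i^-$. On $\Omega_0$, monotonicity of both $\wF^N$ and $F$ yields, for every $x \in [x_i, x_{i+1})$,
\[
\wF^N(x_i) - F(x_{i+1}^-) \;\le\; \wF^N(x) - F(x) \;\le\; \wF^N(x_{i+1}^-) - F(x_i),
\]
so that
\[
\sup_{x \in \R} |\wF^N(x) - F(x)| \;\le\; \epsilon + \max_{i}\bigl( |\wF^N(x_i) - F(x_i)| \vee |\wF^N(x_i^-) - F(x_i^-)| \bigr).
\]
Letting $N \to \infty$ on $\Omega_0$ gives $\limsup_{N \to \infty} \sup_{x \in \R} |\wF^N(x) - F(x)| \le \epsilon$, and letting $\epsilon$ run through a countable sequence tending to $0$ completes the proof.

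The main obstacle is the pointwise strong law under this precise mixing rate. If one only had $\sum \alpha(m) < \infty$, the covariance estimate together with a direct Borel--Cantelli argument would suffice; the weaker logarithmic condition forces the sharper blocking and maximal-inequality technique of Cai--Roussas, which is genuinely delicate. By contrast, the uniformization step is essentially combinatorial and requires no additional probabilistic input beyond monotonicity.
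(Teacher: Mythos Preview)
The paper does not prove this lemma; it is simply quoted as Corollary~2.1 of \cite{CR92}. Your proposal therefore already goes beyond what the paper supplies, and the overall architecture---pointwise strong law for each fixed $x$, then Glivenko--Cantelli uniformisation via monotonicity---is sound. The uniformisation paragraph is correct as written.

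The genuine gap is in the pointwise step. Your claim that ``monotonicity of partial sums then allows interpolation between $N_j$ and $N_{j+1}$'' fails for a \emph{geometric} subsequence $N_j=2^j$: for $N_j\le N<N_{j+1}$ the increment $S_N-S_{N_j}$ is a sum of up to $N_j$ indicators and can be of order $N_j$, so one only obtains $\tfrac12\,\wF^{N_j}(x)\le \wF^{N}(x)\le 2\,\wF^{N_{j+1}}(x)$, which does not force convergence to $F(x)$. Interpolation of this kind requires $(N_{j+1}-N_j)/N_j\to 0$, i.e.\ a sub-geometric subsequence; but then the Chebyshev--Borel--Cantelli half of your argument needs a summability condition strictly stronger than $\sum_m m^{-1}\alpha(m)<\infty$. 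The two halves of your sketch are individually plausible but mutually incompatible.

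A much cheaper repair is available: the qualitative statement $\sup_x|\wF^N(x)-F(x)|\to 0$ a.s.\ holds under the bare hypothesis $\alpha(m)\to 0$. Strong mixing implies ergodicity of the shift, so Birkhoff's ergodic theorem gives $\wF^N(x)\to F(x)$ a.s.\ for every fixed $x$ (the integrand $\id_{\{X\le x\}}$ is bounded); your monotonicity argument then upgrades this to uniform convergence. The specific condition \eqref{eq:alpha} is what \cite{CR92} need for a uniform \emph{rate}, not for the bare Glivenko--Cantelli conclusion recorded in the lemma, so your remark that it is ``precisely tailored'' for the subsequence Borel--Cantelli step misreads its role.
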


\begin{proof}[Proof of Proposition \ref{prop:alpha-con}]

From the proof of Theorem \ref{thm:1},  we have the convergence of $\DQrho$ holds if $X^{(k)}_i \lawto X_i$ for all $i \in [n]$ and $\wG \lawto G$. 

Since the mixing coefficient of $\{\bX^{(k)}\}_{k \ge 1}$  satisfies \eqref{eq:alpha}, the mixing coefficients of $\{X^{(k)}_i\}_{k \ge 1}$, $i\in [n]$ and $\{\sum_{i=1}^n X^{(k)}_i\}_{k \ge 1}$ also satisfies \eqref{eq:alpha}. Therefore, we have
 $\sup_{x \in \R} \vert \wF_i(x) -F_i(x)\vert  \to 0$, $\p$-a.s. for all $i \in [n]$ and $\sup_{x \in \R} \vert \widehat G(x) -G(x)\vert  \to 0$, $\p$-a.s.~by Lemma \ref{lem:alpha-dis}. 
 Hence, following the proof of Theorem \ref{thm:1} and the uniformly converges of empirical distributions for $X_i$ and $S$, we  can complete the proof. 
 \end{proof}
Recall that the three key tools we used in the proofs of Theorem \ref{thm:1} and \ref{thm:2} are the strong convergence of empirical process $E^N$ (Lemma \ref{lem:1}), Bahadur's representation (Lemma \ref{lem:Bahadur}) and central limit theorem. We first present similar lemmas for $\alpha$-mixing sequences.


We present the following strong approximation lemma for a stationary $\alpha$-mixing sequence in \cite{PP80}.
\begin{lemma}\label{lem:emp-alpha}
Let $\{U^{(k)}\}_{k\ge 1}$ be a stationary $\alpha$-mixing sequence of uniform distributed random variables
 satisfying $$\alpha(m)=\mathcal{O}(m^{-5-\epsilon})~~~\text{for some}~~~ 0<\epsilon\le\frac{1}{4}.$$
 Then, without changing its distribution, we can redefine the $\delta_N$ of $\{U^{(k)}\}_{k\ge 1}$ on a richer probability space on which there exists a Kiefer process $\{K(s,t); ~0\le s\le 1,~ t \ge 0\}$
 and a constant $\lambda>0$ depending only on $\epsilon$ such that $$\sup_{N \le T} \sup_{0\le s\le 1} \vert N^{1/2} \delta_N(s)-K(s,N) \vert =\O(T^{1/2} (\log(T))^{-\lambda}) ~~~\text{a.s.}$$
\end{lemma}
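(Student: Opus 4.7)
The plan is to build a Kiefer-process coupling for the empirical process of the mixing sequence by reducing to an almost-independent blocking scheme and then invoking the i.i.d.\ Komlós--Major--Tusnády-type strong approximation. First I would partition the index set $\{1,\dots,N\}$ into alternating ``long'' blocks $H_j$ of length $p$ and ``short'' blocks $I_j$ of length $q$, with $p,q\to\infty$ and $p/q\to\infty$ chosen as powers of $N$ that will be tuned at the end against the mixing rate $\alpha(m)=\O(m^{-5-\epsilon})$. The short blocks will serve as buffers so that the random variables in consecutive long blocks are separated by at least $q$ indices; the $\alpha$-mixing assumption then makes the long-block vectors $H_1,H_2,\ldots$ nearly independent with total-variation distance controlled by $\alpha(q)$.

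Next I would apply a Berkes--Philipp coupling lemma on a richer probability space: redefine the $H_j$'s together with independent copies $H_j^\ast$ having the same marginal law so that $\p(H_j\ne H_j^\ast) \le C\alpha(q)$. Summing these probabilities over the $\O(N/p)$ long blocks gives a total replacement error of order $(N/p)\alpha(q)$, which, combined with contributions from the short blocks of size $\O(Nq/p)$, produces a negligible remainder once $p,q$ are calibrated. The heart of the argument is then to approximate the uniform empirical process built from the independent blocks by a Kiefer process. This is done by first expressing the block-empirical process as a sum of independent centered summands indexed by $s\in[0,1]$ and then applying a KMT-type vectorial strong approximation (in the spirit of Philipp's earlier work on strong approximation of partial sums in Banach spaces, refined to handle the Skorokhod space $D[0,1]$) block by block; the resulting Gaussian approximants, assembled in $N$, define the Kiefer process $K(s,N)$ with the correct covariance $(s_1\wedge s_2-s_1s_2)\,(t_1\wedge t_2)$.

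To obtain the uniform-in-$N\le T$ error bound, I would combine the three sources of error: (i) the long-block KMT error, which after summation yields $\O(T^{1/2}(\log T)^{-\lambda_1})$ for some $\lambda_1>0$ coming from the i.i.d.\ rate; (ii) the coupling error of order $(N/p)\alpha(q)$ extended uniformly via a maximal inequality over $N\le T$; and (iii) the short-block fluctuation of order $(Nq/p)^{1/2}$ controlled by Rosenthal-type moment inequalities for $\alpha$-mixing sequences. Choosing $p=\lfloor N^{a}\rfloor$, $q=\lfloor N^{b}\rfloor$ with $0<b<a<1$, the condition $\alpha(m)=\O(m^{-5-\epsilon})$ makes (ii) of order $N^{1-a-b(5+\epsilon)}$, which can be forced to be $o(T^{1/2}(\log T)^{-\lambda})$ for a suitable $\lambda=\lambda(\epsilon)>0$; simultaneously (iii) becomes $o(T^{1/2}(\log T)^{-\lambda})$. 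Taking $\lambda$ to be the minimum of the three individual exponents completes the bound.

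The main obstacle is the construction and uniform control of the Kiefer-process coupling: it requires a measurable joint redefinition of the sequence and the Gaussian process on an enlarged probability space so that the approximation error is uniform simultaneously in the ``time'' parameter $N\le T$ and in the ``level'' parameter $s\in[0,1]$. This uniformity, together with the delicate optimization of the block sizes $p,q$ against the polynomial mixing rate $\alpha(m)=\O(m^{-5-\epsilon})$, is the technical core of \cite{PP80}; the exponent $-5-\epsilon$ is precisely what is needed so that the short-block buffer and the coupling remainder together leave a genuine power-of-log gain over the trivial $O(N^{1/2})$ scale, yielding the stated rate $\O(T^{1/2}(\log T)^{-\lambda})$.
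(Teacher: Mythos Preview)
The paper does not supply its own proof of this lemma; it is quoted directly from \cite{PP80} and used as a black box. Your sketch---blocking into long and short segments, Berkes--Philipp coupling to nearly independent blocks, and a KMT-type strong approximation assembled into a Kiefer process---is in spirit the strategy of \cite{PP80}, so there is nothing to compare against within the present paper; if anything, note that the actual Philipp--Pinzur argument handles the two-parameter (Kiefer) approximation via martingale and dyadic-expansion techniques rather than a direct block-by-block KMT in $D[0,1]$, so your outline is a plausible heuristic but would need substantial refinement to recover the precise rate $\O(T^{1/2}(\log T)^{-\lambda})$.
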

It is well known that the standardized process of the Kiefer process
$$B(s):=t^{-1/2}K(s, t), ~~~s\in [0,1],$$
is a Brownian bridge for every fixed $t>0$. Hence
Lemma \ref{lem:emp-alpha} implies that for stationary $\alpha$-mixing sequence with $\alpha(m) =\O(m)$, there exists a Browian bridge process such that
$$\sup_{y\in \mathbb{R}}|\delta^N(y)-B^N(y)|={\smallO}_{\p}(1).$$

\begin{lemma}\label{lem:quan-alpha}
Let $\delta>0$, $\beta>0$ and $p\in (0,1)$. Assume $\{X^{(k)}\}_{k \ge 1}$ is a second-order stationary $\alpha$-mixing sequence with the mixing coefficients such that  $\alpha(m)=\O(m^{-\beta})$ where $\beta>\max\{3+5/(1+\delta), 1+2/\delta\}$ and $\p(X_i=X_j)=0$ for any $i\neq j$. Assume that the common marginal distribution function $F(x)$ has a positive continuous density $f(x)$ where $f(x)$ and $f'(x)$ are bounded in some neighborhood of $F^{-1}(p)$. Then, as $N \to \infty$,
$$(\wF^N)^{-1}(p)=F^{-1}(p)+\frac{p-\wF^N(F^{-1}(p))}{f(F^{-1}(p))}+O\left(N^{-\frac{3}{4}+\frac{\delta}{4(2+\delta)}}(\log\log N\cdot\log N)^{\frac{1}{2}}\right), ~~~\text{a.s.}$$

\end{lemma}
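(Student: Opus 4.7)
The plan is to follow the classical Bahadur--Kiefer strategy, replacing the independence-based empirical-process tools by their $\alpha$-mixing counterparts. Set $\xi=F^{-1}(p)$, $\eta_N=(\wF^N)^{-1}(p)$, and $\zeta_N=\eta_N-\xi$. The goal is to show
$$\zeta_N=\frac{p-\wF^N(\xi)}{f(\xi)}+R_N,$$
where $R_N=\O(N^{-3/4+\delta/(4(2+\delta))}(\log\log N\cdot\log N)^{1/2})$ almost surely.

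First I would establish the preliminary almost-sure rate $|\zeta_N|=\O(a_N)$ with $a_N=N^{-1/2}(\log\log N)^{1/2}$. Since $f(\xi)>0$ and $F\in C^1$ near $\xi$, the mean value theorem reduces this to an LIL-type bound on $|\wF^N(\xi)-p|$; applying a law of the iterated logarithm for bounded stationary $\alpha$-mixing sequences (e.g.\ Oodaira--Yoshihara, valid under the assumed polynomial mixing rate) to the bounded variables $\id_{\{X^{(k)}\le \xi\}}$ gives the desired order. I would then Taylor-expand $F$ at $\xi$, using the local boundedness of $f'$, to obtain $F(\eta_N)=p+f(\xi)\zeta_N+\O(\zeta_N^2)$. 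The no-ties assumption $\p(X^{(i)}=X^{(j)})=0$ and the definition of the sample quantile give $\wF^N(\eta_N)=p+\O(1/N)$. Rearranging,
$$f(\xi)\,\zeta_N=(p-\wF^N(\xi))+\bigl[(\wF^N(\eta_N)-F(\eta_N))-(\wF^N(\xi)-F(\xi))\bigr]+\O(\zeta_N^2+1/N).$$

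The main obstacle is the bracketed empirical-process increment, which I would control through a uniform oscillation bound
$$\Delta_N:=\sup_{|y-\xi|\le C a_N}\bigl|(\wF^N(y)-F(y))-(\wF^N(\xi)-F(\xi))\bigr|.$$
In the i.i.d.\ case this is handled by a Dvoretzky--Kiefer--Wolfowitz-type exponential inequality; under $\alpha$-mixing I would instead discretise the interval $[\xi-Ca_N,\xi+Ca_N]$ on a grid of mesh $\asymp N^{-1}$, apply a Bernstein-type inequality for bounded stationary $\alpha$-mixing sequences (using Bosq's block-splitting technique together with the mixing decay $\alpha(m)=\O(m^{-\beta})$) to the centred indicators $\id_{\{X^{(k)}\le y\}}-\id_{\{X^{(k)}\le \xi\}}$, whose individual variance is $\O(a_N)$, and invoke monotonicity of $\wF^N$ and $F$ to pass from the grid to the full supremum. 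Optimising the block length against the mixing tail yields the loss factor $\delta/(4(2+\delta))$ in the final exponent, and the $(\log\log N\cdot\log N)^{1/2}$ factor comes from the LIL-scale $a_N$ combined with the union bound over the $\O(N)$ grid points. This is the only step where the $\alpha$-mixing assumption costs a quantitative loss over the i.i.d.\ Bahadur bound and is where the arithmetic constraint $\beta>\max\{3+5/(1+\delta),\,1+2/\delta\}$ enters, as it makes the Bernstein-plus-blocking argument go through.

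Once $\Delta_N$ is controlled at the stated rate, substitution yields $R_N=\Delta_N/f(\xi)+\O(\zeta_N^2+1/N)$, and since $\zeta_N^2=\O(N^{-1}\log\log N)$ is of strictly smaller order than the target remainder, the representation follows. The hardest part is therefore purely the mixing-sensitive oscillation bound on $\Delta_N$; the remaining steps are routine bookkeeping.
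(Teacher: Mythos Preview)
The paper does not actually prove this lemma: it is stated in Appendix~\ref{app:depend} alongside the other technical tools (Lemmas~\ref{lem:alpha-dis}, \ref{lem:emp-alpha}, \ref{lem:CLT-alpha}), each of which is quoted from the literature without proof, and the text immediately passes to the consequence that the remainder is $\smallO_\p(1)$. So there is no ``paper's own proof'' to compare against; the authors treat this Bahadur representation under $\alpha$-mixing as a known result (versions of it appear in the literature on quantile estimation for dependent data, with the specific exponent and the constraint on $\beta$ matching those results).

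Your proposal, by contrast, sketches an actual proof along the classical Bahadur--Kiefer route: reduce to an oscillation bound on the empirical process over a shrinking window of width $a_N\asymp N^{-1/2}(\log\log N)^{1/2}$, discretise, and apply a Bernstein-type inequality for $\alpha$-mixing sequences with block-splitting. This is the standard strategy and is correct in outline. The one place to be careful is the bookkeeping that produces the precise loss $\delta/(4(2+\delta))$ and the threshold $\beta>\max\{3+5/(1+\delta),1+2/\delta\}$: these arise from balancing the block length, the variance $\O(a_N)$ of the indicator increments, and the covariance-series bound $\sum_m \alpha(m)^{\delta/(2+\delta)}$, and getting the exact constants requires tracking the exponents through the optimisation rather than just asserting that ``optimising yields'' them. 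But the architecture is right, and since the paper offers nothing here, your sketch is strictly more than what the paper provides.
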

Lemma \ref{lem:quan-alpha} implies 
$$(\wF^N)^{-1}(p)=F^{-1}(p)+\frac{p-\wF^N(F^{-1}(p))}{f(F^{-1}(p))}+\smallO_\p(1)$$
for any  second-order stationary $\alpha$-mixing sequence  $\{X^{(k)}\}_{k \ge 1}$ such that  $\alpha(m)=\O(m^{-\beta})$ with $\beta>3$.

Below is CLT for $\alpha$-mixing sequence from \cite{I62}
\begin{lemma}\label{lem:CLT-alpha}
Suppose $\{X^{(k)}\}_{k \ge 1}$ is an $\alpha$-mixing strictly stationary where $\E[X^{(k)}]=0$, $\E[\vert X^{(k)}\vert^{2+\delta}<\infty]$ and 
$
\sum_{m=1}^\infty (\alpha(m))^{\delta/(2+\delta)}< \infty$ 
for some $\delta>0$.
Let $S_N=\sum_{k=1}^N X^{(k)}$ for $N \ge 1$. Then 
$$\sigma^2=\var(X^{(1)})+2\sum_{k=2}^\infty \cov(X^{(1)}, X^{(k)})< \infty.$$
Moreover, if $\sigma >0$, then
$$\frac{S_N}{\sigma \sqrt{N}}\lawto \mathrm{N}(0,1), ~~~\text{as} ~~~N \to \infty.$$
\end{lemma}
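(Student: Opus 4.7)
The plan is to establish Ibragimov's classical central limit theorem by combining two standard tools: Davydov's covariance inequality, which pins down the limiting variance, and Bernstein's big-block/small-block decomposition, which reduces the dependent partial sum to an essentially independent one.

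First I would verify $\sigma^2<\infty$ and identify it as the limit of $\var(S_N)/N$. Davydov's covariance inequality asserts that for exponents $p,q,r$ with $1/p+1/q+1/r=1$,
$$|\cov(Y,Z)|\le 8\,\alpha(\sigma(Y),\sigma(Z))^{1/r}\|Y\|_p\|Z\|_q.$$
Taking $p=q=2+\delta$ yields $1/r=\delta/(2+\delta)$, so
$$|\cov(X^{(1)},X^{(k)})|\le 8\,\alpha(k-1)^{\delta/(2+\delta)}\|X^{(1)}\|_{2+\delta}^{2},$$
and the summability hypothesis gives absolute convergence of the series defining $\sigma^2$. Strict stationarity then gives
$$\frac{\var(S_N)}{N}=\var(X^{(1)})+2\sum_{k=1}^{N-1}\!\left(1-\frac{k}{N}\right)\!\cov(X^{(1)},X^{(k+1)}),$$
and dominated convergence with the Davydov bound as dominating sequence yields $\var(S_N)/N\to\sigma^2$.

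For the CLT itself I would use the Bernstein blocking technique. Choose integer sequences $p_N,q_N$ with $p_N,q_N\to\infty$, $q_N/p_N\to 0$, $p_N/N\to 0$, and crucially $k_N\alpha(q_N)\to 0$ where $k_N=\lfloor N/(p_N+q_N)\rfloor$; this last condition is achievable because the summability hypothesis forces $\alpha(m)\to 0$, so $q_N$ may be chosen to grow slowly enough. Partition $\{1,\dots,N\}$ into alternating big blocks of length $p_N$ and small blocks of length $q_N$, with sums $U_j$ over big blocks and $V_j$ over small blocks. Three substeps follow. (a) The normalized small-block sum is negligible: by stationarity and Davydov, $\var(V_j)\le Cq_N$ and the cross covariances decay in $\alpha$, giving $\var(\sum_j V_j)/N=O(q_N/(p_N+q_N))\to 0$. (b) The big blocks are asymptotically independent: since consecutive big blocks are separated by a gap of length $q_N$, the Volkonskii--Rozanov characteristic function lemma yields
$$\left|\E\exp\!\left(\frac{\mathrm{i}t}{\sigma\sqrt{N}}\sum_{j=1}^{k_N}U_j\right)-\prod_{j=1}^{k_N}\E\exp\!\left(\frac{\mathrm{i}t}{\sigma\sqrt{N}}U_j\right)\right|\le 16\,k_N\alpha(q_N)\to 0.$$
(c) The Lindeberg--Feller CLT applies to the array of \emph{independent} copies of $U_j/(\sigma\sqrt{N})$: the variances sum to $\sigma^2$ by step (a) together with the calculation from the first paragraph, and the Lindeberg condition is reduced to a Lyapunov-type estimate
$$\frac{1}{N^{1+\delta/2}}\sum_{j=1}^{k_N}\E|U_j|^{2+\delta}\to 0,$$
which follows from a Rosenthal-type inequality $\E|U_j|^{2+\delta}\le Cp_N^{1+\delta/2}$ for $\alpha$-mixing sums (as in Yokoyama or Shao). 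Combining (a)--(c) identifies the limiting characteristic function of $S_N/(\sigma\sqrt{N})$ with $e^{-t^2/2}$, giving the desired convergence.

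The main obstacle is step (c): the Rosenthal-type moment bound on the big blocks under $\alpha$-mixing is delicate, and balancing $p_N$ and $q_N$ so that (a), (b), and (c) hold simultaneously \emph{given only the summability of $\alpha(m)^{\delta/(2+\delta)}$} (rather than a polynomial rate) requires care — the choice of $q_N$ must be slow enough that $k_N\alpha(q_N)\to 0$, yet fast enough that $q_N/p_N\to 0$ while $p_N$ is still small relative to $N$. A standard workaround is to define $q_N$ implicitly as the smallest integer such that $\alpha(q_N)\le 1/\log N$ and then take $p_N=q_N\lfloor\log N\rfloor$, which makes all three conditions automatic once $\alpha(m)\to 0$.
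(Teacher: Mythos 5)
The paper does not prove this lemma at all: it is quoted verbatim from Ibragimov (1962) as a known result, so your proposal is being measured against the classical proof rather than anything in the text. Your overall architecture --- Davydov's covariance inequality for the convergence of $\sigma^2$ and of $\var(S_N)/N$, then Bernstein big-block/small-block decomposition with the Volkonskii--Rozanov characteristic-function lemma --- is exactly the right skeleton, and steps (a) and (b) are sound as sketched.

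The genuine gap is in step (c). The Rosenthal/Yokoyama-type bound $\E|U_j|^{2+\delta}\le C p_N^{1+\delta/2}$ for $\alpha$-mixing block sums is \emph{not} available under the stated hypotheses: Yokoyama's theorem (and its refinements by Shao and others) bounds the $(2+\delta)$-th moment of $S_n$ only under either a moment assumption strictly stronger than $\E|X|^{2+\delta}<\infty$ or a mixing rate strictly stronger than $\sum_m\alpha(m)^{\delta/(2+\delta)}<\infty$ (typically an extra polynomial weight $m^{\delta/2}$ inside the series). So the Lyapunov condition cannot be verified this way, and the ``standard workaround'' you propose only rebalances the block lengths --- it does nothing for the missing moment inequality. (It is also internally flawed: with $q_N$ the smallest integer satisfying $\alpha(q_N)\le 1/\log N$ and $p_N=q_N\lfloor\log N\rfloor$, one gets $k_N\alpha(q_N)\approx N/(q_N\log^2 N)$, which diverges whenever $\alpha$ decays quickly, e.g.\ exponentially.) The classical repair, which is how Ibragimov actually closes the argument, is truncation: write $X=X^{(M)}+R^{(M)}$ with $X^{(M)}$ bounded and centred. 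For the bounded part the Lyapunov condition in the blocking scheme is trivial ($|U_j|\le 2Mp_N$ with $p_N$ growing slowly), and the variance computation from your first paragraph, applied to $R^{(M)}$, shows $\sup_N\var\bigl(\sum_{k\le N}R^{(M)}_k\bigr)/N\to 0$ as $M\to\infty$ because $\|R^{(M)}\|_{2+\delta}\to 0$ enters the Davydov bound quadratically; a standard three-$\epsilon$ argument then transfers the CLT to $X$. Alternatively one can verify the Lindeberg (rather than Lyapunov) condition directly from the uniform integrability of $\{S_n^2/n\}$, which is a separate lemma in Ibragimov--Linnik. Either way, an additional idea beyond what you wrote is needed.
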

If $\alpha(m)=\O(m^{-\epsilon})$ for some $\epsilon>0$, then we need select $\delta> 2/(\epsilon-1)$ in Lemma \ref{lem:CLT-alpha}.

\begin{proof}[Proof of Theorem \ref{thm:alphaDQVaR}]
  We follow the proof of  Proposition \ref{prop:3} to decompose 
$\wDQVaR(N)-\mathrm{DQ}^{\VaR}_{\alpha}(\mathbf{X})$ into 
$I_1+I_2$. For  $I_1$, using Lemma \ref{lem:emp-alpha}, we still have
\begin{align*}
 \sup_{y\in \mathbb{R}}|\gamma^N(y)-B^N(G(y))|=\smallO_\p(1),
\end{align*}
where $B^N$ is a standard Brownian bridge for each $N \in \N$. Since Assumption \ref{ass:DQVaR-alpha} is stronger than Assumption \ref{ass:DQVaR},  using the same argument in the  proof of  Proposition \ref{prop:3}, we can express $I_1$ as 
\begin{equation}\label{eq:I_11}
\sqrt{N}I_1=\sqrt{N}\left(  \wG^N\left( t_{n+1}\right)-G\left( t_{n+1}\right) \right)+\smallO_\p(1).
\end{equation}

For  $I_2$,  by  Lemma \ref{lem:quan-alpha} and  the mean value theorem, we have  
\begin{equation}\label{eq:I_22}
\sqrt{N}I_2=    \mathbf H^\top\mathbf M^N+\smallO_\p(1),
\end{equation}
where 
 $$\mathbf H=\left(\frac{g(t_{n+1})}{f_1(t_1)}, \dots, \frac{g(t_{n+1})}{f_n(t_n)}\right) ~~~\mbox{and}~~~{\mathbf M}^N=\sqrt{N}\left(F_1(t_1)-{\wF}^{N}_1(t_1),\dots,F_n(t_n))-{\wF}^{N}_n(t_n)\right) .
$$

For $k \ge 1$,  let  $\mathbf Z^\k=(Z^\k_1, \dots, Z^\k_{n+1})$ where $Z^\k_i=F_i(t_i)-\mathbb{I}_{\{X^\k_i\leq t_i\}}$ for $i\in [n]$ and $Z^\k_{n+1}=G(t_{n+1})-\mathbb{I}_{\{\sum_{i=1}^n X^\k_i\leq t_{n+1}\}}$. 
Combining with \eqref{eq:I_11} and \eqref{eq:I_22}, we have
$$\sqrt{N}\left(\wDQVaR(N)-\mathrm{DQ}^{\VaR}_\alpha(\mathbf X)\right)=\sqrt{N}\frac{I_1+I_2}{\alpha}=\frac{\sum_{k=1}^N\mathbf A_\VaR^\top{\mathbf Z}^{(k)}}{\sqrt{N}}+\smallO_\p(1),$$ where $\mathbf A_\VaR=\left(\mathbf H/\alpha, -1/\alpha\right)$.
It is clear that $\{\mathbf A_{\VaR}^\top\mathbf Z^\k\}_{k \ge 1}$ is a stationary $\alpha$-mixing sequence with $\E[\mathbf A_{\VaR}^\top\mathbf Z^\k]=0$.  By Assumption \ref{ass:alpha},  $\alpha(\mathbf A_{\VaR}^\top\mathbf Z^\k,m)\le \alpha(\{\mathbf X^{(k)}\}_{k \ge 1})=\O(m^{-5-\epsilon})$. Hence, for $\delta>1/2$,  $\sum_{m \ge 1} (\alpha(m))^{\delta/(2+\delta)}< \infty$. By Assumption \ref{ass:DQVaR-alpha}, $\mathbf A_{\VaR}^\top\mathbf Z^\k$ is bounded for all $k \ge 1$; hence, $\E[\vert \mathbf A_{\VaR}^\top\mathbf Z^\k\vert ^{2+\delta}]< \infty$. Therefore, 
$\{\mathbf A_{\VaR}^\top\mathbf Z^\k\}_{k \ge 1}$ satisfies conditions in Lemma  \ref{lem:CLT-alpha}. Hence, by Lemma \ref{lem:CLT-alpha}, we have 
$$\sqrt{N}\left(\wDQVaR(N)-\mathrm{DQ}^{\VaR}_\alpha(\mathbf X)\right)\lawto {\rm N}(0,\sigma_\VaR^2),$$
where 
\begin{align*}
\sigma^2_\VaR&=\var(\mathbf A_{\VaR}^\top\mathbf Z^{(1)})+2\sum_{k=2}^\infty \cov(\mathbf A_{\VaR}^\top\mathbf Z^{(1)}, \mathbf A_{\VaR}^\top\mathbf Z^\k)\\
&=\mathbf A_{\VaR}^\top\left(\var(\mathbf Z^{(1)})+2\sum_{k=2}^\infty\cov(\mathbf Z^{(1)},\mathbf Z^\k)\right) \mathbf A_{\VaR}\\
&=\mathbf A_{\VaR}^\top\left(\var(\mathbf I^{(1)})+2\sum_{k=2}^\infty\cov(\mathbf I^{(1)},\mathbf I^\k)\right) \mathbf A_{\VaR},
\end{align*}
with $\mathbf I^{(k)}=(\id_{\{X_1^{(k)}\le t_1\}}, \dots, \id_{\{X_n^{(k)}\le t_n\}})$. Hence, we have  $\sigma_\VaR=\mathbf A_{\VaR}^\top\Sigma^\alpha_\VaR\mathbf A_{\VaR}$ with $\Sigma^\alpha_{\VaR}=\var(\mathbf I^{(1)})+2\sum_{k=2}^\infty\cov(\mathbf I^{(1)},\mathbf I^\k)$.
\end{proof}

\begin{proof}[Proof of Theorem \ref{thm:alphaDQES}]

Define $\mathbf{A}_\ES \in \R^{n+1}$  by $\mathbf{A}_\ES=(1/\alpha, \dots, 1/\alpha, -1/\alpha^*)$, a sequence of random vectors $\mathbf Y^{(k)}=( (X^{(k)}_1-t_1)_+, \dots, (X^{(k)}_n-t_n)_+, (S^{(k)}-s)_+)$ for $k\ge 1$ and $$c:=\frac{ \alpha^*s-\int^{1}_{1-\alpha^*}G^{-1}(p)\d p}{(\alpha^*)^2/\alpha}=\frac{\VaR_{\alpha^*}(S)-\ES_{\alpha^*}(S)}{\DQES(\bX)}.$$
Following the discussion in the proof of Theorem \ref{thm:2}, we have 
\begin{align*}
\sqrt{N}(\wDQES(N) -\DQES(\bX))=-\frac{\sum_{k=1}^N \left(\mathbf A^{\top} \mathbf Y^{(k)}-\E[\mathbf A^{\top} \mathbf Y^{(k)}]\right)}{c\sqrt{N}}.
\end{align*}

Let  $Z^{(k)}=\mathbf A^{\top} \mathbf Y^{(k)}-\E[\mathbf A^{\top} \mathbf Y^{(k)}]$ for all ${k \ge 1}$. It is clear that $\{Z^\k\}_{k \ge 1}$ is a stationary $\alpha$-mixing sequence with $\E[Z^\k]=0$.  By Assumption \ref{ass:alpha},  $\alpha_{\{Z^\k\}_{k \ge 1}}(m)\le \alpha_{\{\mathbf X^{(k)}\}_{k \ge 1}}(m)=\O(m^{-5-\epsilon})$. 
Hence, for $\delta>1/2$, we have $\sum_{m =1}^{\infty} (\alpha_{\{Z^\k\}_{k \ge 1}}( m))^{\delta/(2+\delta)}< \infty$. 
By Assumption \ref{ass:DQES-alpha}, there exists some $\delta>1/2$  such that $\E[(X_i-t_i)_+^2+\delta]< \infty$ for all $i \in [n]$ and $\E[(S-s)_+^2+\delta]< \infty$. Hence , $\E[\vert Z^\k\vert ]< \infty$. Therefore, $\{Z^k\}_{k \ge 1}$
satisfies the condition of Lemma \ref{lem:CLT-alpha}. 
Hence, 
\begin{align*}
\sqrt{N}(\wDQES(N) -\DQES(\bX))\lawto \mathrm{N}(0, \sigma_\ES^2),
\end{align*}
where 
\begin{align*}
\sigma_\ES^2&=\frac{1}{c^2}\var(\mathbf A^{\top} \mathbf Y^{(1)})+2\sum_{k=2}^\infty \cov(\mathbf A^{\top} \mathbf Y^{(1)}, \mathbf A^{\top} \mathbf Y^{(k)})\\
&=\frac{1}{c^2}\mathbf A^{\top}\left(\var( \mathbf Y^{(1)})+2\sum_{k=2}^\infty \cov( \mathbf Y^{(1)}, \mathbf Y^{(k)})\right)\mathbf A,
\end{align*}
which completes the proof.

\end{proof}

\subsection{Proof in Section \ref{sec:6}}

\begin{proof}[Proof of Theorem \ref{prop:5}]
\begin{itemize}
    \item[(i)] 
Note that  
\begin{equation*}\sqrt{N}\left(\widehat{\mathrm{DR}^{\VaR}}(N)-\mathrm{DR}^{\VaR}(\mathbf{X})\right)=\sqrt{N}\left(\frac{({\wG^N})^{-1}(1-\alpha)}{\sum_{i=1}^{n}(\wF^N_i)^{-1}(1-\alpha)}-\frac{G^{-1}(1-\alpha)}{\sum_{i=1}^{n}F_i^{-1}(1-\alpha)}\right).
    \end{equation*}
    By Lemma \ref{lem:Bahadur}, we have   $$ (\wF_i^N)^{-1}(1-\alpha)-F_i^{-1}(1 -\alpha) =\frac{(1-\alpha)-{\wF_i^N}(F_i^{-1}(1-\alpha))}{f_i(F_i^{-1}(1-\alpha))} +\smallO_\p(1/\sqrt{N}),$$    and  $$ (\wG^N)^{-1}(1-\alpha)-G^{-1}(1-\alpha) =\frac{(1-\alpha)-{\wG^N}(G^{-1}(1-\alpha))}{g(G^{-1}(1-\alpha))} +\smallO_\p(1/\sqrt{N}).$$

 By  Delta method, we have 
 $$ \sqrt{N}\left(\widehat{\mathrm{DR}^{\VaR_{\alpha}}}(N)-\mathrm{DR}^{\VaR_{\alpha}}(\mathbf{X})\right)=\textbf{R}^\top_{\VaR} \textbf{L}_{\VaR}^N+\smallO_\p(1),$$
    where  
    $$\mathbf{R}_{\VaR}=\left(\frac{s}{f_1(t_1)t_{n+1}^2}, \dots, \frac{s}{f_n(t_n) t_{n+1}^2}, \frac{-1}{g(s)t_{n+1}}\right),$$ and $$\textbf{L}_{\VaR}^N=\sqrt{N}\left((1-\alpha)-\wF^N_1(t_1),\dots,(1-\alpha)-\wF^N_n(t_n),(1-\alpha)- {\wG^N}(s)\right).$$

Along the similar lines as in Theorem \ref{thm:2}, for $k \in \N$,  let  $\mathbf Z^\k=(Z^\k_1, \dots, Z^\k_{n+1})$ where $Z^\k_i=F_i(t_i)-\mathbb{I}_{\{X^\k_i\leq t_i\}}$ for $i\in [n]$ and $Z^\k_{n+1}=G(s)-\mathbb{I}_{\{\sum_{i=1}^n X^\k_i\leq s\}}$.  
We have  $$ \sqrt{N}\overline{\mathbf Z}^N:=\sqrt{N}\frac{\sum_{k=1}^N \mathbf Z^\k}{N}=\mathbf L_{\VaR}^N. $$
By multiple central limit theorem, we  obtain $ \overline{\mathbf Z}^N\lawto {\rm N}(0,\Sigma_{\VaR}), $
where $\Sigma_\VaR$ is the covariance matrix of $\mathbf Z^\k$. Therefore, $$\sqrt{N}\left(\widehat{\mathrm{DR}^{\VaR_{\alpha}}}(N)-\mathrm{DR}^{\VaR_{\alpha}}(\mathbf{X})\right)\lawto \mathrm{N}(0,\mathbf R_{\VaR}^\top\Sigma_{\VaR} \mathbf R_{\VaR}).$$
    \item[(ii)] Note that $$    \sqrt{N}\left(\widehat{\mathrm{DR}^{\ES}}(N)-\mathrm{DR}^{\ES}(\mathbf{X})\right)=\sqrt{N}\left(\frac{\frac{1}{\alpha}\int_{1-\alpha}^1 (\wG^N)^{-1}(p)\d p}{\frac{1}{\alpha} \sum_{i=1}^{n}\int_{1-\alpha}^1 (\wF_i^N)^{-1}(p)\d p}-\frac{\frac{1}{\alpha} \int_{1-\alpha}^1 G^{-1}(p)\d p}{\frac{1}{\alpha} \sum_{i=1}^{n}\int_{1-\alpha}^1 F_i^{-1}(p)\d p}\right).
   $$ 

   By Lemma \ref{lem:Bahadur}, and in conjunction with \eqref{eq:I_2_ES} and \eqref{eq:I_3_ES}, we obtain
\begin{align*}
\int_{1-\alpha}^1 \left((\wF^N_i)^{-1}(p)-F_i^{-1}(p)\right)\d p=\frac{1}{ N}\sum_{k=1}^N  (X_i^{(k)}-t_i)_+-\E[(X_i-t_i)_+]+\smallO_\p(1),
\end{align*}
and  
\begin{align*}
 \int_{1-\alpha}^{1} \left( (\wG^N)^{-1}(p)-G^{-1}(p)\right) \d p
&=\frac{1}{ N}\sum_{k=1}^N(S^{(k)}-s)_+-\E[(S-s)_+]+\smallO_\p(1).
\end{align*} 
Again, using Delta method, we have 
 $$ \sqrt{N}\left(\widehat{\mathrm{DR}^{\ES_{\alpha}}}(N)-\mathrm{DR}^{\ES_{\alpha}}(\mathbf{X})\right)=\textbf{R}^\top_{\ES} \textbf{L}_{\ES}^N+\smallO_\p(1),$$
    where   $$\mathbf{R}_{\ES}=\frac{1}{\alpha}\left(-\frac{\ES_\alpha(S)}{\left(\sum_{i=1}^n\ES_\alpha(X_i)\right)^2}, \dots, -\frac{\ES_\alpha(S)}{\left(\sum_{i=1}^n\ES_\alpha(X_i)\right)^2},  \frac{1}{\sum_{i=1}^n\ES_\alpha(X_i)}\right),$$ and 
 $$\begin{aligned}\mathbf L_{\ES}^N=\sqrt{N}&\left( \int_{1-\alpha}^1 \left((\wF^N_1)^{-1}(p)-F_1^{-1}(p)\right)\d p,\dots,\int_{1-\alpha}^1 \left((\wF^N_n)^{-1}(p)-F_n^{-1}(p)\right)\d p, \right.\\&\left.\int_{1-\alpha}^1 \left((\wG^N)^{-1}(p)-G^{-1}(p)\right)\d p\right).\end{aligned}$$
For $k \in \N$,  let  $\mathbf Z^\k=(Z^\k_1, \dots, Z^\k_{n+1})$ where $Z^\k_i= (X_i^{(k)}-t_i)_+-\E[(X_i-t_i)_+]$ for $i\in [n]$ and $Z^\k_{n+1}=(S^{(k)}-s)_+-\E[(S-s)_+]$.
We have  $$ \sqrt{N}\overline{\mathbf Z}^N:=\sqrt{N}\frac{\sum_{k=1}^N \mathbf Z^\k}{N}=\mathbf L_{\ES}^N. $$

By multiple central limit theorem, we  obtain $ \overline{\mathbf Z}^N\lawto {\rm N}(0,\Sigma_{\ES}), $
where $\Sigma_\ES$ is the covariance matrix of  $( (X_1-t_1)_+, \dots, (X_n-t_n)_+,(S-s)_+)$.  
Therefore, $$\sqrt{N}\left(\widehat{\mathrm{DR}^{\ES_{\alpha}}}(N)-\mathrm{DR}^{\ES_{\alpha}}(\mathbf{X})\right)\lawto \mathrm{N}(0,\mathbf R^\top_{\ES}\Sigma_{\ES} \mathbf R_{\ES}).$$\end{itemize}  \end{proof}

\end{document}